\newcommand\reallywidecheck[1]{%
\savestack{\tmpbox}{\stretchto{%
\scaleto{%
\scalerel*[\widthof{\ensuremath{#1}}]{\kern-.6pt\bigwedge\kern-.6pt}%
{\rule[-\textheight/2]{1ex}{\textheight}}
}{\textheight}%
}{0.5ex}}%
\stackon[1pt]{#1}{\scalebox{-1}{\tmpbox}}%
}
\DeclareRobustCommand\widecheck[1]{{\mathpalette\@widecheck{#1}}}
\def\@widecheck#1#2{%
\setbox\z@\hbox{\m@th$#1#2$}%
\setbox\tw@\hbox{\m@th$#1%
\widehat{%
\vrule\@width\z@\@height\ht\z@
\vrule\@height\z@\@width\wd\z@}$}%
\dp\tw@-\ht\z@
\@tempdima\ht\z@ \advance\@tempdima2\ht\tw@ \divide\@tempdima\thr@@
\setbox\tw@\hbox{%
\raise\@tempdima\hbox{\scalebox{1}[-1]{\lower\@tempdima\box
\tw@}}}%
{\ooalign{\box\tw@ \cr \box\z@}}}
\newtheoremstyle{mystyle}
  {\topsep}{\topsep}{}{}{\bfseries}{ }{\newline}
  {(\thmnumber{#2}) \thmname{#1}\quad\thmnote{(#3)}%
     \ifstrempty{#3}%
      {\addcontentsline{toc}{subsection}{#1~\thethm}}%
      {\addcontentsline{toc}{subsection}{#1~\thethm~(#3)}}%
  }
\newtheoremstyle{mystyle}
  {\topsep}{\topsep}{}{}{\bfseries}{ }{\newline}
  {\(\mathllap{\mbox{(\thmnumber{#2})\;}}\)\thmname{#1}\quad\thmnote{(#3)}%
     \ifstrempty{#3}%
      {\addcontentsline{toc}{subsection}{#1~\thethm}}%
      {\addcontentsline{toc}{subsection}{#1~\thethm~(#3)}}%
    }
\newcommand\V[1]{\begin{alltt}#1\end{alltt}}
\def\VV#1\par{\sloppy \begin{alltt}#1\end{alltt}}
\def\Vr#1\par{{\color{red}\begin{alltt}#1\end{alltt}}}
\def\Vg#1\par{{\color{OliveGreen}\begin{alltt}#1\end{alltt}}}
\def\Vb#1\par{{\color{blue}\begin{alltt}#1\end{alltt}}}
\newcommand{\fl}[1][3em]{\hspace{#1}&\hspace{-#1}} 
\DeclareDocumentCommand{\vv}{O{NOTE}O{black}}{{\setlength\emergencystretch{\hsize}\hbadness=10000\color{#2}\noindent\tt[#1]}}%
\newcommand{\vb}[1][NOTE]{\vv[#1][black!40!blue]}
\newcommand{\vbo}[1][NOTE]{\vb[#1]}
\newcommand{\hideopt}{\renewcommand{\vbo}[1][NOTE]{}}
\newcommand\listdefinitionname{List of Definitions}
\newcommand\listofdefinitions{%
\section*{\listdefinitionname}\@starttoc{def}}
\newcommand{\Ii}{\mathrm i} 
\newcommand{\Ee}{\mathrm e} 
\newcommand{\Alg}{\mathfrak A}
\newcommand{\term}[1]{{\it #1}\/}
\newcommand{\Reg}{\mathcal O}
\newcommand{\DoubleCone}{{\mathscr C}}
\newcommand{\PoincareGroup}{\mathcal P}
\newcommand{\LorentzGroup}{\mathcal L}
\newcommand{\RealNum}{\mathbb R}
\newcommand{\ComplexNum}{\mathbb C}
\newcommand{\NaturalNum}{\mathbb N}
\newcommand{\SchwartzSpace}{\mathscr S}
\newcommand{\DInt}[2][\,]{{\mathrm d}^{\hspace{-0.20ex}#1}\hspace{-0.25ex}#2}
\newcommand{\norm}[1]{\left \|#1 \right\|}
\newcommand{\normm}[1]{\|#1\|}
\newcommand{\abs}[1]{\left |#1 \right|}
\newcommand{\CStar}{C^*}
\newcommand{\HilbertSpace}{\mathscr H}
\newcommand{\BoundedOps}{\mathcal B}
\newcommand{\LSpace}{L}
\newcommand{\Id}{\mathds 1}
\newcommand{\Domain}[1]{{\mathrm D}_{#1}}
\newcommand{\FiniteEnergyStates}{\mathcal D}
\newcommand{\ContinuousFuncs}{\mathnormal C}
\newcommand{\IndicatorFunction}{\mathds 1}
\newcommand{\CharFct}{\IndicatorFunction}
\newcommand{\oldnote}[1]{} 
\newcommand{\uvec}[1]{\underline{#1}}
\renewcommand{\vec}[1]{{\bf #1}} 
\newcommand{\closure}[1]{\overline{#1}}
\DeclareMathOperator*{\support}{supp}
\newcommand\numberthis{\addtocounter{equation}{1}\tag{\theequation}}
\def\[{\begin{equation}}%
\def\]{\end{equation}}%
\renewenvironment{align*}{\begin{equation}\begin{aligned}}{\end{aligned}\end{equation}\ignorespacesafterend}%
\renewenvironment{leftbar}[1][\hsize]
{%
    \def\FrameCommand
    {%
        {\hspace{-6pt}\color{red}\vrule width 0.5pt}%
        \hspace{5.5pt}
    }%
    \MakeFramed{\hsize#1\advance\hsize-\width\FrameRestore}%
}
{\endMakeFramed}
\DeclareDocumentCommand{\beginleftbarcw}{ O{blue} O{0.5pt} O{\hsize} }
{%
    \def\FrameCommand
    {%
      {\hspace{-5pt}\hspace{-#2}\color{#1}\vrule width #2}%
        \hspace{5pt}
    }%
    \MakeFramed{\hsize#3\advance\hsize-\width\FrameRestore}%
}
\let\endleftbarcw\endMakeFramed
\newcommand{\standardThms}{\newcounter{thm}

\newtheorem{Prop}[thm]{Proposition}

\newtheorem{Lem}[thm]{Lemma}
\newtheorem{Cor}[thm]{Corollary}
\newtheorem*{Cor*}{Corollary}
\newtheorem{Thm}[thm]{Theorem}
\newtheorem*{Thm*}{Theorem}

\newtheorem{Def}[thm]{Definition}

\theoremstyle{remark}
\newtheorem{Rem}[thm]{Remark}

\newcommand{\solution}{\proof[Solution.]}
}
\let\obib\@bibitem
\let\obibl\@lbibitem
\let\@bibitem\obib
\let\@lbibitem\obibl
\def\WO{\mathbf W}
\DeclareMathOperator{\Span}{span}
\newcommand{\FockSpace}{\mathscr F}
\newcommand{\SymmetricGroup}{\mathfrak S}
\def\Pp{{\bf P}}
\def\Ww{\mathcal W} 
\def\WO{\mathbb W} 
\def\WwR{{\Ww_{\mathrm R}}} 
\def\WwL{{\Ww_{\mathrm L}}} 
\def\Wwr{\WwR} 
\def\T{\tau}
\def\Pp{\boldsymbol P}
\def\s{s}
\newcommand{\HCol}[1]{{#1}}
\newcommand{\Aa}{\HCol A}
\newcommand{\Bb}{\HCol B}
\newcommand{\BB}{\HCol{\mathcal B}}
\def\ifhide{\iftrue}
\def\itref#1{(\ref{#1})}
\def\FockSpace{\Gamma}
\def\VelocitySupport{{\mathcal V}}
\def\FS{\FockSpace}
\def\VS{\VelocitySupport}
\def\C{{\mathrm c}}
\def\fin{{\mathrm f}}
\def\ini{{\mathrm i}}
\def\ifintro{\iffalse}
\def\ifintro{\iftrue} 
\def\BB{B}
\newcommand{\mail}[1]{\href{mailto:#1}{#1}}
\newcommand\hidden[1]{}
\title{
  Asymptotic Completeness
  in a Class of Massive Wedge-Local Quantum Field Theories
  in any Dimension
  }
  \author{Maximilian Duell$^1$, Wojciech Dybalski$^2$}
 \date{$^1$Mathematisches Institut\\Ludwig-Maximilians-Universität
    München\\
    Theresienstr.\ 39,
    80333 München\\
    {\small E-mail: \mail{duell@math.lmu.de}}\\[0.5em]
    $^2$Faculty of Mathematics and Computer Science\\ 
  Adam Mickiewicz University in Pozna\'n\\ 
  ul.\ Uniwersytetu Pozna\'nskiego 4, 61--614 Pozna\'n \\ {\small E-mail: \mail{wojciech.dybalski@amu.edu.pl}}}
\def\ifextapp{\iftrue}
\def\iftodo{\iffalse}
\def\iftodo{\iftrue}
\newcommand{\AlgR}{\Alg^{0r}}
\newcommand{\sPone}{d}
\def\FEnV{\FiniteEnergyStates}
\def\Cinfty{{\mathcal C}^\infty}
\addspace\texttt{\mkbibbrackets{\thefield{eprintclass}}}}}}
\addspace\texttt{\mkbibbrackets{\thefield{eprintclass}}}}}}
\begin{document}

\maketitle
\begin{abstract}
  A recently developed \(n\)-particle scattering theory for
  wedge-local quantum field theories  is applied
  to a class of models described and constructed by Grosse, Lechner, Buchholz, and Summers.
  In the BLS-deformation setting we establish explicit expressions for \(n\)-particle wave
  operators and the \(S\)-matrix of ordered asymptotic states, and we show
  that ordered asymptotic completeness is stable under the general
  BLS-deformation construction.  In particular, the (ordered) Grosse-Lechner
  \(S\)-matrices are non-trivial also beyond two-particle scattering
  and factorize into 2-particle scattering processes, which
  is an unusual feature in space-time dimension \(d > 1+1\).
  Most notably, the Grosse-Lechner models provide the first examples
  of relativistic (wedge-local) QFT in space-time dimension~\(d > 1+1\) which are
  interacting and asymptotically complete.
\end{abstract}
\tableofcontents

\section{Introduction}
{
\setlength{\parskip}{0.5em}
Asymptotic completeness (AC) of a quantum field theory  (QFT) is the requirement that incoming and outgoing
multi-particle scattering states~\(\Psi^\pm\) should be dense in the Hilbert space
\(\HilbertSpace\) of a given QFT model.
This conceptual question was already raised early on during the development of
the quantum theory of fields \cite{Gre61,Ru62}. But  even after many decades of
research, mathematical results on AC of interacting QFT
models have remained rather scarce (e.g.\ \cite{CD82,DG99,IM06,Le06}) and
there are various obstacles of physical and technical nature (cf.\ \cite{DG13}).
In fact, the first complete proof of asymptotic completeness of an interacting
QFT has been established rather recently: It was obtained by Lechner \cite{Le06t,Le06}
for certain two-dimensional models with factorizing \(S\)-matrix, such as the Sinh-Gordon model.

In the present work we provide, to the best of our knowledge, the first full proof of
AC in a class of interacting wedge-local QFT models on Minkowski space-times of
dimension \(d > 1+1\). The models in question, found by Grosse, Lechner,
Buchholz and Summers in \cite{GL07,BS08, BLS10}, may not be local, but  have the
weaker property of wedge locality.\footnote{Asymptotically complete fully
non-local models have been constructed, e.g.\ \cite{BW84}.} The constructive
procedure developed in \cite{BLS10}, which we will call a BLS-deformation,
preserves the general structure of a relativistic wedge-local theory and in
addition introduces interaction. For example, the BLS-deformation of a free
field  theory, which we will call a GL-model, has a non-trivial two-particle
scattering matrix \cite{GL07,BS08}.  However, collision processes involving \(n
\geq 3\)   particles were not investigated in these works, since $n$-particle
scattering theory was not available in the wedge-local setting back then. Such a
scattering theory  has been developed meanwhile in \cite{MD17a}, so that the
question of asymptotic completeness can now be posed and positively answered for
these models.  More than that, by adapting and generalizing methods from
\cite{BLS10,DT10} we will establish explicit expressions for the general effect
of BLS-deformations on asymptotic states and scattering data. In this manner we
also prove stability of asymptotic completeness under BLS-deformations for
general wedge-local models with massive particles.

To construct \(n\)-particle scattering states according to \cite{MD17a},
we define Haag-Ruelle operators~\(\Bb_{k\T}(f_k)\), \(\T \in \RealNum\), \(1 \leq k \leq n\),
which create the desired one-particle states from the vacuum~\(\Omega\in
\HilbertSpace\).  Then  outgoing
and incoming velocity-ordered scattering states are
\begin{align*}
  \Psi^\pm := \lim_{\T\to\pm\infty} \Bb_{1\T}(f_1) \ldots \Bb_{n\T}(f_n) \Omega.
  \label{eq:introscatt-zero} \numberthis
\end{align*}
The momentum-space configuration of such scattering states is specified
by a family of regular Klein-Gordon solutions~\(f_k\).
For {\em wedge-ordered velocity configurations}\/ (as defined below in
\eqref{eq:precursor-ordering}), the wedge-local
Haag-Ruelle theorem establishes convergence in (\ref{eq:introscatt-zero}) and
Fock structure of the resulting asymptotic states~\cite{MD17a}.
This result holds in general wedge-local theories with massive
particle spectrum, given that all~\(\Bb_{k\T}(f_k)\) are constructed from a common
localization wedge~\(\Ww\).
For the concrete case of the standard wedge~\(\Ww
  = \WwR = \{ (t, \vec x) \in \RealNum^d : \abs{t}<x^1\}\), the velocity ordering
  of outgoing states (\(\T \to + \infty\)) reads
\[
  \VS_{f_n} \prec_{\WwR}
  \VS_{f_{n-1}} \prec_{\WwR}  \ldots \prec_{\WwR }\VS_{f_{1}}. \label{eq:precursor-ordering}\numberthis
\]
Here \(\VS_{f_k} := \{ (1,\vec v_m(\vec p)) \in \RealNum^d : \vec p \in \support \tilde f_k \}\)
denote the supports of the Klein-Gordon wave packets
\[f_k(t, \vec x) := \int \frac{\DInt[d-1] k}{(2\pi)^{d-1}}
  \; \Ee^{-\Ii \omega_m(\vec k) t + \Ii \vec k \cdot \vec x}
  \tilde f_k(\vec k) , \quad \omega_m(\vec k) := \sqrt{\vec k^2+m^2},
  \numberthis
\]
with respect to velocity~\( \vec v_m(\vec p) := \vec p/\omega_m(\vec p) \).  Further, the
{\em precursor}\/ ordering relation in \eqref{eq:precursor-ordering} is defined by \(\VS_{f_k} \prec_{\WwR} \VS_{f_{k-1}} :\Longleftrightarrow
 \VS_{f_{k-1}} - \VS_{f_k}\subset \WwR \) and corresponds to the geometrical requirement that all
relative velocities \(\vec v_{k-1} - \vec v_k \in \VS_{f_{k-1}} - \VS_{f_k}\) yield a
positive directed separation of wave packets~\(f_k\) for \(t \to +\infty\) with respect to the
space-like opening direction of~\(\WwR\), i.e.\ \(
(\vec v_{k-1} - \vec v_{k})^1 > 0\) for all \(2 \leq k \leq n\).
In the present work we prove, in particular, for GL-models that the
velocity-ordered scattering states constructed according to
\eqref{eq:introscatt-zero} and \eqref{eq:precursor-ordering} for each wedge~\(\Ww\) span dense sets~\(\HilbertSpace^{\pm}_\Ww\) in the full Hilbert space
\(\HilbertSpace\) of the interacting model.
In this case, we will say
that a wQFT model has the property of {\em (ordered) asymptotic completeness}.

We now briefly explain the basic ideas of the present analysis of deformed
scattering data and asymptotic completeness in non-technical terms.
In the deformation construction from \cite{BS08,BLS10} the new deformed model is generated by
observables constructed as formal spectral integrals
\[
  A_Q := \int \DInt E_{(H, \Pp)} (p) \; \alpha_{Qp}(A) =  \int  \alpha_{Qp} (A) \;
  \DInt E_{(H, \Pp)}(p)=:{_Q{\!}A} .
  \numberthis
  \label{eq:warpintro}
\]
In the terminology of \cite{BS08,BLS10}, the (smooth) wedge-local operators $A$
of the initial model are {\em `warped'}
with respect to the spectral measure \(\DInt E =\DInt E_{(H, \Pp)}\) of the
energy-momentum operators \(P= (H, \Pp)\). Presently, \(Q \in \RealNum^{d^2}\)
denotes a fixed parameter matrix satisfying certain geometrical
properties~\cite{GL07}. Later it will be chosen as a mapping depending on the localization wedge of the
operator \(A\), as prescribed by Grosse and Lechner, to yield
wedge-locality and Poincaré covariance (if applicable) of the deformed model.

Proceeding now on the formal level of \eqref{eq:warpintro}, let us apply
the wedge-local \(n\)-particle scattering theory from
\cite{MD17a} and compare the scattering states
\begin{align*}
  \Psi^+_0 &:= \lim_{\T\to\infty} \Bb_{1\T}(f_1) \ldots \Bb_{n\T}(f_n) \Omega,
  \label{eq:introscatt}
  \numberthis
  \\
  \Psi^+_Q &:= \lim_{\T\to\infty} \Bb_{1Q\T}(f_1) \ldots \Bb_{nQ\T}(f_n) \Omega,
  \label{eq:introwarpedscatt}
  \numberthis
\end{align*}
where \(\Bb_{kQ\T}(f_k)\), \(1\leq k \leq n\), obtained from \eqref{eq:warpintro} are a
corresponding family of creation operators associated to the deformed model.
Inserting the definition of the warped convolutions appearing in
\eqref{eq:introwarpedscatt} we see directly that
\begin{align*}
  \Psi^+_Q &:= \lim_{\T\to\pm\infty}
  \int \DInt E(p_1) \alpha_{Qp_1}(\Bb_{1\T}(f_1))
  \ldots  \int \DInt E(p_n) \alpha_{Qp_n}( \Bb_{n\T}(f_n)) \Omega.
  \label{eq:psiqdef}
  \numberthis
\end{align*}
On a heuristic level we can thus write \(\Psi^+_Q\)
as a superposition of scattering states constructed in the undeformed model
\[
  \Psi^+_{0; p_1, \ldots, p_n} :=
  \lim_{\T \to \infty} \alpha_{Qp_1}(\Bb_{1\T}(f_1))
  \alpha_{Qp_2}(\Bb_{2\T}(f_2)) \ldots  \alpha_{Qp_n}( \Bb_{n\T}(f_n)) \Omega
  \label{eq:psi0trans}
  \numberthis
\]
which can be related to scattering states of the usual form
\eqref{eq:introscatt}, with operators modified by space-time translations
depending on  \( p_1, \ldots, p_n \in \RealNum^d\). As the action of space-time translations
on the one-particle states can be made explicit, namely \(\alpha_y(\Bb_\T(f)) \Omega =
\Bb_\T(f^y) \Omega\), where \(f^y(x) = f(x-y)\), we may expect on
heuristic grounds that, also for a general
deformed wedge-local model, the corresponding subspaces of scattering states of
the deformed and undeformed model will coincide, that is,
\[\HilbertSpace^+_{Q, \prec_\Ww} = \HilbertSpace^+_{0, \prec_\Ww}
  \subset \HilbertSpace.
  \label{eq:comp}
  \numberthis
\]
Here \(\HilbertSpace\) denotes the full Hilbert space of the interacting model,
and similarly \(\HilbertSpace^-_{Q, \prec_\Ww} = \HilbertSpace^-_{0, \prec_\Ww} \).

This shows that asymptotic completeness of ordered scattering states is stable under
BLS-deformations. Asymptotic completeness of the Grosse-Lechner models follows
from AC of the free massive scalar field theory with respect to ordered
scattering states. Indeed, using that for local QFTs the space of wedge-ordered
scattering states~\(\HilbertSpace^+_{0, \prec_\Ww}\) coincides with the full space of scattering
states~\( \HilbertSpace^+_{0} \),
and exploiting AC of the free field, \(\HilbertSpace^+_{0}
= \HilbertSpace_0\), we obtain for the GL-model
\[
 \HilbertSpace^\pm_{Q, \prec_\Ww} =
 \HilbertSpace^\pm_{0, \prec_\Ww} = \HilbertSpace.
 \label{eq:acord} \numberthis
\]
Hence these wedge-local QFT models are asymptotically complete.

It is the main contribution of the present paper to make the above heuristic
arguments leading from \eqref{eq:psiqdef} to \eqref{eq:comp} mathematically
rigorous. Our analysis is based on oscillatory integral techniques from \cite{BLS10} and the
recent wedge-local \(n\)-particle scattering theory~\cite{MD17a}.
We also profit from the previous analysis of a scattering in
BLS-deformed massless two-dimensional models by Dybalski and Tanimoto
\cite{DT10}.
This case is simplified by the absence of dispersion, which reduces the problem
to two-body scattering of left- and right-movers. Hence wedge-local scattering theory
with \(n\!\leq\!2\) particles suffices. This has in fact been developed and studied much
earlier~\cite{BBS01,Le03,GL07,BS08} via direct application of standard
Haag-Ruelle methods. In particular, the swapping strategy introduced in \cite{MD17a} to
establish scattering theory for arbitrary particle numbers is not needed for
this specific class of \(1\!+\!1\)-dimensional models.

To conclude this introduction we should remark that physical intuition familiar
from local QFT makes \eqref{eq:acord} very plausible on one hand.
On the other, such intuition may fail in the much larger class of general wedge-local models.
Interesting examples can already be found among recently constructed
free product models \cite[Sec.\ 5]{LTU17}.  There it was shown that ordered
two-particle scattering states are not sufficient for two-particle asymptotic
completeness. In contrast to a previous work on clustering properties
\cite{Sol14}, where scattering theory is also mentioned briefly,  our results do
not assume locality of the underlying wQFT model and in this regard the results
of the present work are much more widely applicable within the general wedge-local
framework.

\paragraph*{Acknowledgments.}

The authors gratefully acknowledge funding by the Deutsche Forschungsgemeinschaft within grants DY107/2-1, DY107/2-2. W.D. was also partially supported by the NCN grant ‘Sonata Bis’ 2019/34/E/ST1/00053.

\section{Preliminaries on Wedge-local QFT (wQFT) and Scattering Theory}

\subsection{Operator-algebraic Framework for Wedge-local QFT}
\label{sec:prelwqft}
We will work in an operator-algebraic setting of wedge-local
quantum field theory on Minkowski space-time~\(\RealNum^{d}\), and our results
are valid in arbitrary spatial dimension~\(s:= d-1\).
The family of wedge regions is defined as the
orbit~\(\PoincareGroup \Wwr:=\{\lambda \Wwr = \Lambda \Wwr + x, \lambda  =
(x,\Lambda) \in \PoincareGroup \} \) of the conventional \term{Rindler
wedge}~\(\Wwr := \{ (t, \vec x) \in
\RealNum^{d}: \abs{t} < x^1 \} \) (also, \term{standard wedge} or \term{right wedge})  under the action of the Poincaré
  group~\(\PoincareGroup = \RealNum^d \rtimes \LorentzGroup \).

A wedge-local quantum field theory model is specified by mathematical objects~\((\Alg, \alpha, \HilbertSpace,
\Omega)\), where \(\HilbertSpace\) is the Hilbert space of pure states
containing the \term{vacuum} as a distinguished unit vector \(\Omega \in
\HilbertSpace\).
The wedge-localization of observables  is described by a
family of von Neumann algebras~\(\Alg(\Ww) \subset
\BoundedOps(\HilbertSpace)\) associated to wedge
regions~\(\Ww\).
Poincaré symmetry acts on the wedge-local algebras~\(\Alg(\Ww)\) by a given group of
isomorphisms \(\alpha_\lambda\) and we denote by
\(\lambda = (x,\Lambda) \in
  \PoincareGroup_+^\uparrow = \RealNum^d \rtimes \LorentzGroup_+^\uparrow\)
  the elements of the proper orthochronous Poincaré group.
  In this paper we are working mostly with space-time
  translations of some given operator \(A\in \Alg(\Ww)\) by \(x \in \RealNum^d\), also denoted by
  \(\alpha_x(A)\).

Guided by physical intuition one asks that these objects satisfy wedge-local
variants of the Haag-Kastler postulates, which are concerned with the algebraic
and representation-theoretic properties of \(\Alg\).
Firstly, for any choice of wedge regions~\(\Ww, \Ww_1, \Ww_2 \) one has
\newcommand{\lt}[1]{\text{\bf #1}}%
{
\allowdisplaybreaks
\begin{align*}
  \lt{Isotony} \quad
    & \Alg(\Ww_1) \subset \Alg(\Ww_2) \text{ for } \Ww_1 \subset \Ww_2,
  \tag{HK1} \label{eq:HK1}\\
  \lt{Locality} \quad
  & \Alg(\Ww_1) \subset \Alg(\Ww_2)'
\text{ for } \Ww_1 \subset \Ww_2',
\tag{HK2} \label{eq:HK2}\\
\lt{Wedge-Duality}  \quad
  & \Alg(\Ww') = \Alg(\Ww)',
\tag{HK2$^\sharp$} \label{eq:HK2s}\\
  \lt{Translation-Covariance} \quad
  & \alpha_x(\Alg(\Ww)) = \Alg(\Ww +x), \quad x \in \RealNum^{d},\tag{HK3} \label{eq:HK3}\\
  \lt{Poincaré-Covariance} \quad
  & \alpha_\lambda(\Alg(\Ww)) = \Alg(\lambda\Ww), \quad \lambda
  \in \PoincareGroup_+^\uparrow. \tag{HK3$^\sharp$} \label{eq:HK3s}
\end{align*}}%
Here the Minkowski causal complement  \(\Ww' = (\Lambda \WwR + x)' = - \Lambda \WwR + x\) of
\(\Ww\) is also a wedge region and \(\Alg(\Ww)'\) denotes the commutant of
\(\Alg(\Ww)\) relative to \(\BoundedOps(\HilbertSpace)\).

On the representation-theoretic side it is further assumed that translations are
unitarily implemented on the vacuum Hilbert space~\(\HilbertSpace\) by a
strongly continuous \(\s\!+\!1\)-parameter group, \(\alpha_x(A) = U(x) A
  U(x)^*\). The representing unitaries are generated
  by the \term{energy-momentum operators}
  via~\(U(x) = U(t, \vec x) = \Ee^{\Ii t H - \Ii \vec x \cdot \Pp}\),
  whose joint spectral resolution in terms of projection-operator-valued measures
  will be denoted \( \Delta \longmapsto E_{(H,\Pp)}(\Delta)\) and abbreviated
  as \(E(\Delta)\) for Borel sets \(\Delta \subset \RealNum^{d}\).
Focusing   on the analysis of scattering, we impose the following standard assumptions concerned with the vacuum
representation and its one-particle spectrum,
\begin{align*}
  \lt{Uniqueness of \(\boldsymbol \Omega\)}
      \quad  & E(\{0\}) \HilbertSpace = \ComplexNum \Omega,
        \tag{HK4} \label{eq:HK4}\\
    \lt{Cyclicity of \(\boldsymbol \Omega\)}
      \quad & \closure{\Alg(\Ww)\Omega} = \HilbertSpace,
        \tag{HK5} \label{eq:HK5} \\
        \lt{Spectral Condition}
      \quad & \support E \subset \bar V^+,
        \tag{HK6} \label{eq:HK6}\\
    \lt{Haag-Ruelle Mass Gap Condition}
      \quad & H_m \subset \support E \subset \{0\} \cup H_m \cup \bar H_M,
        \tag{HK6$^\sharp$} \label{eq:HK6s}
\end{align*}
for some \(M > m > 0\), where
\(\bar V^+ := \{ (\omega, \vec p) : |\vec p| \leq \omega \}\)   denotes the positive
energy cone,
\(H_m := \{ (\omega_m(\vec p) , \vec p): \vec p
\in \RealNum^\s \}\),
\( \omega_m(\vec p):= \sqrt{\vec p^2 + m^2}\), is the (positive) hyperboloid of mass~\(m >
0\) and \(\bar H_M := \{ (\omega, \vec p): \vec p
\in \RealNum^\s, \omega \geq \omega_M(\vec p) \}\) denotes the convex hull of
\(H_M\).
\begin{Def}
  \label{def:wqft}
  A wedge-local quantum field theory is a tuple \((\Alg, \alpha, \HilbertSpace,
  \Omega)\) as above, which satisfies the basic assumptions
  \eqref{eq:HK1}--\eqref{eq:HK6}.
\end{Def}

This summarizes the axiomatic operator-algebraic formalism for wedge-local QFTs
providing the basis of our present investigations. Our choice of framework serves the purpose
of accommodating scattering-theoretic reasoning and capturing the requirements of the
wedge-local Haag-Ruelle theory~\cite{MD17a}. In the
literature also the closely related concept of a {\em causal Borchers triple}~\(({\mathcal
R}, U, \Omega) \), corresponding to \({\mathcal R} := \Alg(\WwR)\) in the
present framework, is often studied. Such a
framework is equivalent to the above setting including Poincaré
covariance~\eqref{eq:HK3s}.  For further historical background and 
other aspects of wedge-local Quantum Field Theory, we refer to
\cite{BS08,Le15,BLS10,MD17a}.

Regarding the scattering-theoretic analysis,
Poincaré covariance~\eqref{eq:HK3s} is not essential.
The important properties for us are the existence of isolated mass shells \eqref{eq:HK6s} and the
well-established wedge duality condition~\eqref{eq:HK2s},
which strengthen \eqref{eq:HK6} and \eqref{eq:HK2}, respectively.
As such, these will be standing assumptions of this work.
Whereas
\eqref{eq:HK6} merely demands positivity of the Hamiltonian in any Lorentz
frame, the Haag-Ruelle mass gap condition \eqref{eq:HK6s} physically amounts
to non-triviality of the one-particle subspace~\(\HilbertSpace_1 :=
E(H_m)\HilbertSpace\). The upper mass gap~\(M > m\) has the same
technical purpose as in traditional Haag-Ruelle scattering theory: it enables the
efficient separation of one-particle states from the
remaining energy-momentum spectrum.

  \subsection{Warped Convolutions and the Grosse-Lechner model}

Our scattering-theoretic analysis is concerned with a general class of massive
wedge-local QFT models, which are constructed via a certain deformation method
introduced by Buchholz, Lechner, and Summers \cite{BLS10}. The Grosse-Lechner
models were introduced and studied from a non-commutative geometry perspective
in \cite{GL07}. Here we adopt the operator-algebraic approach from \cite{BS08,BLS10}
and study, in particular, the GL-models as constructed by applying
BLS-deformations to the free scalar field.

 The starting point of the BLS-construction is a general wedge-local model \((\Alg^0, \alpha, \HilbertSpace, \Omega)\).
 With \(\kappa\geq 0\) and, in dimension \(d=3+1\) additionally \(\eta \in
 \RealNum\), as parameters of the deformation, a
 family \(\Ww \longmapsto Q_\Ww\) of {\em warping matrices}\/ is now defined
 according to \cite{BS08,GL07} by either
\[
 Q_{\WwR} : = \begin{pmatrix}
   0 & \kappa & 0 &  0 \\
   \kappa & 0 & 0 &  0 \\
   0 & 0 & 0 &  \eta \\
   0 & 0 & - \eta  &  0 \\
 \end{pmatrix}, \; (s = 3), \qquad
  Q_{\WwR} := \begin{pmatrix}
   0 & \kappa & 0 &\cdots &  0 \\
   \kappa & 0 & 0 &\cdots &  0 \\
   0 & 0 & 0 &\cdots &  0 \\
   \vdots & \vdots & \vdots & \ddots & \vdots \\
   0 & 0 & 0 & \cdots & 0
\end{pmatrix}, \; (\text{for general } s \geq 1),
 \label{eq:warpedq}\numberthis
\]
for the warping matrix of the reference wedge \(\WwR = \{ x \in \RealNum^{\sPone}:
  \abs{x^0} < x^1\}\) and its translates.  For general wedges given by \(\Ww = \Lambda
\WwR + x\) the warping matrix is defined by Poincaré-covariance\footnote{We
write \(Q\) as linear map, or \((1,1)\)-tensor~\(Q = (Q^\mu\,_\nu)_{\mu\nu}\), as
in \cite{BS08}. In \cite{GL07} and other works, especially in contexts of non-commutative
spacetime, writing \(Q_\WwR\) in \((2,0)\) or \((0,2)\)-tensor notation seems more common.
} as \(Q_{\Ww} := \Lambda
Q_{\WwR} \Lambda^{-1}\). These warping matrices are antisymmetric with respect
to the scalar product defined by the Lorentzian metric~\(g\)
with signature \((+,-,\ldots,-)\).

Let now \(Q \in \RealNum^{d\times d}\).
The {\em warped convolution} of an operator \(A\) is denoted by
\[
  A_Q := \int \DInt E_{(H, \Pp)} (p) \; \alpha_{Qp}(A),
  \numberthis
  \label{eq:warpright}
\]
and usually intuitively but formally written as a spectral integral with
operator-valued integrand.  Buchholz, Lechner and Summers \cite{BLS10} have
given a rigorous definition by interpreting \eqref{eq:warpright} as an
integral of oscillatory-type.
To this end, one starts with operators~\(A\) on which space-time translations act
smoothly and defines \(A_Q\) in terms of its action on vectors from the dense domain
\[
  \FiniteEnergyStates 
  :=\bigcup_{\substack{\Delta \subset \RealNum^d\\\text{compact}}}
  \!\!
  E_{(H, \Pp)}(\Delta) \HilbertSpace
  \numberthis
\]
of finite energy states.  We denote by \(\Cinfty\) the algebra of all {\em
  regular}\/ operators \(A \in \BoundedOps(\HilbertSpace)\), defined by the
  requirement that 
  \(\RealNum^{d} \ni x\longmapsto \alpha_x(A) = U(x) A U(x)^*\)
is arbitrarily often differentiable with respect to the operator norm topology.
By standard mollification arguments, the regular
subalgebra \(\Alg^{0r}(\Ww) := \Cinfty \cap \Alg^0(\Ww)\), for any wedge
\(\Ww\),
is weakly dense in the full wedge algebra~\(\Alg^0(\Ww)\) of the initial model.

\begin{Def}[warped convolution of a regular operator \cite{BLS10}] 
  \label{def:wc}
  The warped convolution~\(A_Q\) of a bounded regular operator~\(A \in
  \Cinfty\) with
  respect to a warping matrix \(Q\) is defined by its action on finite-energy
  vectors \(\Psi \in \FEnV\) as limit of strong integrals of oscillatory type,
  \[
    A_Q \Psi
    := \lim_{\epsilon \to 0} \frac{1}{(2\pi)^{\sPone}}
    \int \DInt[\sPone]x \, \DInt[\sPone]y \, \eta(\epsilon x, \epsilon y)  \,
    \Ee^{- \Ii x \cdot y} U(x) \alpha_{Qy}(A) \Psi
    \label{eq:defwc}\numberthis
  \]
  with~\(\eta \in \SchwartzSpace(\RealNum^{\sPone} \times \RealNum^{\sPone})\),
  such that \(\eta(0,0) = 1\), serving as a regularizing function.
\end{Def}

Here the technical assumptions of regularity of \(A\) and the consideration of
only finite energy states are needed for the existence of the limit
\(\epsilon \to 0\).
In addition to existence of this limit, it is shown in  \cite{BLS10} that \(A_Q\)
extends to a bounded operator, and is independent of the choice of~\(f\).
Let us summarize some helpful properties of warped convolutions from the
literature. For standard results about Bochner integrals we refer to
\cite[Ch.~6 §31]{Zaa67}.

\begin{Lem}{\rm \cite{BLS10}}
  \label{thm:wc}
  The warped convolution \(A_Q\) of any regular operator \(A\in \Cinfty\), given
  by \eqref{eq:defwc}, is well-defined. In particular it does not depend on the specific choice of the
  regularizing function~\(f\). %
  The deformed operator extends to a bounded operator
  \(A_Q \in \BoundedOps(\HilbertSpace)\), which satisfies
\begin{enumerate}[(i)]
  \item \( A_Q \Omega = A\Omega \),
    \label{it:thmwcvac}
  \item
    \((A_Q)^* = (A^*)_Q  \),
  \item \(A_0 = A\), \label{it:trivdeform}
\item
\([A_Q, A'_{-Q} ] = 0\),
whenever \(Q V^+ \subset \WwR\), \((-Q) V^+ \subset \WwL\), \(A \in \AlgR(\WwR)\), and \(A' \in \AlgR(\WwL)\). 
\item \label{it:deformcov} Warping commutes with space-time translations,
  \[
    \alpha_{x}(A_Q) = (\alpha_x(A))_{ Q}.
  \]
\item If the underlying model satisfies Poincaré-covariance \eqref{eq:HK3s} one has further
  \[
    \alpha_{\lambda}(A_Q) = (\alpha_\lambda(A))_{\Lambda Q \Lambda^{-1}}, \;
    \text{for }\lambda=(\Lambda,x) \in \PoincareGroup_+^\uparrow.
  \]
\end{enumerate}
\end{Lem}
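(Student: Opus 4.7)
The strategy is to exploit the heuristic representation \(A_Q = \int \DInt E(p)\,\alpha_{Qp}(A)\) underlying Definition~\ref{def:wc} and reduce each assertion to a Fourier-analytic manipulation of the translation group \(U\) together with the algebraic structure of \(A\mapsto \alpha_y(A)\). The workhorse is the regularized identity
\[
\lim_{\epsilon\to 0}\frac{1}{(2\pi)^{d}}\int\DInt x\,\DInt y\,\eta(\epsilon x,\epsilon y)\,\Ee^{-\Ii x\cdot y}\,U(x)F(y)\Psi \;=\; \int \DInt E(p)\,F(-p)\Psi,
\]
valid on \(\Psi\in\FEnV\) for sufficiently regular operator-valued \(F\); this is obtained by inserting the spectral resolution of \(U(x)\) and justifying Fubini with the oscillatory-integral estimates of \cite{BLS10}.

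The direct items all reduce to pointwise evaluation inside this Fourier collapse. For~\ref{it:thmwcvac}, setting \(\Psi=\Omega\) and using \(U(x)\Omega=\Omega\) eliminates the \(U(x)\) factor, so the \(x\)-integration acts as an approximate identity concentrating at \(y=0\) and leaves \(\alpha_0(A)\Omega=A\Omega\). For~\ref{it:trivdeform}, \(Q=0\) makes \(\alpha_{Qy}(A)=A\) constant, and the Fourier collapse at \(x=y=0\) yields \(A_0=A\). For~\ref{it:deformcov}, conjugation of \eqref{eq:defwc} by \(U(x_0)\) fixes \(U(x)\) and sends \(\alpha_{Qy}(A)\) to \(\alpha_{Qy}(\alpha_{x_0}(A))\); since \(\alpha_{x_0}(A)\) inherits regularity, the right-hand side is the defining integral for \((\alpha_{x_0}(A))_Q\).

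The adjoint identity is verified on matrix elements \(\langle\Phi,A_Q\Psi\rangle\) by conjugating under the oscillatory integral, using \(U(x)^*=U(-x)\) and \((\alpha_{Qy}(A))^*=\alpha_{Qy}(A^*)\), and then substituting \((x,y)\mapsto(-x,-y)\); the phase \(\Ee^{-\Ii x\cdot y}\) and the regularizer (up to a permissible relabeling of \(\eta\)) are preserved, and the result is \((A^*)_Q\). For the Lorentz-covariant extension under \eqref{eq:HK3s}, conjugation by \(U(\Lambda)\) turns \(U(x)\) into \(U(\Lambda x)\) and \(\alpha_{Qy}(A)\) into \(\alpha_{\Lambda Qy}(\alpha_\Lambda(A))\); an affine change of variables \(x\mapsto \Lambda x\), \(y\mapsto \Lambda y\) restores the canonical oscillatory form with warping matrix \(\Lambda Q \Lambda^{-1}\).

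The main obstacle is the commutation claim. Heuristically,
\[
[A_Q,A'_{-Q}]=\int\!\!\int\DInt E(p)\,\DInt E(p')\,[\alpha_{Qp}(A),\alpha_{-Qp'}(A')],
\]
with the spectral condition~\eqref{eq:HK6} restricting both variables to \(\bar V^+\). The geometric hypotheses \(QV^+\subset\WwR\), \((-Q)V^+\subset\WwL\), combined with translation covariance~\eqref{eq:HK3} and isotony~\eqref{eq:HK1}, place \(\alpha_{Qp}(A)\in\Alg(\WwR+Qp)\subset\Alg(\WwR)\) and \(\alpha_{-Qp'}(A')\in\Alg(\WwL-Qp')\subset\Alg(\WwL)\) (each wedge absorbs its own closure-translations), so every pointwise integrand vanishes by wedge locality~\eqref{eq:HK2}. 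The real work lies in justifying the interchange of the two oscillatory \(\epsilon\)-limits with the product of spectral integrations: one must estimate matrix elements \(\langle\Phi,A_Q A'_{-Q}\Psi\rangle\) for \(\Phi,\Psi\in\FEnV\) uniformly in \(\epsilon\), use the regularity of \(A,A'\) and the finite-energy cut-offs on \(\Phi,\Psi\) to dominate the oscillatory kernels, and then invoke dominated convergence. Once the exchange is licit, pointwise vanishing propagates to give \([A_Q,A'_{-Q}]=0\) on a dense domain, hence everywhere by boundedness.
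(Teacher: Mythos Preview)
The paper does not give its own proof of this lemma: it is stated with a direct citation to \cite{BLS10} and used as a black box. So there is no paper-proof to compare against, only the original BLS argument.

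Your sketch tracks the BLS strategy well for the straightforward items. For \ref{it:thmwcvac}, \ref{it:trivdeform}, \ref{it:deformcov}, the adjoint identity, and the Lorentz covariance, your manipulations (vacuum invariance collapsing the \(x\)-integral, conjugation by \(U(x_0)\) or \(U(\Lambda)\), the substitution \((x,y)\mapsto(-x,-y)\)) are exactly the standard ones from \cite{BLS10}.

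For the commutation statement your geometric reasoning is correct, but the heuristic display
\[
[A_Q,A'_{-Q}]=\int\!\!\int\DInt E(p)\,\DInt E(p')\,[\alpha_{Qp}(A),\alpha_{-Qp'}(A')]
\]
is not a faithful summary of what must be done: the spectral projections \(\DInt E(p')\) do not commute with \(\alpha_{Qp}(A)\), so one cannot simply pull the commutator inside a double spectral integral. The BLS argument does not attempt this interchange. Instead it works directly with the regularized double oscillatory integral for \(\langle\Phi,A_Q A'_{-Q}\Psi\rangle\), commutes the intermediate translation \(U(x')\) past \(\alpha_{Qy}(A)\) (producing \(\alpha_{Qy+x'}(A)\)), and performs an affine change of integration variables so that the integrand contains the pointwise commutator \([\alpha_{Qv}(A),\alpha_{-Qv'}(A')]\) against an oscillatory kernel whose effective support in \((v,v')\) is controlled, via the spectrum condition on \(\Phi,\Psi\in\FEnV\), to lie in \(\bar V^+\times\bar V^+\). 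Your final paragraph gestures toward this, but frames it as ``interchange of \(\epsilon\)-limits with spectral integrations'', which slightly mislocates where the work is; the crux is the variable substitution that exposes the pointwise commutator inside a single well-defined oscillatory integral, not a Fubini-type swap of two independent spectral measures.
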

\noindent

It is important to emphasize that the definition \eqref{eq:defwc} of \(A_Q\) is
only mathematically rigorous when working with regular operators \(A \in \AlgR\).
To obtain the full warped wedge-local QFT model in the
sense of \Cref{sec:prelwqft}, we additionally pass to the weak closure
of the set of warped regular operators.

\begin{Def}[warped wedge-local model \cite{BS08,BLS10}]
  Let \(Q_\Ww\) be given by \eqref{eq:warpedq} for \(\Ww = \WwR\)  and let \(Q_\Ww :=
  \Lambda Q_{\WwR} \Lambda^{-1}\) for general wedges \(\Ww = \Lambda \WwR + a\).
  The warped wedge-local algebras are defined as
  \[
    \Alg^Q(\Ww)
    :=  \{ A_{Q_\Ww}: \; A \in \Alg^{0r}(\Ww)\}'' \subset
    \BoundedOps(\HilbertSpace),
    \numberthis \label{eq:defalg}
  \]
  where the bicommutant is taken with respect to \(\BoundedOps(\HilbertSpace)\).
\end{Def}

\begin{Thm}[\cite{BLS10} Thm.\ 4.2] \label{thm:wcth}
 \((\Alg^Q, \alpha, \Omega, \HilbertSpace)\) defines a wedge-local QFT model
 satisfying the wedge-local Haag-Kastler postulates
 \eqref{eq:HK1}--\eqref{eq:HK6} from the initial model \((\Alg, \alpha,
 \Omega, \HilbertSpace)\).  Further, the deformed model satisfies the Haag-Ruelle
 spectral condition \eqref{eq:HK6s}, 
 wedge-duality \eqref{eq:HK2s}, or
 Poincaré covariance \eqref{eq:HK3s},
 respectively, if and only if the initial model satisfies
 the respective conditions.
\end{Thm}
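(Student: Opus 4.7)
The strategy is to separate the Haag-Kastler postulates into three tiers according to what the deformation actually changes: the underlying representation $(\HilbertSpace,U,\Omega)$ is unaffected, so all axioms phrased purely in representation-theoretic terms transfer immediately; the covariance and cyclicity axioms reduce to the algebraic properties in \Cref{thm:wc}; only the causal structure of the deformed net (locality and its sharpened form, wedge-duality) requires a genuine input involving the geometry of the warping matrices $Q_\Ww$.

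First, I would handle the cheap axioms. Since $\HilbertSpace$, $\Omega$ and $U$ are unchanged, the spectral condition \eqref{eq:HK6}, vacuum uniqueness \eqref{eq:HK4} and the mass-gap refinement \eqref{eq:HK6s} are inherited verbatim. For translation covariance \eqref{eq:HK3}, \Cref{thm:wc}\itref{it:deformcov} gives $\alpha_x(A_{Q_\Ww}) = (\alpha_x(A))_{Q_\Ww}$, and combined with $\alpha_x(\Alg^0(\Ww))=\Alg^0(\Ww+x)$ and the equality $Q_{\Ww+x}=Q_\Ww$, this implies $\alpha_x(\Alg^Q(\Ww))=\Alg^Q(\Ww+x)$ after taking the bicommutant (which commutes with the unitary implementation of $\alpha_x$). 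Poincaré covariance \eqref{eq:HK3s} is the same argument using item~(vi) of \Cref{thm:wc} together with the equivariant definition $Q_{\Lambda\Ww}=\Lambda Q_\Ww \Lambda^{-1}$. Isotony \eqref{eq:HK1} follows because $\Ww_1\subset\Ww_2$ entails $\Alg^{0r}(\Ww_1)\subset \Alg^{0r}(\Ww_2)$ and the warping map $A\mapsto A_{Q_\Ww}$ depends only on the wedge through the fixed matrix $Q_{\Ww_2}$: isotony is preserved provided one is careful that warping in \eqref{eq:defalg} uses a single $Q_\Ww$ for each algebra, which is exactly how the definition is set up. Cyclicity \eqref{eq:HK5} is immediate from \Cref{thm:wc}\itref{it:thmwcvac}: $A_{Q_\Ww}\Omega = A\Omega$, so $\overline{\Alg^Q(\Ww)\Omega} \supset \overline{\Alg^{0r}(\Ww)\Omega} = \overline{\Alg^0(\Ww)\Omega} = \HilbertSpace$ using weak density of $\Alg^{0r}$ in $\Alg^0$.

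The genuinely non-trivial step is locality \eqref{eq:HK2}. Here \Cref{thm:wc}\itref{it:trivdeform} and its surrounding clause give, for $\Ww_1\subset\Ww_2'$, commutation $[A_{Q_{\Ww_1}}, B_{Q_{\Ww_2}}]=0$ only if one can identify $Q_{\Ww_2}=-Q_{\Ww_1}$ and verify $Q_{\Ww_1} V^+\subset \Ww_1$, $-Q_{\Ww_1} V^+\subset \Ww_1'$. The first task is a direct computation from the explicit matrix \eqref{eq:warpedq}, showing $Q_{\WwR}\bar V^+\subset \overline{\WwR}$, and then transporting this by the Poincaré-covariance rule $Q_\Ww =\Lambda Q_{\WwR}\Lambda^{-1}$ so that $Q_\Ww \bar V^+\subset\overline\Ww$ for every wedge. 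The second task is the observation that the construction assigns $Q_{\Ww'}=-Q_\Ww$, which again is a consequence of the explicit form of $Q_{\WwR}$ (antisymmetric in the $0$-$1$ plane) and the formula $\Ww' = -\Lambda\WwR + x$. Once the geometric compatibility is in place, item~(iv) of \Cref{thm:wc} yields commutation on the dense $\Alg^{0r}$-level, and the extension to $\Alg^Q(\Ww)=\Alg^{0r}(\Ww)_{Q_\Ww}''$ follows by the standard fact that commutants are closed under double commutants and that $B\in \Alg^Q(\Ww_1)'$ is a weakly closed property.

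The final and hardest ingredient is the equivalence of wedge-duality \eqref{eq:HK2s} between initial and deformed models. The inclusion $\Alg^Q(\Ww')\subset \Alg^Q(\Ww)'$ is the locality statement just proved; what requires work is the opposite inclusion, i.e.\ that \emph{every} operator commuting with all warped regular operators localized in $\Ww$ itself lies in the weak closure of warped regular operators localized in $\Ww'$. My plan here is to mimic the argument of Buchholz--Lechner--Summers: exploit the fact that the warping operation $A\mapsto A_{Q_\Ww}$ intertwines a dual pairing of algebras through the spectral calculus of $(H,\Pp)$, so that wedge-duality of the underlying net $\Alg^0$ translates into wedge-duality of $\Alg^Q$ via a direct computation with warped convolutions on the dense subspace $\FEnV$; concretely, one shows $(A_{Q_\Ww})'_{-Q_\Ww}$ reconstructs the commutant structure. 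I expect this step, together with checking the geometric compatibility $Q_\Ww V^+ \subset \Ww$, to be the main obstacle, while every other clause of the theorem will be an essentially formal consequence of \Cref{thm:wc} and the definition \eqref{eq:defalg}.
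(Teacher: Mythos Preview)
The paper does not give its own proof of this statement: it is quoted from \cite{BLS10}, and the paragraph following the theorem merely remarks that ``the proofs of \Cref{thm:wc} and \Cref{thm:wcth} can be extracted without significant modifications from the results of \cite{BLS10}'', even in the present framework where Poincar\'e covariance is optional. Your sketch is essentially a faithful reconstruction of the BLS argument, and the three-tier organisation (spectral/representation-theoretic axioms for free; covariance and cyclicity from \Cref{thm:wc}; locality and duality from the geometric compatibility $Q_\Ww V^+\subset\overline\Ww$ together with item~(iv)) is exactly how that proof is structured.

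Two small points are worth tightening. First, your isotony step can be made cleaner: an inclusion of wedges $\Ww_1\subset\Ww_2$ forces $\Ww_1$ to be a \emph{translate} of $\Ww_2$, so $Q_{\Ww_1}=Q_{\Ww_2}$ holds automatically by the translation-invariance of the assignment $\Ww\mapsto Q_\Ww$; there is no ambiguity to be ``careful'' about. Second, you address only the forward implication of the three biconditionals (initial model satisfies \eqref{eq:HK6s}/\eqref{eq:HK2s}/\eqref{eq:HK3s} $\Rightarrow$ deformed model does). The converse direction in \cite{BLS10} uses the inversion identity $(A_Q)_{-Q}=A$, so that the undeformed model is itself a BLS-deformation of the deformed one with parameter $-Q$; applying the forward implication in the reverse direction then closes the equivalence. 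This is routine once stated, but it is the missing half of the ``if and only if''.
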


We further note that \cite{BLS10} study wedge-local quantum field theories
  in the framework of a {\em (causal) Borchers
  triple}~\((\Alg(\WwR),\alpha,\Omega)\). There one specifies
  only an observable algebra for the single wedge region~\(\WwR\). The full family of wedge
  algebras is then obtained using Poincaré symmetry~\eqref{eq:HK3s}. This
  symmetry of the wQFT model is of course an additional requirement,
  which is natural from the perspective of physics and a useful constraint for
  constructive efforts. In particular, it is a non-trivial feature that BLS-deformations 
  can be used to construct Poincaré covariant models.
  However, the BLS-deformation method also applies to the present framework,
  where Poincaré covariance is optional. The proofs of \Cref{thm:wc} and
  \Cref{thm:wcth} can be extracted without significant modifications from the
  results of \cite{BLS10}.

\subsection{Scattering States in Wedge-local QFT} \label{sec:framewScatt}

We briefly review the \(n\)-particle Haag-Ruelle scattering  theory for
wedge-local QFTs \cite{MD17a}.  To construct
one-particle states via the wedge-local Haag-Ruelle method, one chooses
operators~\(A\in \Alg(\Ww)\) such that \(E(H_m)A\Omega \not = 0\).
The existence of such operators follows from \eqref{eq:HK5}.
Let \(\chi \in \SchwartzSpace(\RealNum^d)\) be a Haag-Ruelle auxiliary function
supported within a sufficiently small neighborhood of the mass shell~\(H_m\),
disjointly from the remaining energy-momentum spectrum. 
Then the one-particle problem is solved by
\[
  B := A(\chi) := \int \DInt[d] x \, \chi(x) \, \alpha_x(A) . \label{eq:aux}\numberthis
\]
Namely, in terms of the Fourier transform defined in the relativistic unitary
convention by
\[
  \hat \chi(p) = \hat\chi(p^0, \vec p) = \int \frac{\DInt[d]x}{(2\pi)^{d/2}} \,
  \Ee^{\Ii p^0 x^0 - \Ii \vec p \cdot \vec x} \, \chi(x^0, \vec x),
  \label{eq:defft}\numberthis
\]
we have \(B\Omega = (2\pi)^{d/2} \hat \chi(H, \Pp) A
\Omega\) by spectral calculus and translation-invariance of the vacuum. Hence
\(B\Omega\) is in the one-particle
space~\(\HilbertSpace_1:=E(H_m)\HilbertSpace\), as a consequence of the support
of the Haag-Ruelle auxiliary function~\(\hat \chi\) intersecting the energy-momentum
spectrum~\(\support E_{(H,\Pp)}\) of the theory only on subsets of the mass shell \(H_m\) (by
construction and using also the mass gap assumption \eqref{eq:HK6s}, see e.g.\
\cite[Ch.\ 5]{ArQFT99} or \cite[Sec.\ 2.1]{Dy17}).

Proceeding towards the \(n\)-particle problem we define similarly for a family
of \(A_j \in \Alg(\Ww)\), \(1 \leq j \leq n\), the operators \(B_j := A_j(\chi)\).
Further we consider (positive-energy) Klein-Gordon solutions
\begin{align*}
  f_j(t, \vec x) &= \int \frac{\DInt[s] k}{(2\pi)^\s} \,
  \Ee^{\Ii \vec k \cdot \vec x - \Ii \omega_m(\vec k) t} \;
  \tilde f_j(\vec k),  \quad (1 \leq j \leq n),
  \label{eq:defKG}\numberthis
\end{align*}
as scattering-theoretic comparison dynamics,
with relativistic dispersion \(\omega_m(\vec k) := \sqrt{\vec k^2 + m^2}\) for
mass \(m>0\), and recall that we abbreviate \(s := d-1\).  For usual technical
reasons, the {\em wave packets} are assumed to be {\em regular}, that is, \( \tilde
  f_j \in \ContinuousFuncs^\infty_c(\RealNum^{\s})\).
  By a stationary phase
analysis, these regular Klein-Gordon solutions  vanish rapidly in all space- and
time-like directions away from the classical propagation cones
\[
  \Upsilon_{f_j} := \{ (t, \vec x) \in \RealNum^{\sPone}:  \exists \vec k_j \in \support
    \tilde f_j: t \, \vec k_j = \omega_m(\vec k_j) \, \vec x
  \}. \numberthis \label{eq:propcone}
\]
We note that these cones describe the scattering geometry of the single-particle
wave packets.

The construction of \(n\)-particle states is now accomplished by means of
wedge-frame adapted Haag-Ruelle creation-operator approximants
\begin{align*}
   \BB_{j,\T}^\Lambda(f_j) : = \int \DInt[s] x \; f_j(\Lambda(\T, \vec x))
   \, \alpha_{(\Lambda(\T, \vec x))} (\Bb_j), \quad (\T \in \RealNum).
    \label{eq:defHRgen} \numberthis
\end{align*}
Here, the boost \(\Lambda \in \LorentzGroup_+^\uparrow\) specifies an 
auxiliary Lorentz frame used for the construction.
The simplest case is \(A_j \in \Alg(\Ww)\) for \(\Ww = \pm \WwR\) in two dimensions, and in
higher dimensions also for spatial rotations of \(\WwR\). Then one can simply take \(\Lambda =
\Id\), corresponding to equal-time hyperplanes. For general wedges~\(\Ww\) it is
technically preferable to choose this boost from
\[
  \LorentzGroup^*(\Ww) := \{ \Lambda \in \LorentzGroup_+^\uparrow : \Lambda \WwR = \Ww_c \}.
    \numberthis
\]
Regarding the causal geometry of wedges, we recall here that in dimension \(d \geq 2+1\)
a general wedge region~\(\Ww\) can be written as \(\Ww = \lambda \WwR : = \Lambda \WwR +
x\) in terms of the standard Rindler wedge \(\WwR\) and some Poincaré
transformation \(\lambda=(x,\Lambda)\) with Lorentz transformation part \(\Lambda \in \LorentzGroup_+^\uparrow\)
and translation \(x \in \RealNum^{d}\). Dimension~\(d = 1+1\) is special in
that, after restricting to the proper orthochronous Poincaré group, the set of
all wedges splits into the disjoint orbits of the right wedge~\(\WwR\) and
left wedge~\(\WwL:= \WwR' = - \WwR\).

The choice of \(\Lambda\) enters in the space-time
localization of the operators~\(\BB_{j,\T}^\Lambda(f_j)\), and keeping track of
the latter is important for the Haag-Ruelle method.
Namely, the localization depends on \(\T\) and
the wedge~\(\Ww\) of localization of~\(\Aa_j\), translated
to
\[
  \Ww + \T \VS_{f_j}^\Lambda \subseteq \RealNum^{d}, \quad
  \VS_{f_j}^\Lambda := \Upsilon_{f_j} \cap \Lambda T_1, \; \Lambda \in
  \LorentzGroup_+^\uparrow.
    \label{eq:modVeloSupp}\numberthis
\]
Here \(T_1 := \{ (1,\vec x): \vec x \in \RealNum^s \}\) is the standard
space-like hyperplane at \(\T = 1\) and \(\VS_{f_j}^\Lambda\) is the {\em velocity
support} of \(f_j\) with respect to the Lorentz frame specified by
\(\Lambda\). For further details we refer to \cite{MD17a}}.

Geometrically, scattering situations are concerned with phenomena at
very large distances,
and it is convenient to introduce the centering of \(\Ww\)
denoted by
\(\Ww_\C := \Lambda \WwR\).
As centered wedges~\(\Ww_\C\) are convex cones with \(\Ww_\C + \Ww_\C \subset \Ww_\C\),
the \term{precursor} relation
\[
  \Reg_1 \prec_\Ww \Reg_2 :\Longleftrightarrow \Reg_2 - \Reg_1 \subset \Ww_\C,
  \label{eq:defPrec}\numberthis
\]
defined for non-empty regions \(\Reg_1,\Reg_2 \subset \RealNum^d\), is transitive,
and Poincaré covariant as a partial ordering in the sense that
\(
  \Reg_1 \prec_\Ww \Reg_2 \Longleftrightarrow \lambda \Reg_1 \prec_{\Lambda \Ww}
  \lambda \Reg_2, \; \lambda = (x, \Lambda) \in \PoincareGroup_+^\uparrow
\).

To construct \(n\)-particle scattering states we consider ordered configurations
of wave packet velocities
\[
  \VS_{f_n}^\Lambda \prec_\Ww \VS_{f_{n-1}}^\Lambda \prec_\Ww \ldots
  \prec_\Ww \VS_{f_1}^\Lambda,
  \label{eq:ordPrescL}\numberthis
\]
for the outgoing limit \(\T \to + \infty\), and the reversed ordering
\[
  \VS_{f_n}^\Lambda \succ_\Ww \VS_{f_{n-1}}^\Lambda \succ_\Ww \ldots \succ_\Ww \VS_{f_1}^\Lambda,
  \label{eq:ordPrescIn}\numberthis
\]
for the incoming scattering limit \(\T \to -\infty\), respectively.
   In these cases the main result of \cite{MD17a} establishes convergence
 of the scattering-state approximants
\[
  \Psi_n^\Lambda(\tau) :=  \Bb_{1\T}^\Lambda(f_1) \ldots \Bb_{n\T}^\Lambda(f_n) \Omega,
  \label{eq:scattApprox}\numberthis
\]
and Fock structure of the limits, where all underlying wedge-local operators
\(A_j \in \Alg(\Ww)\) are localizable in a common wedge \(\Ww\).  This approach
is distinct from conventional Haag-Ruelle theory, where convergence and Fock
structure proofs use that the commutators~\([\Bb_{j\T}^\Lambda(f_j),
\Bb_{k\T}^\Lambda(f_k)]\) vanish rapidly in norm  for \(j \not = k\) when \(\T
\to \pm \infty\).  Such stronger estimates are obtained from locality of the QFT
and disjoint velocity supports of the Klein-Gordon solutions.  Applying this
reasoning in a wedge-local context leads to a more restrictive setup. Firstly we
should take \(A \in \Alg(\Ww)\) and \(A^\perp \in \Alg(\Ww^\perp)\) for an {\em
opposite wedge}~\(\Ww^\perp := \Ww' + x\), \(x\in \RealNum^d\). Additionally,
the Klein-Gordon solutions~\(f, f^\perp\) must be chosen such that the
velocity supports are not merely disjoint, but satisfy the stronger
geometrical ordering property
\[
  \VS_{f^\perp}^\Lambda \prec_\Ww \VS_{f}^\Lambda.  \numberthis
\]
Then one obtains by wedge-locality and the decay properties described in
\eqref{eq:propcone} and \eqref{eq:modVeloSupp} a rapid decay
for large \(|\T|\),
\begin{align*}
  \norm{[\Bb^{\Lambda}_\T(f), \Bb^{\perp \Lambda}_\T(f^\perp)]}
  & \leq \frac{C_N}{\T^N}, \qquad \; (\T > 0),
     \numberthis \label{eq:commout}
  \\
  \norm{[\Bb^{\Lambda}_\T(f^\perp), \Bb^{\perp \Lambda}_\T(f)]}
  & \leq \frac{C_N}{(-\T)^N}, \quad (\T < 0). \numberthis
\end{align*}
Here it should be especially noted that the geometrical configuration depends
on whether the outgoing (\(\T > 0\)) or incoming (\(\T < 0\)) regime is
considered.
Pairs of opposite wedge configurations were used in the previous constructions
of two-particle scattering states in the wedge-local context \cite{GL07,BS08}, and also
using wedge-local operators in local QFT context \cite{BBS01}.

In the multi-particle generalization~\eqref{eq:scattApprox} opposite wedges
appear only indirectly. Namely, we work with one-particle states that can be
generated from the vacuum within two opposite wedges \(\Ww\), \(\Ww^\perp =
\Ww' + x\) (for some \(x \in \RealNum^d\)),
\[
   \Psi = A \Omega = A^\perp \Omega, \quad A\in \Alg(\Ww), \; A^\perp \in
   \Alg(\Ww^\perp).
   \label{eq:swap}\numberthis
\]
In this case we call \(\Psi\) {\em swappable} (with
respect to the wedge \(\Ww\)). Swappable vectors with \(\Ww^\perp := \Ww'\)
are dense in the full Hilbert space~\(\HilbertSpace\) as a consequence of
wedge duality \eqref{eq:HK2s}, see \cite[App.~B]{MD17a}.
Swappable one-particle states are obtained by projecting swappable vectors onto
the one-particle space \(\HilbertSpace_1\). In this way we obtain
from \eqref{eq:aux} and \eqref{eq:defHRgen} an oppositely localized pair of Haag-Ruelle
creation-operator approximants for each one-particle state (\(1\leq k \leq
n\))
\[
  \Psi_1^k := \Bb^{\Lambda}_{k\T}(f_k)\Omega = \Bb^{\perp
  \Lambda}_{k\T}(f_k)\Omega, \numberthis
\]
where both expressions involving Haag-Ruelle operators are \(\T\)-independent by construction, so that
\(\T \to \pm \infty\) limits can be dropped. Although scattering states can be
constructed without explicit use of the swapped operators~\( \Bb^{\perp
\Lambda}_{k\T}(f_k)\), they are the main tool for establishing the wedge-local
Haag-Ruelle theorem:
\begin{Thm}[\cite{MD17a}]\label{thm:hrwGen} i
  Let \((\Alg, \alpha, \HilbertSpace, \Omega)\) be a  wedge-local quantum
  field theory satisfying  wedge duality \eqref{eq:HK2s} and the mass gap
  condition~\eqref{eq:HK6s}.  Let \(\Lambda \in \LorentzGroup^\uparrow_+\) and
  \(\Psi_1^j = E(H_m) A_j \Omega\) with \(A_j \in \Alg(\Ww)\) be swappable one
  particle states, and define \(B_k:=A_k(\chi)\) with an auxiliary
  function~\(\chi\) as in \eqref{eq:aux}. 
  \begin{enumerate}[(i)]
    \item
      \label{it:hrwgconv} For regular positive-energy Klein-Gordon solutions
        \(f_j\) satisfying
        \[ 
          \VS_{f_n}^\Lambda \prec_\Ww \VS_{f_{n-1}}^\Lambda
          \prec_\Ww \ldots \prec_\Ww \VS_{f_1}^\Lambda,
          \label{eq:ordPrescLhrw}\numberthis
        \] 
        the scattering state approximants \(\Psi^\Lambda_n(\T) :=
          \BB_{1\T}^\Lambda(f_1) \BB_{2\T}^\Lambda(f_2) \ldots \BB_{n\T}^\Lambda(f_n)\Omega \)   
          converge rapidly in norm for \(\T\to \infty\).
        More precisely,  for any \( N \in \NaturalNum\) there exists a \(C_N > 0\) such that
        \[
           \norm{\Psi_n^\Lambda(\tau) - \Psi^{+,\Lambda}_n} \leq \frac{C_N}{\tau^N},
           \quad (\T > 0).
           \numberthis
        \]

  \item \label{it:hrwgfock} For \(\Lambda \in \LorentzGroup^*(\Ww)\) scalar
    products of \(\Psi_n^{+,\Lambda} := \lim_{\T\to\infty} \BB_{1\T}^\Lambda(f_1)
      \ldots \BB_{n\T}^\Lambda(f_n) \Omega \), \(\Psi'^{+,\Lambda}_{n'}
      := \lim_{\T\to\infty} \BB_{1\T}'^\Lambda(f_1') \ldots \BB_{n'\T}'^\Lambda(f_{n'}') \Omega \), 
      constructed both with respect to the same wedge~\(\Ww\), satisfy
      \[ 
        \left\langle \Psi^{+,\Lambda}_n, \Psi'^{+,\Lambda}_{n'}\right \rangle 
        = \delta_{nn'} \prod_{j=1}^n 
            \left\langle \BB_{j\T}^\Lambda(f_j)\Omega, \BB_{j\T}'^\Lambda(f_j')\Omega \right\rangle. 
        \label{eq:fockGen}\numberthis
      \]
    Here the right hand side is again \(\tau\)-independent by construction. 
  \end{enumerate}
  Analogous statements hold for incoming scattering states, assuming opposite
  ordering.
\end{Thm}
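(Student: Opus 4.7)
The plan is to prove part (i) via Cook's method and part (ii) by induction on the particle number, both relying crucially on the \emph{swapping strategy} that is the distinctive feature of the wedge-local Haag-Ruelle construction.

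For part (i), the starting point is that each \(\BB_{j\T}^\Lambda(f_j)\) creates a \(\tau\)-independent one-particle state from the vacuum; concretely, \(\partial_\tau(\BB_{j\T}^\Lambda(f_j)\Omega) = 0\), because \(f_j\) is an on-shell Klein-Gordon solution and \(B_j\Omega \in \HilbertSpace_1\) by the choice of the auxiliary function \(\chi\) in (\ref{eq:aux}). Differentiating the product (\ref{eq:scattApprox}) in \(\tau\), the last term in the Leibniz rule vanishes immediately, and the remaining terms can be reorganized as
\[
  \partial_\tau \Psi_n^\Lambda(\tau) = \sum_{j=1}^{n-1}\BB_{1\T}^\Lambda(f_1)\cdots \BB_{(j-1)\T}^\Lambda(f_{j-1})\, \bigl[\partial_\tau \BB_{j\T}^\Lambda(f_j),\, \BB_{(j+1)\T}^\Lambda(f_{j+1})\cdots \BB_{n\T}^\Lambda(f_n)\bigr]\, \Omega,
\]
using \((\partial_\tau \BB_{j\T}^\Lambda(f_j))\Omega = 0\) to discard the remainder of the commutator expansion. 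Since all \(\BB_{k\T}^\Lambda(f_k)\) are localized in the same wedge \(\Ww\), the bare commutators do not decay; the swapping strategy intervenes by replacing each \(\BB_{k\T}^\Lambda(f_k)\) with \(k > j\) by its opposite-wedge counterpart \(\BB_{k\T}^{\perp\Lambda}(f_k)\), which creates the same one-particle state but is localized in \(\Ww^\perp\). The ordering (\ref{eq:ordPrescLhrw}) arranges the velocity supports of the factors \(k > j\) in the wedge direction complementary to that of \(f_j\), so the geometric estimate (\ref{eq:commout}) applies and yields an \(O(\tau^{-N})\) bound on each commutator for arbitrary \(N\). Summation and Cook's method then give the convergence claim together with the rapid decay rate.

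For part (ii), I would induct on \(n\), using part (i) to pass to the approximant-level inner product \(\langle \Psi_n^\Lambda(\tau), \Psi_{n'}^{\prime\Lambda}(\tau)\rangle\) up to rapidly decaying errors. The key step moves the leftmost annihilation factor \(\BB_{1\T}^\Lambda(f_1)^*\) past \(\BB_{1\T}^{\prime\Lambda}(f_1')\cdots \BB_{n'\T}^{\prime\Lambda}(f_{n'}')\) on the right, swapping each primed creator to its opposite-wedge version so that (\ref{eq:commout}) applies; the assumption \(\Lambda\in\LorentzGroup^*(\Ww)\) ensures that the velocity-support geometry is compatible with simultaneous swapping of both families. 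Only the pairing with the first primed slot survives, and the one-particle identity \(\BB_{1\T}^\Lambda(f_1)^* \BB_{1\T}^{\prime\Lambda}(f_1')\Omega = \langle \BB_{1\T}^\Lambda(f_1)\Omega, \BB_{1\T}^{\prime\Lambda}(f_1')\Omega\rangle\,\Omega + \Phi_\perp\), with \(\Phi_\perp\perp\Omega\) annihilated in subsequent inner products by another swap-and-commute step, reduces the problem to the \((n-1,n'-1)\)-case, yielding the diagonal product claimed in (\ref{eq:fockGen}). The same mechanism produces \(\delta_{nn'}\), because any mismatch in particle number leaves an unpaired factor whose contribution vanishes in the limit.

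The principal obstacle is the bookkeeping of swappings throughout the induction: after swapping a subset of operators from \(\Ww\) to \(\Ww^\perp\), the remaining operators and their velocity supports must still be in a precursor configuration compatible with (\ref{eq:commout}), and the dependencies among \(\Lambda\), \(\Ww\), and the supports \(\VS_{f_j}^\Lambda\) must be tracked carefully. A secondary technical point is the stationary-phase analysis underlying (\ref{eq:commout}), which requires smoothness of \(A_j\in\Cinfty\), regularity of \(f_j\), and compactness of \(\VS_{f_j}^\Lambda\subseteq \Lambda T_1\), so that the rapid commutator decay holds uniformly in the energies of the surrounding factors. Once these geometric inputs are secured, the remainder of the argument reduces to a careful but otherwise classical Haag-Ruelle-type computation.
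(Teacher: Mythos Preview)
This theorem is not proved in the present paper; it is quoted from \cite{MD17a} as background, so there is no in-paper proof to compare against.  Your outline for part~(i) matches the argument of \cite{MD17a}: Cook's method, with the swapping identity \(B_{k\tau}^\Lambda(f_k)\Omega = B_{k\tau}^{\perp\Lambda}(f_k)\Omega\) used \emph{iteratively} (swap the operator adjacent to the vacuum, commute it leftward via (\ref{eq:commout}) using the ordering within the single family, repeat) so that each Leibniz term \((\partial_\tau B_{j\tau})B_{(j+1)\tau}\cdots B_{n\tau}\Omega\) is seen to be rapidly decaying.  One minor correction: the commutator estimate (\ref{eq:commout}) does not need \(A_j\in\Cinfty\); regularity enters only in the warped-convolution construction, not in the undeformed Haag--Ruelle estimates.

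Your sketch of part~(ii) has a genuine gap.  To move \(B_{1\tau}^{\Lambda*}\) past the primed creators after swapping the latter to \(\Ww^\perp\), estimate (\ref{eq:commout}) would demand a precursor relation \(\VS_{f_k'}^\Lambda \prec_\Ww \VS_{f_1}^\Lambda\) between the primed and unprimed velocity supports, and no such cross-ordering is ever assumed.  The argument in \cite{MD17a} circumvents this by combining swap-and-commute \emph{within one family only} (where the ordering is available) with a spectral clustering step that uses the mass-gap condition \eqref{eq:HK6s} and the support of \(\hat\chi\) rather than a further commutator bound; the mechanism is visible in the present paper in the proof of \Cref{lem:unifbound}, equations \eqref{eq:normclucomm}--\eqref{eq:clucommdecay}, where one obtains \(\langle\Omega, B_{n\tau}^{*}B_{n\tau}^{\perp}\,X\,\Omega\rangle = \langle\Omega, B_{n\tau}^{*}B_{n\tau}^{\perp}\Omega\rangle\langle\Omega, X\Omega\rangle\) purely from energy-momentum-transfer considerations.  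What you call \(\Phi_\perp\) is disposed of by exactly this clustering argument, and it cannot be replaced by another application of (\ref{eq:commout}) without the missing cross-ordering hypothesis.
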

For the later discussion of the scattering matrix, let us note that all
asymptotic data in wedge-local models, including wave operators, must be defined
depending on a localization wedge~\(\Ww\) from which the scattering states have
been prepared. This is an unusual feature of wedge-local models.
But we note that, of course, the \(\Ww\)-dependence can be trivial. This happens
in \(1\!+\!1\) dimensions, where our results apply as well. In this case there are
only two centered wedges~\(\WwR\) and \(\WwL\), whose associated observables and
scattering states can be related by swapping symmetry. In general this certainly does not
imply that the models will be local. Presently the status of existence or
non-existence of local observables even in \(1\!+\!1\)-dimensional GL-models appears
to be still open.
In the present formalism for higher dimensions, the existence of a local QFT
model underlying the wedge-local model under consideration implies a certain
trivial \(\Ww\)-dependence of scattering states, which is discussed in \Cref{sec:AppGL}.

The possible more general wedge-dependence of scattering states is an interesting feature of
wedge local quantum field theories in higher dimensions. In particular, we will use the
formalism introduced in~\cite{MD17a} to describe this wedge dependence in a more
transparent manner for two-particle and \(n\)-particle scattering reactions. One
of the main aims of the present paper is to illustrate this wedge dependence in
BLS-deformed wQFT models and in particular for the special case of the
Grosse-Lechner models (see Sections~\ref{sec:ScattDat} and \ref{sec:AppGL}).

\section{\(N\)-Particle Scattering in BLS-Deformed wQFTs} \label{sec:GL}

Using the results and notation described in the previous sections, we can now state our
results in precise form.  Let \((\Alg^0, \alpha, \HilbertSpace, \Omega)\) be a
given  wedge-local quantum
field theory satisfying  wedge duality \eqref{eq:HK2s} and the mass gap
condition \eqref{eq:HK6s}, in addition to \eqref{eq:HK1}--\eqref{eq:HK6}.
Let \((\Alg^Q, \alpha, \HilbertSpace, \Omega)\) be the model constructed by
BLS-deformation with some fixed warping parameter~\(Q\).

Our main aim will be to prove asymptotic completeness of the deformed model.
On the technical side, this will be achieved by establishing a direct relation
between a scattering state
 \[
  \lim_{\T \to \pm \infty}  \BB_{1,\T}^{Q,\Lambda}(f_1) \BB_{2,\T}^{Q,\Lambda}(f_2)
  \ldots \BB_{n,\T}^{Q,\Lambda}(f_n)\Omega \label{eq:techoutdef} \numberthis
\]
of the deformed model, and the scattering states of the initial model,
\[
  \lim_{\T \to \pm \infty}  \BB_{1,\T}^\Lambda(f_1) \BB_{2,\T}^\Lambda(f_2)
  \ldots \BB_{n,\T}^\Lambda(f_n)\Omega. \label{eq:techoutorig} \numberthis
\]
Here we write \(B_k^Q = A_k^Q(\chi)\) with \(\chi \in
\SchwartzSpace(\RealNum^d)\), as before, and \(A_k^Q \in \Alg^Q(\Ww)\) for
\(1\leq k \leq n\) with some fixed centered wedge \(\Ww = \Lambda \WwR\). In the
following we will for simplicity suppress dependencies on the fixed reference
frame boost~\(\Lambda\) and on the wedge in our notation, writing for example
\(\prec\), \(\VS_{f_k}\) and \(B_{k,\T}^Q(f_k)\), instead of \(\prec_\Ww\),
\(\VS^\Lambda_{f_k}\) and
\(B_{k,\T}^{Q,\Lambda} (f_k)\), etc., when these dependencies can be clearly seen
from the context.

\subsection{Wave Operators in BLS-deformed wQFTs}

Our introductory heuristic considerations (see equations~\eqref{eq:psiqdef} and
\eqref{eq:psi0trans}) suggested that the deformation preserves outgoing and
incoming particle numbers. This makes our approach feasible. As we can in
principle deform general models, such as \(P(\phi)_2\), the scope of our result
is not restricted to particularly simple or integrable deformed models. However, we
can infer similarly that the deformed scattering state \eqref{eq:techoutdef}
cannot be directly related to a scattering state in the
undeformed model with the same simple product form~\eqref{eq:techoutorig}. This is addressed here by using the formulation of
the scattering data in terms of wave operators~\(\WO^\pm_\Ww\), as defined in \cite[Sec.~5]{MD17a}.
To recall the formal construction of the wave
operators we start by introducing the full (unsymmetrized)
Fock space by
\begin{align}
  \Gamma^u(\HilbertSpace_1) : = \bigoplus_{n=0}^\infty
  {\HilbertSpace_1}^{\otimes n},
\end{align}
over the one particle space
\(\HilbertSpace_1 := E_{(H,\Pp)}(H_m)\HilbertSpace\).
Here it is convenient to define the velocity ordering
\(\succ_\Ww\) also as partial order on \(\HilbertSpace_1\), by writing
\[
  \Psi_1 \prec_\Ww \Psi_1' : \Longleftrightarrow
  \VS_{\Psi_1}^\Lambda \prec_\Ww \VS_{\Psi_1'}^\Lambda,
  \label{def:veloOrdOPS} \numberthis
\]
where the velocity support~\(\VS_{\Psi_1}^\Lambda\) of a one-particle
vector~\(\Psi_1\) is defined analogously to \eqref{eq:propcone} and \eqref{eq:modVeloSupp},
replacing the support of \(\tilde f_j\) by the spectral support of the momentum
operator in the state~\(\Psi_1\).
Natural domains for the wave operators in the wedge-local setting are the
velocity-ordered Fock spaces~\(\Gamma^{\succ}(\HilbertSpace_1)\) and
\(\Gamma^{\prec}(\HilbertSpace_1)\),  which are defined as the closures of
the spans
\begin{align}
  \Gamma^{\succ}_0(\HilbertSpace_1)
  &:= \Span{ \{
    \Psi^1\otimes \ldots \otimes \Psi^n : n \in \NaturalNum_0, \;
    \Psi^1,\ldots, \Psi^n  \in \HilbertSpace_1 ,
    \; \Psi^1 \succ \Psi^2 \succ \ldots \succ
\Psi^n \}}, \\
  \Gamma^{\prec}_0(\HilbertSpace_1)
  &:= \Span{ \{
    \Psi^1\otimes \ldots \otimes \Psi^n : n \in \NaturalNum_0, \;
    \Psi^1,\ldots, \Psi^n  \in \HilbertSpace_1,
    \; \Psi^1 \prec \Psi^2 \prec \ldots \prec \Psi^n \}},
\end{align}
of finite linear combinations, 
for outgoing- and incoming scattering states, respectively. 
For technical purposes we analogously define such algebraic spans
\(\Gamma^\succ_0(\HilbertSpace_1')\)
and \(\Gamma^\prec_0(\HilbertSpace_1')\) for dense subsets \(\HilbertSpace'_1
\subset \HilbertSpace_1\).

Our main technical results concern the wave operators~\(\WO^{\pm}_{Q,\Ww}\) of
the deformed model, and we will now discuss their construction as given in
\cite{MD17a}. Here we will also include some additional details which are of
particular importance for us here. First we recall that the wave operators are
constructed by means of \Cref{thm:hrwGen}. In particular this means that, on the
technical side, we are in fact working with a smaller subset of one-particle
states, namely all which can be written in the form
\[
  \Psi_1 =  \BB_{\T}^\Lambda(f) \Omega  = \BB_{\T}^{\perp\Lambda}(f) \Omega.
   \label{eq:swaponetech} \numberthis
\]
We call such \(\Psi_1\) swappable (with respect to \(\Ww\)) one-particle states of bounded
energy, and we let \(\HilbertSpace_{1c}^\Ww\) be the (non-closed) linear space
spanned by them.
Alternatively, using spectral calculus, one sees that the above generating
set\footnote{A general vector \(\Psi_1 \in \HilbertSpace_{1c}^\Ww\) is a sum of vectors
  of the form \eqref{eq:swaponetech} or \eqref{eq:momresdir}.} of \(\Psi_1 \in
\HilbertSpace_{1c}^\Ww\) can also be characterized by the
existence of \(A \in \Alg(\Ww)\), \(A^\perp \in \Alg(\Ww^\perp)\), and \(\tilde f
  \in \ContinuousFuncs^\infty_c(\RealNum^s)\), such that
  \[\Psi_1 = \tilde f(\Pp)
  E_{(H,\Pp)}(H_m)  A \Omega = \tilde f(\Pp) E_{(H,\Pp)}(H_m) A^\perp \Omega.
  \label{eq:momresdir} \numberthis
\]

\begin{Prop} \label{prop:denseSOS}
In a wQFT satisfying wedge duality~\eqref{eq:HK2s}, the density
\(\closure{\HilbertSpace_{1c}^\Ww} = \HilbertSpace_1\) holds for any wedge \(\Ww\).
\proof  Let \(\Psi_1 \in \HilbertSpace_1\) and \(\epsilon  > 0\).
By
\cite{MD17a}~App.~B,
there exist \(A \in \Alg(\Ww)\) and \(A^\perp \in \Alg(\Ww')\), such that
\(\Psi' := A\Omega = A^\perp \Omega\) and \(\norm{\Psi' - \Psi_1} < \epsilon/2\).
Let now \(f \in \ContinuousFuncs^\infty_c(\RealNum^s)\) with \(\tilde f(0) = 1\). Then
for any \(\delta > 0\) we have \(\Psi_\delta := \tilde f(\delta \Pp) E(H_m)  \Psi'
  \in \HilbertSpace_{1c}^\Ww\) by \eqref{eq:momresdir}. By spectral calculus \( \Psi_\delta
  \longrightarrow E(H_m)\Psi'\) when we take \(\delta \to 0\).
  In particular 
  there exists \(\delta'> 0\) s.t.\  \( \|\Psi_{\delta'} - E(H_m)\Psi'\| <
  \epsilon/2\). Together we get
  \(
    \norm{\Psi_{\delta'} - \Psi_1}
    \leq
    \norm{\Psi_{\delta'} - E(H_m) \Psi'} +
    \norm{ E(H_m)(\Psi' - \Psi_1)}
    \leq
    \epsilon/2+ \norm{E(H_m)} \norm{ \Psi' - \Psi_1}
    \leq \epsilon
  \). \qed
\end{Prop}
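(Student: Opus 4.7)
The plan is to reduce the claim to two standard ingredients: a swapping lemma ensuring that the one-particle space is approximated by vectors of the form $A\Omega = A^\perp \Omega$ with $A \in \Alg(\Ww)$, $A^\perp \in \Alg(\Ww')$, and a momentum-space cutoff that cuts any such vector down to bounded energy while preserving the bi-wedge creation property in the form \eqref{eq:momresdir}. Given $\Psi_1 \in \HilbertSpace_1$ and $\epsilon > 0$, the goal is to exhibit some $\Psi_\delta \in \HilbertSpace^{\Ww}_{1c}$ within distance $\epsilon$ of $\Psi_1$.

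First, I would invoke the swapping result from \cite{MD17a} App.~B, which rests precisely on wedge duality \eqref{eq:HK2s} and hence is available under our hypothesis. It provides $A \in \Alg(\Ww)$ and $A^\perp \in \Alg(\Ww')$ with $\Psi' \defeq A\Omega = A^\perp\Omega$ and $\|\Psi' - \Psi_1\| < \epsilon/2$. Since $\Psi_1 = E(H_m)\Psi_1$, applying $E(H_m)$ to both sides gives $\|E(H_m)\Psi' - \Psi_1\| \leq \|\Psi' - \Psi_1\| < \epsilon/2$, using that $E(H_m)$ is a contraction.

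Next, to land inside $\HilbertSpace_{1c}^{\Ww}$ (which requires a \emph{compactly supported} momentum cutoff, cf.\ \eqref{eq:momresdir}), I would fix any $\tilde f \in \ContinuousFuncs^\infty_c(\RealNum^s)$ with $\tilde f(0) = 1$ and consider the family $\Psi_\delta \defeq \tilde f(\delta \Pp) E(H_m)\Psi'$ for $\delta > 0$. By \eqref{eq:momresdir} with cutoff function $\vec p \mapsto \tilde f(\delta \vec p)$ and the same $A$, $A^\perp$, each $\Psi_\delta$ lies in $\HilbertSpace_{1c}^{\Ww}$. By strong continuity of the bounded Borel functional calculus (applied to the joint spectral resolution of $\Pp$) together with dominated convergence, $\tilde f(\delta \Pp) \to \Id$ strongly as $\delta \to 0$, hence $\Psi_\delta \to E(H_m)\Psi'$ in norm; choose $\delta'$ so that $\|\Psi_{\delta'} - E(H_m)\Psi'\| < \epsilon/2$.

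Finally, the triangle inequality yields $\|\Psi_{\delta'} - \Psi_1\| \leq \|\Psi_{\delta'} - E(H_m)\Psi'\| + \|E(H_m)\Psi' - \Psi_1\| < \epsilon$, which is the claim. The only step with real content is the first one, relying on the swapping lemma; everything else is straightforward spectral calculus. The potential subtlety to check is that $\Psi_\delta$ genuinely fits the template \eqref{eq:momresdir}, i.e.\ that the bounded Borel function $\tilde f(\delta\,\cdot\,)\cdot\CharFct_{H_m}$ is realized as $\tilde g(\Pp)E(H_m)$ for some $\tilde g \in \ContinuousFuncs^\infty_c(\RealNum^s)$ — but taking $\tilde g(\vec p) \defeq \tilde f(\delta \vec p)$ makes this immediate.
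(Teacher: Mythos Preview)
Your proof is correct and follows essentially the same route as the paper: invoke the swapping lemma from \cite{MD17a}~App.~B to approximate $\Psi_1$ by a swappable $\Psi'=A\Omega=A^\perp\Omega$, then apply a dilated compactly supported momentum cutoff $\tilde f(\delta\Pp)E(H_m)$ and let $\delta\to 0$ via spectral calculus. The paper's argument is identical step for step; your version merely spells out a few justifications (the contraction property of $E(H_m)$, the dominated-convergence reasoning behind $\tilde f(\delta\Pp)\to\Id$ strongly) that the paper leaves implicit.
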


\begin{Rem}
Concerning this technical detail, there is one subtlety with respect to the
standard construction of bosonic and fermionic Fock spaces, which should be pointed
out. Namely, we expect the resulting Fock space  to be independent of possibly
different choices of \(\HilbertSpace'_1\), as the choice of this dense subset of
the one-particle space is merely technically motivated.  Yet, for ordered Fock
spaces
the choices of \(\HilbertSpace'_1\) are not completely arbitrary due to 
a possible interplay with ordering conditions.
In the present context one can express this requirement on
\(\HilbertSpace_1'\) as follows: we
say that \(\HilbertSpace'_1 \subseteq \HilbertSpace_1\) is {\em momentum
resolving}
iff for any \(\Psi \in \HilbertSpace_1\) there exists a sequence
\((\Psi_n)_{n\in \NaturalNum} \subset \HilbertSpace'_1\) such that \(\Psi_n
\rightarrow \Psi\) and \(\support E_{(H,\Pp)}{\Psi_n} \subseteq \support
E_{(H,\Pp)}{\Psi}\). We note that from \Cref{prop:denseSOS} and \eqref{eq:momresdir} it
follows by choosing \(\tilde f\) suitably that the linear
spaces~\(\HilbertSpace^\Ww_{1c}\) are momentum resolving.
\end{Rem}

By construction, the energy-momentum operators of the initial and deformed model
coincide. In particular, the one-particle spaces of the two models are
identical.  Hence we can directly compare the scattering data of the two models
on the present abstract level in terms of the respective wave operators, as they
are defined on the same ordered Fock spaces for both the deformed and initial
model.
For a direct comparison of the two wave operators, we will make use of a strenthened
technical result on the density of one-particle states.

\begin{Cor}\label{prop:densereg}
  Any state \(\Psi_1 \in \HilbertSpace_{1c}^\Ww\) can be generated by 
  a pair \(\tilde A \in \Alg^r(\Ww)\) and \(\tilde A^\perp \in
  \Alg^{r}(\Ww^\perp)\) of regular operators.
  \proof Starting from a pair \(A \in \Alg(\Ww)\), \(A^\perp \in \Alg(\Ww')\)
  and regular wave packet \(\tilde f\)
  from \Cref{prop:denseSOS}  we write
  \[\Psi_1 = 
    \hat \chi(\omega_m(\Pp), \Pp)^{-1} \tilde f(\Pp)
  E_{(H,\Pp)}(H_m) \hat \chi(H, \Pp)   A \Omega ,
  \numberthis
  \]
  where \(\chi \in \ContinuousFuncs^\infty_c(\RealNum^d)\)
  is a compactly supported function with \(\hat \chi(\omega_m(\vec p),
  \vec p)\) non-vanishing for all \(\vec p \in \support \tilde f\).
  Then \(\tilde f'(\vec p) := 
  \hat \chi(\omega_m(\vec p), \vec p)^{-1} \tilde f(\vec p)\) defines a new regular
  Klein-Gordon wave packet and \(\hat \chi(H, \Pp)   A \Omega 
    = (2\pi)^{-d/2}  A(\chi) \Omega = : A' \Omega\)  (see eq.\ \eqref{eq:defft}) shows that the vector part is
  obtained from the vacuum as image of a regular operator \(A' \in \Alg^r(\tilde
    \Ww)\) for some wedge \(\tilde \Ww \supset \Ww + \support
  \chi\). An analogous calculation yields \(A'^\perp \in \Alg^r(\Ww^\perp)\) with
\(\Ww^\perp \supset \Ww' + \support \chi\).
We can then arrange \(\tilde A \in \Alg^r(\Ww)\) by translating the above constructed
operator back to \(\Ww\), and also replacing \(A^\perp\) and \(\Ww^\perp\) with
the respective translates by the same vector. These
translations can be compensated by absorbing a corresponding phase 
into the wave packet. \qed
\end{Cor}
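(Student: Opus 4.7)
The plan is to promote the pair $(A, A^\perp)$ produced by \Cref{prop:denseSOS} to regular operators by a standard space-time mollification argument, while carefully tracking the localization. Concretely, I would start with $\Psi_1 = \tilde f(\Pp) E(H_m) A\Omega = \tilde f(\Pp) E(H_m) A^\perp\Omega$ and convolve $A$ and $A^\perp$ with a common test function $\chi \in \ContinuousFuncs^\infty_c(\RealNum^d)$: the operators $A(\chi) = \int \chi(x)\,\alpha_x(A)\,\DInt[d]x$ and $A^\perp(\chi)$ are automatically regular because translating them by $y$ just shifts $\chi$ by $y$, which is smooth in the norm topology.

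The next step is to arrange that these mollified operators still generate $\Psi_1$. Using translation invariance of $\Omega$ and the spectral theorem, $A(\chi)\Omega = (2\pi)^{d/2}\hat\chi(H,\Pp)\,A\Omega$, so on $H_m$ the factor $\hat\chi(\omega_m(\vec p),\vec p)$ appears. I would pick $\chi$ so that $\hat\chi$ is bounded away from zero on a neighbourhood of $\support \tilde f$ on the mass shell, and then define a new wave packet $\tilde f'(\vec p) := \hat\chi(\omega_m(\vec p),\vec p)^{-1}\tilde f(\vec p)$. Since $\tilde f$ is compactly supported and $\hat\chi$ is smooth and non-vanishing there, $\tilde f'$ remains in $\ContinuousFuncs^\infty_c(\RealNum^s)$, and by construction $\tilde f'(\Pp)E(H_m)A(\chi)\Omega = \tilde f(\Pp)E(H_m)A\Omega = \Psi_1$, and identically with $A^\perp(\chi)$ in the swapped picture.

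The main obstacle, and really the only delicate point, is that mollification enlarges the localization region: $A(\chi)\in \Alg(\Ww + \support\chi)$ and $A^\perp(\chi)\in \Alg(\Ww^\perp + \support\chi)$, which are strictly larger than $\Ww$ and $\Ww^\perp$. To cure this I would translate both mollified operators by a single vector $-y$ with $y\in\RealNum^d$ large enough in the wedge-opening direction so that $\Ww + \support\chi - y \subset \Ww$ and $\Ww^\perp + \support\chi - y \subset \Ww^\perp$ simultaneously (this is possible because both $\Ww$ and $\Ww^\perp = \Ww' + x$ are translates of wedges and $\support \chi$ is compact). Set $\tilde A := \alpha_{-y}(A(\chi))$ and $\tilde A^\perp := \alpha_{-y}(A^\perp(\chi))$; these are regular, belong to $\Alg^r(\Ww)$ and $\Alg^r(\Ww^\perp)$ respectively, and applied to $\Omega$ they produce $U(-y)A(\chi)\Omega$ and $U(-y)A^\perp(\chi)\Omega$. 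The translation-induced factor $\Ee^{\Ii y\cdot P}$ on the one-particle space becomes multiplication by the phase $\Ee^{\Ii y\cdot(\omega_m(\vec p), \vec p)}$, which I absorb into a final redefinition of the wave packet $\tilde f'$; this preserves regularity and compact support, and produces the desired pair generating $\Psi_1$. Linearity then extends the result from the generating vectors to all of $\HilbertSpace_{1c}^\Ww$.
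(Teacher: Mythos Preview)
Your approach is essentially the same as the paper's: mollify $A$ and $A^\perp$ with a common $\chi\in\ContinuousFuncs^\infty_c(\RealNum^d)$, compensate the resulting factor $\hat\chi(\omega_m(\vec p),\vec p)$ by redefining the wave packet, and then translate to restore the localization, absorbing the phase into $\tilde f'$.

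There is, however, one genuine slip in the localization step. You claim one can choose a single $y$ with
\[
\Ww + \support\chi - y \subset \Ww
\quad\text{and}\quad
\Ww^\perp + \support\chi - y \subset \Ww^\perp
\]
simultaneously. This is impossible in general: $\Ww$ and $\Ww^\perp=\Ww'+x$ open in \emph{opposite} space-like directions, so $\Ww-y\subset\Ww$ requires $y\in\overline{\Ww_\C}$ while $\Ww^\perp-y\subset\Ww^\perp$ requires $y\in-\overline{\Ww_\C}$. The intersection is the edge of the wedge, which has empty interior and therefore cannot absorb the compact set $\support\chi$.

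The fix is exactly what the paper does and what the statement actually permits: $\Ww^\perp$ is not a fixed wedge but merely \emph{some} opposite wedge $\Ww'+x'$. So translate by $-y$ with $y$ chosen only to ensure $\tilde A:=\alpha_{-y}(A(\chi))\in\Alg^r(\Ww)$, and then simply \emph{redefine} $\Ww^\perp$ to be the translate $\Ww^\perp_{\mathrm{new}}:=\Ww^\perp+\support\chi - y$ enlarged to an opposite wedge (or more directly, take any opposite wedge containing it). Then $\tilde A^\perp:=\alpha_{-y}(A^\perp(\chi))\in\Alg^r(\Ww^\perp_{\mathrm{new}})$, and since $\Ww^\perp_{\mathrm{new}}$ is still of the form $\Ww'+x''$ the swapping property is preserved. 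With this adjustment your argument goes through.
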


\begin{Def}[wave operators of initial and deformed model] \label{def:wo}
  Let \((\Alg, \alpha, \HilbertSpace, \Omega)\) be a wedge-local quantum field
  theory.
  The incoming and outgoing wave operators~\(\WO^\pm_{0,\Ww}\) associated to centered wedge regions \(\Ww =
  \Lambda \WwR\), \(\Lambda \in \LorentzGroup\), in the initial model are the maps
  defined by
  \[
  \begin{aligned}
    \WO_{0,\Ww}^+ &:
    \left\{
      \begin{aligned}
      \FS^{\succ_\Ww}(\HilbertSpace_{1})  &\longrightarrow
           \HilbertSpace,\\
        \Psi_1^1 \otimes \ldots \otimes \Psi_1^n  &\longmapsto
     \lim_{\T \to\infty} \BB_{1\T}^\Lambda(f_1) \ldots \BB_{n\T}^\Lambda(f_n) \Omega,
    \end{aligned}
    \right.
    \\
    \WO_{0,\Ww}^- &:
    \left\{
      \begin{aligned}
        \FS^{\prec_\Ww}(\HilbertSpace_{1})  &\longrightarrow
        \HilbertSpace,\\
        \Psi_1^1 \otimes \ldots \otimes \Psi_1^n  &\longmapsto
     \lim_{\T \to-\infty} \BB_{1\T}^\Lambda(f_1) \ldots
           \BB_{n\T}^\Lambda(f_n) \Omega.
    \end{aligned}
    \right.
  \end{aligned}
  \label{eq:defWO}\numberthis
\]
via linear and continuous extension from product states in \(\Gamma_0^{\succ_\Ww/\prec_\Ww}(\HilbertSpace_{1c}^\Ww)\).
Similarly the wave operators of the deformed model~\((\Alg^Q, \alpha, \HilbertSpace, \Omega)\) are
\[
  \begin{aligned}
    \WO_{Q,\Ww}^+ &:
    \left\{
      \begin{aligned}
      \FS^{\succ_\Ww}(\HilbertSpace_{1})  &\longrightarrow
           \HilbertSpace,\\
        \Psi_1^1 \otimes \ldots \otimes \Psi_1^n  &\longmapsto
     \lim_{\T \to \infty} \BB_{1Q\T}^\Lambda(f_1) \ldots \BB_{nQ\T}^\Lambda(f_n) \Omega,
    \end{aligned}
    \right.\\
    \WO_{Q,\Ww}^- &:
    \left\{
      \begin{aligned}
      \FS^{\prec_\Ww}(\HilbertSpace_{1})  &\longrightarrow
           \HilbertSpace,\\
        \Psi_1^1 \otimes \ldots \otimes \Psi_1^n  &\longmapsto
     \lim_{\T \to -\infty} \BB_{1Q\T}^\Lambda(f_1) \ldots \BB_{nQ\T}^\Lambda(f_n) \Omega.
    \end{aligned}
    \right.
  \end{aligned}
  \label{eq:defWOQ}\numberthis
\]
\end{Def}

For simplicity we will refer to \(\WO^\pm_{Q,\Ww}\) as the {\em deformed wave
operators}. We note that, strictly speaking, this terminology is justified only in
retrospective after the results of the present paper are established. 
Namely, our main result shows the wave operators~\(\WO^{\pm}_{Q,\Ww}\), as defined
using the general construction in the deformed model, can be regarded as a
``deformation'' of the wave operators~\(\WO^{\pm}_{0,\Ww}\) from the underlying
``undeformed'' model.

\begin{Thm} \label{thm:defw}
  The wave operators of the deformed model~\((\Alg^Q, \alpha, \HilbertSpace, \Omega)\)
can be expressed in terms of the wave operators of the initial model~\((\Alg,
\alpha, \HilbertSpace, \Omega)\) via
\begin{align*}
  \WO_{Q,\Ww}^{+} &= \WO_{0,\Ww}^{+} S_{Q_\Ww}^{\succ_\Ww}, &
  \WO_{Q,\Ww}^{-} &= \WO_{0,\Ww}^{-} S_{Q_\Ww}^{\prec_\Ww},
  \label{eq:wootimesq}\numberthis
\end{align*}
where \(S_{Q_\Ww}^{\succ_\Ww/\prec_\Ww}\) are restrictions  to
\(\Gamma^{\succ_\Ww/\prec_\Ww}\)
of \(S_{Q_\Ww} : \FS^u(\HilbertSpace_1) \to \FS^u(\HilbertSpace_1) \)
defined
by
\[
  S_{Q_\Ww} \Psi^1_1 \otimes \ldots \otimes \Psi^n_1
  := \prod_{1\leq i < j \leq n} \Ee^{\Ii P_i\cdot Q_\Ww P_j}
\Psi^1_1 \otimes \ldots \otimes \Psi^n_1
\numberthis
\]
for \(\Psi_1^1, \ldots, \Psi_1^n \in \HilbertSpace_1\).
Here, for \(i \in \NaturalNum\), \(P_i\) denotes the self-adjoint operator defined on
\(\Gamma^u(\HilbertSpace_1)\) in terms of the energy-momentum operator \(P=
(H, \Pp)\) of the model by
\[
P_i (\Psi^1 \otimes \ldots \otimes  \Psi^n) =
   \Psi^1 \otimes \ldots \otimes P \Psi^i \otimes \ldots \otimes  \Psi^n
   \numberthis
\]
for \( n \geq i\),
\(\Psi_1, \ldots, \Psi_n \in \HilbertSpace_1\)
with \(\Psi^i \in \Domain{}(P)\) 
and \(P_i (\Psi^1 \otimes \ldots \otimes  \Psi^n) :=  0\) if \(n < i\).
\end{Thm}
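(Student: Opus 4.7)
The plan is to expand each deformed Haag-Ruelle approximant via the oscillatory integral of Definition~\ref{def:wc}, move all translations past the subsequent operators so that they act on the vacuum, and then identify the phases produced by the spectral integrations against the one-particle momentum content of the intermediate states as the pairwise twist $\prod_{i<j}\Ee^{\Ii P_i\cdot Q_\Ww P_j}$. This generalizes to arbitrary $n$ the two-particle formula established in \cite{GL07,BS08} and follows the strategy of \cite{DT10}, adapted to arbitrary particle numbers using the velocity-ordering geometry of the wedge-local Haag-Ruelle theorem~\cite{MD17a}.

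By Theorem~\ref{thm:wcth} the deformed model $(\Alg^Q,\alpha,\HilbertSpace,\Omega)$ satisfies all postulates required for scattering theory, so Theorem~\ref{thm:hrwGen} applies: both limits defining $\WO^\pm_{Q,\Ww}$ exist and factorize on ordered product states. Since both sides of \eqref{eq:wootimesq} are linear and continuous in the tensor product, Proposition~\ref{prop:denseSOS} and Corollary~\ref{prop:densereg} reduce the problem to verifying the identity on product states whose one-particle factors $\Psi_1^j \in \HilbertSpace^\Ww_{1c}$ are generated by regular operators $A_j \in \Alg^{0r}(\Ww)$; for such generators the warped convolutions $B_{jQ\tau}(f_j)$ are rigorously defined through \eqref{eq:defwc}.

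The central computation is at fixed $\tau$: expanding each $B_{kQ\tau}(f_k)$ as an oscillatory integral in variables $(x_k,y_k)$, commuting every $U(x_k)$ past later operators via $U(x)\alpha_z(A) = \alpha_{x+z}(A)U(x)$, and collapsing the total translation on $\Omega$, one obtains
\begin{equation*}
 \prod_{k=1}^n B_{kQ\tau}(f_k)\Omega = \lim_{\epsilon\to 0}\frac{1}{(2\pi)^{nd}}\int \prod_{k=1}^n \eta(\epsilon x_k,\epsilon y_k)\, \Ee^{-\Ii \sum_k x_k\cdot y_k}\,\prod_{k=1}^n \alpha_{X_k+Q_\Ww y_k}(B_{k\tau}(f_k))\Omega\,\DInt[nd]{x}\DInt[nd]{y},
\end{equation*}
with $X_k := x_1+\cdots+x_k$. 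Changing variables to $z_k := X_k$, the $z_k$-integrations act as iterated spectral integrations of the total momentum on the intermediate vectors $\alpha_{Q_\Ww y_{k+1}}(B_{(k+1)\tau}(f_{k+1}))\cdots\alpha_{Q_\Ww y_n}(B_{n\tau}(f_n))\Omega$. As $\tau\to\infty$, by Theorem~\ref{thm:hrwGen} together with the mass gap \eqref{eq:HK6s}, these intermediate vectors approximate the ordered tensor product $\Psi_1^{k+1}\otimes\cdots\otimes\Psi_1^n$, on which the total momentum acts as $P_{k+1}+\cdots+P_n$. The $z_k$-integration therefore replaces the warping of $B_{k\tau}(f_k)$ by the phase $\Ee^{\Ii Q_\Ww y_k\cdot(P_{k+1}+\cdots+P_n)}$, while the residual $y_k$-integration recovers the undeformed operator $B_{k\tau}(f_k)$ and leaves behind the cumulative factor $\prod_{j>k}\Ee^{\Ii P_k\cdot Q_\Ww P_j}$. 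Telescoping in $k$ produces exactly the twist $\prod_{i<j}\Ee^{\Ii P_i\cdot Q_\Ww P_j} = S_{Q_\Ww}$.

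The main obstacle is the rigorous interchange of the regularization limit $\epsilon\to 0$, the iterated spectral integrations in $z_k$, and the scattering limit $\tau\to\infty$. To control this I would combine three inputs: the regularity of the $A_j$ and the smoothness of $\chi$ and $\tilde f_k$, which yield tempered bounds on $\alpha_x(B_{k\tau}(f_k))$ uniform in $\tau$; the commutator decay estimate \eqref{eq:commout} together with the velocity ordering \eqref{eq:ordPrescLhrw}, which suppress the non-product ``cloud'' components of the intermediate vectors in the scattering limit; and the oscillatory-integral machinery of \cite{BLS10} combined with the mass gap \eqref{eq:HK6s}, which localizes the relevant momentum content on finitely many mass shells and renders the $z_k$-integrals absolutely convergent. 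Once these estimates are in place, the pairwise phases can be extracted rigorously, giving $\WO^+_{Q,\Ww} = \WO^+_{0,\Ww}\,S^{\succ_\Ww}_{Q_\Ww}$; the incoming identity follows by the same argument under $\tau\to-\tau$ and $\succ_\Ww\leftrightarrow\prec_\Ww$.
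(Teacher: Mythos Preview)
Your overall strategy matches the paper's: reduce by density to regular generators (Corollary~\ref{prop:densereg}), expand the warped convolutions as oscillatory integrals, commute all translations past the later operators, and identify the accumulated phases with $S_{Q_\Ww}$. The paper carries this out through Lemma~\ref{lem:deformconv} together with the oscillatory-integral identification of the deformed tensor product in Proposition~\ref{lem:deftprep}.

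There is, however, a genuine gap in how you propose to handle the interchange of limits. You want to perform the $z_k$-spectral integrations first, arguing that the intermediate vectors $\alpha_{Q_\Ww y_{k+1}}(B_{(k+1)\tau})\cdots\alpha_{Q_\Ww y_n}(B_{n\tau})\Omega$ ``approximate ordered tensor products'' as $\tau\to\infty$ so that the total momentum acts as $P_{k+1}+\cdots+P_n$. But these vectors carry translations $Q_\Ww y_j$ coming from an \emph{unbounded} oscillatory integration; for large $y_j$ the velocity-ordering geometry required by Theorem~\ref{thm:hrwGen} is destroyed, and there is no reason the scattering limit should even exist, let alone uniformly in the $(y_j)$. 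The ``tempered bounds uniform in $\tau$'' you invoke are not available globally: the naive norm estimate $\|B_{k\tau}(f_k)\|\leq C(1+|\tau|^{s/2})$ grows, and the commutator decay \eqref{eq:commout} holds only for suitably restricted translations.

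The paper resolves this by a different order of operations and a device you do not mention. It first rewrites each oscillatory integral via the regularizing operator $D_{\text{reg}}^M$ (Lemma~\ref{lem:oscint2}) to obtain integrable pointwise decay in $(x_k,y_k)$, and then splits the integration domain into a $\tau$-dependent inner region $R_{\rho\tau}^\uparrow=\{|x_k|_c,|y_k|_c\leq\rho\tau\}$ and its complement. On the complement the $D_{\text{reg}}^M$-decay beats the polynomial $\tau$-growth of the crude norm bounds for $M$ large. On the inner region all translations are bounded by $\rho\tau$ with $\rho$ chosen so small that the ordering survives; the key input here is Lemma~\ref{lem:unifbound}, a $\tau$-\emph{uniform} norm bound on translated scattering-state approximants with translations in $\DoubleCone_{\rho\tau}$, proved via swapping and the commutator estimate of Lemma~\ref{lem:CommEst}. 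Only after this splitting does dominated convergence justify taking the $\tau$-limit inside the integral. The limit is then computed pointwise as $\WO^+_{0,\Ww}$ applied to a tensor product of translated one-particle states, and the remaining oscillatory integral over $(x_k,y_k)$ is recognized \emph{a posteriori} as the iterated deformed tensor product $\otimes_{Q_\Ww}$ via Proposition~\ref{lem:deftprep}. Your sketch contains neither the $\tau$-dependent region splitting nor the uniform bound of Lemma~\ref{lem:unifbound}, and without them the interchange step cannot be justified.
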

\noindent
We note that \(P_i\) can be regarded as \(i\)-th unordered second quantization
of the energy-momentum operator~\(P\). Essential self-adjointness of \(P_i\) on
the domain of vectors of finite particle number follows from standard arguments
(see e.g.\ \cite[Sec.~VIII.10]{RS1}).

\subsection{Proof of the Wave Operator Identity}
We start with some preparations. 
First we will recall how to re-express the warped convolutions  
as convergent \(\HilbertSpace\)-valued integrals. This is done by using standard oscillatory
integral methods, similarly to the discussions from \cite{BLS10,DT11}. 

\begin{Lem}[oscillatory integral method] \label{lem:oscint}
  Let \(\RealNum^{2d} \ni (x,y) \longmapsto \Psi(x,y) \in \HilbertSpace\) be a
  map with uniformly bounded derivatives for all multi-indices~\(\beta\in
  \NaturalNum_0^{2d}\) up to order
  \(\abs{\beta} \leq 4d\).\footnote{Our strategy
    mostly follows \cite{BLS10}.  With our choice of \(D_\text{reg}\), integrability estimates
    are split into pairs \((x_j,y_j)\), \(1\leq j \leq d\), and thereby they
    become slightly more explicit. On the other hand, our smoothness
    requirements are not optimal. For a regularizing differential operator
    requiring only \(\abs{\beta} \leq
    d+1\) derivatives of \(\Psi(x,y)\), we refer to \cite{BLS10}.
  }
  Then we have
  \begin{align*}
    \lim_{\epsilon \to 0}
    \frac{1}{(2\pi)^{d}} \int \DInt[\sPone]x \; \DInt[\sPone]y \; \eta(\epsilon x,
    \epsilon y) \Ee^{- \Ii x \cdot y} \Psi(x,y)
    =
    \frac{1}{(2\pi)^{d}} \int \DInt[\sPone]x \; \DInt[\sPone]y \; \Ee^{- \Ii x \cdot y}
    D_{\text{reg}}(\partial_x, \partial_y) \Psi(x,y).  \label{eq:oscint}\numberthis
  \end{align*}
   Here \(D_{\text{\rm reg}}(\partial_x, \partial_y):= \prod_{j=0}^{s}
  D_j(\partial_{x_j},\partial_{y_j})^2 \) is a product of auxiliary,
  mutually commuting partial differential operators defined by
  \begin{align*}
    D_j(\partial_{x_j},\partial_{y_j}) \Phi(x,y) &:=
    (1 - \partial_{x_j}^2 - \partial_{y_j}^2) \frac{1}{1+y_j^2 + x_j^2} \;
    \Phi(x,y).
    \label{eq:mollj}\numberthis
  \end{align*}
 In particular, the limit \eqref{eq:oscint} exists and it is independent of the choice of the regularizing function \(\eta
    \in \SchwartzSpace(\RealNum^{\sPone} \times \RealNum^{\sPone})\) with
    \(\eta(0,0) = 1\).

  \proof The operators \(D_j\) are constructed such that their formal adjoints
  satisfy
  \[
    D_j^* \Ee^{\pm \Ii x_j y_j} =
    \frac{1}{1+y_j^2+x_j^2} (1 - \partial_{x_j}^2 - \partial_{y_j}^2) \Ee^{\pm \Ii x_j y_j} =
    \Ee^{\pm \Ii x_j y_j}. \numberthis
  \]
  Inserting this identity into the oscillatory integral with finite \(\epsilon >
  0\) twice for every \(1\leq j \leq d\), using integration by parts, and
  writing \(\eta_\epsilon(x,y) := \eta(\epsilon x, \epsilon y)\), we get
  \begin{align*}
    \fl
    \frac{1}{(2\pi)^{\sPone}}
        \int \DInt[\sPone]x \; \DInt[\sPone]y \; \Ee^{- \Ii x \cdot y}
        \eta_\epsilon(x, y) \Psi(x,y)
        \\&=
    \frac{1}{(2\pi)^{\sPone}}
        \int \DInt[\sPone]x \; \DInt[\sPone]y \; \Ee^{- \Ii x \cdot y}
        D_{\text{reg}}(\partial_x, \partial_y) \eta_\epsilon(x,y) \Psi(x,y).
        \numberthis
        \label{eq:intrewritten}
  \end{align*}
  In this form the \(\epsilon \to 0\) limit can now be carried out:
  it follows by explicit calculation and
  induction that
  \[
    \norm{D_\text{reg}(\partial_{x}, \partial_{y}) \Phi(x,y)} \leq
    \prod_{j=0}^{d-1}
    \frac{C}{(1+x_j^2+y_j^2)^2} \norm{\Phi}_{\mathscr C^{4d}(\RealNum^{2d},\HilbertSpace)},
    \label{eq:majest}\numberthis
  \]
  where the norm is defined by 
  \[
  \norm{\Phi}_{\mathscr C^{k}(\RealNum^{2d},\HilbertSpace)}
  := \sum_{\substack{\alpha,\beta \in \NaturalNum_0^d\\ \abs{\alpha}+
  \abs{\beta}\leq k}} 
    \sup_{x,y\in \RealNum^d} \norm{\partial_x^\alpha \partial_y^\beta \Phi(x,y)}_{\HilbertSpace}.
  \numberthis
\]
From \eqref{eq:majest} we now obtain an integrable majorant for
the rewritten integral~\eqref{eq:intrewritten}, by estimating 
\(
  \norm{\eta_\epsilon \Psi}_{\mathscr C^{4d}(\RealNum^{2d},\HilbertSpace)}
  \leq C
  \norm{\eta_\epsilon}_{\mathscr C^{4d}(\RealNum^{2d})}
  \norm{\Psi}_{\mathscr C^{4d}(\RealNum^{2d},\HilbertSpace)}
\)
and, for \(0 < \epsilon < 1\),
\( \norm{\eta_\epsilon}_{\mathscr C^{4d}(\RealNum^{2d})} 
\leq  \norm{\eta}_{\mathscr C^{4d}(\RealNum^{2d})} \)
 using the definition of \(\eta_\epsilon\) and the chain rule.
 Hence, by dominated convergence, it is sufficient to verify pointwise convergence
 of the integrand. To this end we write
  \[
      D_{\text{reg}}(\partial_x, \partial_y) \eta_\epsilon(x,y) \Psi(x,y)
      = \eta_\epsilon(x,y)    D_{\text{reg}}(\partial_x, \partial_y) \Psi(x,y) 
      + R_\epsilon(x,y). \numberthis
  \]
  In each term of the product rule expansion of the remainder~\(R_\epsilon\)
  there is at least one derivative with respect to \(x\) or \(y\) acting on
  \(\eta_\epsilon\). Thus this remainder is proportional to \(\epsilon\) and 
  vanishes for \(\epsilon \to 0\). The first term converges pointwise to
  \( D_{\text{reg}}(\partial_x, \partial_y) \Psi(x,y) \) and from this the
  claim~\eqref{eq:oscint} follows. 
  \qed
  \end{Lem}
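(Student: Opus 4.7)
The strategy is to convert the oscillatory regularization on the left-hand side of \eqref{eq:oscint} into an absolutely convergent integral by repeated integration by parts, and then pass to the limit via dominated convergence. The entire argument hinges on the design of the mollifying operators~\(D_j\), whose formal adjoints act as the identity on the oscillatory kernel: direct computation gives \((1-\partial_{x_j}^2 - \partial_{y_j}^2) \Ee^{-\Ii x_j y_j} = (1 + x_j^2 + y_j^2) \Ee^{-\Ii x_j y_j}\), and the prefactor \(1/(1+x_j^2+y_j^2)\) in \(D_j\) precisely cancels this growth, so that \(D_j^{\,*}\, \Ee^{-\Ii x \cdot y} = \Ee^{-\Ii x \cdot y}\).

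First I would insert \(D_j^{\,*\,2}\) into the regularized integrand for each \(j = 0,\dots,s\) and integrate by parts repeatedly, transferring the operators off the exponential onto the smooth factor \(\eta_\epsilon\Psi\), where \(\eta_\epsilon(x,y):= \eta(\epsilon x, \epsilon y)\). Schwartz decay of \(\eta_\epsilon\) together with the uniform boundedness of derivatives of \(\Psi\) up to order \(4d\) kills all boundary terms, producing
\[
 \frac{1}{(2\pi)^d}\!\int\! \DInt[d]x\, \DInt[d]y\, \eta_\epsilon(x,y)\, \Ee^{-\Ii x\cdot y}\Psi(x,y) = \frac{1}{(2\pi)^d}\!\int\! \DInt[d]x\, \DInt[d]y\, \Ee^{-\Ii x\cdot y} D_{\text{reg}}(\partial_x,\partial_y)\bigl[\eta_\epsilon(x,y)\Psi(x,y)\bigr].
\]

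Next I would establish an \(\epsilon\)-uniform integrable majorant for the rewritten integrand. After two applications per variable, the explicit \(1/(1+x_j^2+y_j^2)\) factors inside each \(D_j\) yield overall decay of order \(\prod_{j=0}^{s}(1+x_j^2+y_j^2)^{-2}\), which is integrable on \(\RealNum^{2d}\). The remaining derivatives in the Leibniz expansion land either on \(\eta_\epsilon\) or on \(\Psi\); the latter are controlled by hypothesis, and the former are uniformly bounded for \(\epsilon \in (0,1)\) since the chain rule gives \(\partial^\beta \eta_\epsilon = \epsilon^{|\beta|}(\partial^\beta \eta)(\epsilon x,\epsilon y)\), which is \(\Ord(1)\) by Schwartz decay of \(\eta\). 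With this majorant in place, dominated convergence reduces the problem to pointwise convergence: in the Leibniz expansion of \(D_{\text{reg}}[\eta_\epsilon \Psi]\), the unique term in which no derivative hits \(\eta_\epsilon\) converges to \(D_{\text{reg}}(\partial_x,\partial_y)\Psi\) because \(\eta_\epsilon \to 1\) (using \(\eta(0,0)=1\)), while every other term carries at least one factor of \(\epsilon\) from the chain rule and vanishes pointwise.

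The identity \eqref{eq:oscint} then follows, and independence of \(\eta\) is manifest from the form of the limit. The main obstacle I anticipate is the careful bookkeeping of the Leibniz expansion: assembling a single uniform majorant that dominates the rewritten integrand for all \(\epsilon \in (0,1)\) while tracking where each \(\partial_{x_j}\), \(\partial_{y_j}\) lands, and verifying that derivatives falling on \(\eta_\epsilon\) do not destroy the integrability furnished by the factors \((1+x_j^2+y_j^2)^{-2}\). Once that bound is in hand, the remainder is a routine oscillatory-integral manipulation.
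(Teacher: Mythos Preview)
Your proposal is correct and follows essentially the same approach as the paper: insert \(D_j^{*2}\) for each coordinate pair, integrate by parts to obtain the absolutely convergent form, establish the \(\epsilon\)-uniform majorant \(\prod_j(1+x_j^2+y_j^2)^{-2}\) via the Leibniz expansion and the chain-rule bound \(\|\eta_\epsilon\|_{\mathscr C^{4d}}\leq \|\eta\|_{\mathscr C^{4d}}\), then pass to the limit by dominated convergence using that every remainder term carries a factor of \(\epsilon\). The bookkeeping concern you flag is exactly what the paper packages into the inductive estimate \eqref{eq:majest}.
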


For our scattering theoretic purposes it is useful that this method
can be readily extended to obtain similar integral representations by
iterating \Cref{lem:oscint}. In this manner we can further strengthen the
decay of the integrand for large \(x,y \in \RealNum^d\).

\begin{Lem}
 \label{lem:oscint2}
  Let \(M\in \NaturalNum\) and
  let \(\RealNum^{2d} \ni (x,y) \longmapsto \Psi(x,y) \in \HilbertSpace\) be a
  map with uniformly bounded derivatives for all multi-indices~\(\beta\in
  \NaturalNum_0^{2d}\) up to order
  \(\abs{\beta} \leq 4Md\).
  Then we have
  \begin{align*}
    \lim_{\epsilon \to 0}
    \frac{1}{(2\pi)^{d}} \int \DInt[\sPone]x \; \DInt[\sPone]y \; \eta(\epsilon x,
    \epsilon y) \Ee^{- \Ii x \cdot y} \Psi(x,y)
    =
    \frac{1}{(2\pi)^{d}} \int \DInt[\sPone]x \; \DInt[\sPone]y \; \Ee^{- \Ii x \cdot y}
    D_{\text{\rm reg}}(\partial_x, \partial_y)^M \Psi(x,y).  \label{eq:oscint2}\numberthis
  \end{align*}
  The integrand satisfies the integrable norm bounds
  \begin{align*}
    \norm{D_\text{\rm reg}(\partial_{x}, \partial_{y})^M \Phi(x,y)} \leq
    \prod_{j=0}^{d-1}
    \frac{C_M}{(1+x_j^2+y_j^2)^{2M}} \norm{\Phi}_{\mathscr C^{4Md}(\RealNum^{2d},\HilbertSpace)}.
    \label{eq:majest2}\numberthis
  \end{align*}
  \end{Lem}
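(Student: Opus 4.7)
The plan is to iterate, $M$ times, the integration-by-parts trick used in the proof of Lemma~\ref{lem:oscint}. The base ingredient remains the identity $D_j^* \Ee^{-\Ii x_j y_j} = \Ee^{-\Ii x_j y_j}$, which makes each oscillation factor an eigenfunction of the corresponding formal adjoint $D_j^*$ with eigenvalue one. First I would insert this identity $2M$ times per coordinate $j \in \{0, \dots, d-1\}$ into the regularized integral and use integration by parts (no boundary terms appear since $\eta(\epsilon\cdot,\epsilon\cdot) \in \SchwartzSpace$); by mutual commutativity of the $D_j$'s this yields
\[
  \int \DInt[d]x \, \DInt[d]y \, \Ee^{-\Ii x \cdot y} \eta(\epsilon x,\epsilon y) \Psi(x,y)
  = \int \DInt[d]x \, \DInt[d]y \, \Ee^{-\Ii x \cdot y} D_{\text{reg}}^M(\eta_\epsilon \Psi)(x,y),
\]
which is the $M$-fold analogue of \eqref{eq:intrewritten} and consumes exactly $4Md$ derivatives of $\Psi$, matching the hypothesis.

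Next, I would establish the norm bound \eqref{eq:majest2} by induction on $M$, starting from \eqref{eq:majest}. The key structural feature is that $D_{\text{reg}}$ factorizes into mutually commuting pieces acting on disjoint coordinate pairs $(x_j, y_j)$, so the tensor-product decay $\prod_j (1+x_j^2+y_j^2)^{-2M}$ can be tracked pair-wise. In the inductive step, Leibniz expansion of $D_j[\Phi/(1+x_j^2+y_j^2)]$ shows that each application of $D_j$ contributes one extra factor of $(1+x_j^2+y_j^2)^{-1}$, with any derivatives landing on the denominator only further improving the polynomial decay; since $D_{\text{reg}}$ uses $D_j^2$, this yields the required two extra powers per coordinate per iteration.

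Finally, the $\epsilon \to 0$ limit proceeds exactly as in Lemma~\ref{lem:oscint}: decompose $D_{\text{reg}}^M(\eta_\epsilon \Psi) = \eta_\epsilon \cdot D_{\text{reg}}^M \Psi + R_\epsilon$, where every term in $R_\epsilon$ contains at least one derivative of $\eta_\epsilon$ and hence carries a factor of $\epsilon$, so $R_\epsilon \to 0$ pointwise. Combining the newly established bound \eqref{eq:majest2} with the uniform estimate $\|\eta(\epsilon\cdot,\epsilon\cdot)\|_{\mathscr C^{4Md}} \leq \|\eta\|_{\mathscr C^{4Md}}$ for $\epsilon \in (0,1)$ supplies an $\epsilon$-independent integrable majorant, and dominated convergence delivers \eqref{eq:oscint2}.

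The main obstacle is the combinatorial bookkeeping in the inductive verification of \eqref{eq:majest2}: one must check that iterated Leibniz expansion of $D_{\text{reg}}^M$ neither degrades the factorized polynomial decay nor exceeds the prescribed derivative count $4Md$. Aside from this accounting, no new conceptual ingredient beyond Lemma~\ref{lem:oscint} is required; the statement is essentially a quantitative strengthening of the oscillatory-integral principle already in play.
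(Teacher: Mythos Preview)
Your proposal is correct and matches the paper's approach: the paper does not give an explicit proof of this lemma but simply remarks that it follows by iterating \Cref{lem:oscint}, which is exactly what you carry out in detail. Your inductive bookkeeping for \eqref{eq:majest2} and the dominated-convergence argument are the natural way to make that iteration precise.
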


  Next, we use the convergent integral representation to check that warped
  convolutions can be exchanged with the smearing operations used to define the
  creation-operator approximants in the deformed model.
  \begin{Lem} \label{lem:wcexch}
    Let \(\Psi \in \FiniteEnergyStates\), \(A \in \Alg^{0r}(\Ww)\),
    \(\chi \in \SchwartzSpace(\RealNum^d)\), and let \(f\) be a regular
    Klein-Gordon solution. Then
    \[
      B_Q\Psi := A_{Q_\Ww}(\chi)\Psi = (A(\chi))_{Q_\Ww}\Psi,
      \label{eq:chiwarp}
      \numberthis
    \]
    and
    \[
      B_{Q\T}(f)\Psi = (B_\T(f))_{Q_\Ww} \Psi.
      \label{eq:kgwarp}
      \numberthis
    \]
    \proof 
    We can apply \Cref{lem:oscint} to warped convolutions, as the corresponding
    integrand \(\Psi(x,y) = \alpha_{Qx}(A) U(y) \Psi\) is arbitrarily often
    differentiable for \(A \in \Cinfty\) and \(\Psi \in \FiniteEnergyStates\),
    with \(\norm{\partial_x^\alpha \partial_y^\beta \Psi(x,y)} \leq
      C_{\alpha\beta}\) for suitable constant depending on the multi-indices
      \(\alpha, \beta \in \NaturalNum_0^d\).
    By translation covariance of warped convolutions from \Cref{thm:wc} \itref{it:deformcov}
    we have
    \begin{align*}
      A_{Q_\Ww}(\chi) \Psi &= 
      \int \DInt[d]z \, \chi(z) \alpha_z(A_{Q_\Ww}) \Psi 
      =
      \int \DInt[d]z \, \chi(z) (\alpha_zA)_{Q_\Ww} \Psi 
      \\&= \frac 1 {(2\pi)^d} \;
    \int \DInt[d]z \, \chi(z)
    \int \DInt[d]x \, \DInt[d]y \;\Ee^{- \Ii x \cdot y}
    D_{\text{reg}}(\partial_x, \partial_y) U(x) \alpha_{Qy+z}(A)  \Psi.
    \numberthis
    \end{align*}
    Using the decay of \(\chi \in \SchwartzSpace(\RealNum^d)\) and 
    estimate \eqref{eq:majest}, the above integrand has integrable norm
    with respect to the product Lebesgue measure \(\DInt[3d](x,y,z)\). Hence the order of integrations
  can be exchanged by Fubini's theorem. Taking the \(x\)- and \(y\)-dependent
  translation operators outside the inner strong integral, we obtain
  \eqref{eq:chiwarp}. The proof of \eqref{eq:kgwarp} is analogous.
  \qed
  \end{Lem}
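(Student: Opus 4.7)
The plan is to reduce both identities to Fubini-type exchanges of integration order applied to absolutely convergent $\HilbertSpace$-valued integrals, obtained by replacing the oscillatory limit in the definition of warped convolution by the convergent representation of \Cref{lem:oscint}. First I would verify that \Cref{lem:oscint} applies to the relevant integrands: the map $(x,y)\mapsto U(x)\,\alpha_{Qy}(A)\Psi$ is $C^\infty$ in $\HilbertSpace$-norm with uniformly bounded derivatives of every order, since $A \in \Alg^{0r} \subset \Cinfty$ gives operator-norm smoothness of $y \mapsto \alpha_{Qy}(A)$, and $\Psi \in \FiniteEnergyStates$ has compactly supported joint energy-momentum spectral measure, forcing uniform derivative bounds on $x \mapsto U(x)\Psi$ by spectral calculus on a bounded set.

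For the first identity, I would start from the Bochner integral
\[
A_{Q_\Ww}(\chi)\Psi = \int d^d z\, \chi(z)\, \alpha_z(A_{Q_\Ww})\Psi,
\]
apply translation covariance of warping from \Cref{thm:wc}\itref{it:deformcov} to rewrite $\alpha_z(A_{Q_\Ww}) = (\alpha_z A)_{Q_\Ww}$, and then substitute the convergent integral representation of \Cref{lem:oscint} for each $(\alpha_z A)_{Q_\Ww}\Psi$. This produces a triple integral over $(x,y,z) \in \RealNum^{3d}$ admitting the integrable majorant $|\chi(z)| \cdot \prod_{j=0}^{d-1}(1+x_j^2+y_j^2)^{-2}$, obtained by combining the Schwartz decay of $\chi$ with the pointwise bound \eqref{eq:majest}; the latter is uniform in $z$ because translations act isometrically on $\HilbertSpace$. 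Fubini then licenses exchanging the $z$- and $(x,y)$-integrals, and performing the $z$-integration first, together with the cocycle identity $\alpha_{Qy+z}(A) = \alpha_{Qy}(\alpha_z A)$, turns the inner piece into $\alpha_{Qy}(A(\chi))\Psi$. Running \Cref{lem:oscint} in reverse then identifies the remaining $(x,y)$-integral as $(A(\chi))_{Q_\Ww}\Psi$, which is the claim.

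The second identity follows by the same argument with $\chi$ replaced by the Klein-Gordon smearing function $\vec z \mapsto f(\Lambda(\T, \vec z))$, which for a regular wave packet $\tilde f \in \ContinuousFuncs^\infty_c$ is Schwartz in $\vec z$ (being, after the affine change of variables induced by $\Lambda$, the inverse Fourier transform of a compactly supported smooth function). The main technical obstacle I foresee is verifying the integrable majorant cleanly, since one must check that the polynomial decay from $D_{\text{reg}}$ truly dominates the smeared integrand across all variables; should the baseline decay $(1+x_j^2+y_j^2)^{-2}$ prove borderline after inclusion of the extra $z$-integral, I would invoke the strengthened \Cref{lem:oscint2} with $M$ chosen sufficiently large to supply arbitrary polynomial decay $(1+x_j^2+y_j^2)^{-2M}$, definitively securing the absolute integrability needed to apply Fubini.
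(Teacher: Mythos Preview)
Your proposal is correct and follows essentially the same route as the paper: apply translation covariance of warping, substitute the convergent oscillatory-integral representation from \Cref{lem:oscint}, obtain an integrable majorant from \eqref{eq:majest} together with the Schwartz decay of the smearing function, and invoke Fubini to swap the $z$-integral inside. Your contingency invocation of \Cref{lem:oscint2} is unnecessary here---the baseline decay $(1+x_j^2+y_j^2)^{-2}$ is already integrable over $\RealNum^2$ for each $j$, so the product majorant is honestly $L^1(\RealNum^{3d})$---but it does no harm.
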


  \begin{Def}
    For
    \(
      \Psi, \Psi' \in \Gamma^u(\HilbertSpace_1)
    \)
    and \(Q \in \ComplexNum^{d \times d}\)
    let
    \[
      \Psi \otimes_Q \Psi' :=  \Ee^{\Ii P_1 \cdot Q P_2} \Psi \otimes \Psi'.
      \numberthis \label{eq:defTensor}
    \]
    Here, \(P_1 = P \otimes \Id\) is the energy-momentum operator acting on the 
    first argument only, and similarly \(P_2 = \Id
      \otimes P \) acts on the second argument.%
  \end{Def}

  Before we continue let us remark that this deformed tensor product preserves the
  ordered subspaces in the sense that if \(\Psi, \Psi' \in
    \Gamma^\succ(\HilbertSpace_1)\), with \(\Psi \succ \Psi'\), then also
    \( \Psi \otimes_Q \Psi' \in \Gamma^\succ(\HilbertSpace_1) \).
  We cautiously note that this deformed tensor product is clearly
  linear in its arguments and associative, but not commutative (unless \(Q =
  0\)).
  This is somewhat reminiscent of the general structure of the
  Zamolodchikov-Faddeev relations in integrable models, in contrast to canonical
  commutation relations (see e.g.~\cite{Le03}).
  Further, \(\otimes_Q\) is in general not
  mixed-associative in combination with ordinary tensor products. That is,
  \( \Psi_1 \otimes_Q (\Psi_2 \otimes \Psi_3)
    \not = (\Psi_1 \otimes_Q \Psi_2) \otimes \Psi_3 \).
  We also note that the definition is consistent with the fact that on Fock
  spaces \(\Psi \otimes \Omega = \Psi = \Omega \otimes \Psi\) are identified for
  any \(\Psi\).

  We can now establish the main lemma for proving the wave operator identity from \Cref{thm:defw}.
  \begin{Lem} \label{lem:deformconv} 
    Let \(\Psi_1^k = \Bb_{k\T}(f_k)\Omega = \Bb_{k\T}^\perp(f_k)\Omega\), \(1
    \leq k \leq n\) be
    swappable one-particle states, s.t.~\(A_k \in \Alg^{0r}(\Ww), A^\perp_k \in
      \Alg^{0r}(\Ww^\perp)\) are regular (\(1\leq k \leq n\)), and \(B_k =
    A_k(\chi)\) where \(\chi\in \SchwartzSpace(\RealNum^d)\) is an admissible
    Haag-Ruelle auxiliary function.  For ordered velocity supports~\(\VS_1
    \succ \ldots \succ \VS_n\) we have
   \[
     \lim_{\T\to\infty}
   \BB_{1Q\T}(f_1) \BB_{2Q\T}(f_2) \ldots \BB_{nQ\T}(f_n)
     \Omega
     =
        \WO^{+}_{0,\Ww} 
      \Psi_1^1 \otimes_{Q_\Ww} \ldots \otimes_{Q_\Ww} \Psi_1^n.
        \label{eq:defidout}
        \numberthis
   \]
   With incoming ordering~\(\VS_1 \prec \ldots \prec \VS_n\)  
   we have analogously for the incoming limit \(\T \to -\infty\)
 \[
     \lim_{\T\to-\infty}
   \BB_{1Q\T}(f_1) \BB_{2Q\T}(f_2) \ldots \BB_{nQ\T}(f_n)
     \Omega
     =
        \WO^{-}_{0,\Ww} 
      \Psi_1^1 \otimes_{Q_\Ww} \ldots \otimes_{Q_\Ww} \Psi_1^n.
        \label{eq:defidin}
        \numberthis
   \]
  \proof 
  We consider only the case \(\T\to\infty\). We first note that \(\Psi_{k\T}  := 
  \BB_{kQ\T}(f_k)  \ldots \BB_{nQ\T}(f_n) \Omega
    \in \FiniteEnergyStates\)
    for \(1 \leq k \leq n\),
  due to the compact energy-momentum transfer of the \(\BB_{jQ\T}(f_j)\), \(1
  \leq j \leq n\)
  (see \cite{MD17a} Lemma~7).
  Thus Lemmas~\ref{lem:wcexch} and \ref{lem:oscint} apply and we obtain
  \begin{align*}
    \Psi_{\T} &:= \Psi_{1\T} =  \Bb_{1\T}(f_1)_{Q_\Ww}
    \BB_{2\T}(f_2)_{Q_\Ww} \ldots \BB_{n\T}(f_n)_{Q_\Ww}\Omega
   \\&=
     \int \frac{\DInt[\sPone]x_1 \; \DInt[\sPone]y_1 \ldots
       \DInt[\sPone]x_{n} \; \DInt[\sPone]y_{n} 
   }{(2\pi)^{n \sPone}} \;
   \prod_{j=1}^{n}
   ( \Ee^{- \Ii x_j \cdot y_j}
   D_{\text{reg}}(\partial_{x_j}, \partial_{y_j})^M)
   \\&  \qquad\qquad
     U(x_1)\alpha_{Qy_1}(\Bb_{1\T}(f_1))
     \ldots
   U(x_{n})\alpha_{Qy_{n}}(\Bb_{n\T}(f_n)) \Omega,
     \, 
       \numberthis
       \label{eq:convrepn}
  \end{align*}
  where the constant \(M \in \NaturalNum\) will be chosen below.
  We rewrite the vector part of the integrand 
using that
 \(
     U(x_1)\alpha_{Qy_1}(\Bb_{1\T}(f_1))
     =
     \alpha_{Qy_1+x_1}(\Bb_{1\T}(f_1))
     U(x_1)
  \)
  as
  \begin{align*}
    \Psi_{\uvec x, \uvec y, \T} &:=
     U(x_1)\alpha_{Qy_1}(\Bb_{1\T}(f_1))
     \ldots
   U(x_{n})\alpha_{Qy_{n}}(\Bb_{n\T}(f_n)) \Omega
   \\&=
     \alpha_{Qy_1+x_1}(\Bb_{1\T}(f_1))
     \alpha_{Qy_2+x_1+x_2}(\Bb_{2\T}(f_2))
     \ldots
   \alpha_{Qy_{n} +x_1 +\ldots +x_n}(\Bb_{n\T}(f_n)) \Omega,
   \numberthis
  \end{align*}
  with \(\uvec x = (x_1, \ldots, x_n)\), \(\uvec y=(y_1, \ldots, y_n) \in
  \RealNum^{nd}\), and we abbreviate \(Q:=Q_\Ww\).

  Now we split the integration in \eqref{eq:convrepn} into an integral over the region
  \[R_{\rho\T}^\uparrow: = \{(x_1, \ldots, x_n,y_1, \ldots, y_n) \in
      \RealNum^{2nd} : |x_k^0| + |\vec x_k| \leq
  \rho \T, |y_k^0|+|\vec y_k| \leq \rho \T\}
  \numberthis 
\] and its complement, where the constant \(\rho > 0\) is for now
  fixed but arbitrary, and will be specified later. The vector resulting from the integral over
  \(R_{\rho\T}^\uparrow\) will be denoted \(\Psi_{1\T}^\uparrow\).
  By means of applying~\eqref{eq:majest2} iteratively for all pairs
  \((x_k,y_k)\), \(1\leq k \leq n\), we verify that with an appropriately strong
  power \(M \in \NaturalNum\) of the regularizing operator \(D_{\text{reg}}\),
  the remainder integral over the complement \(R_{\rho\T}^\downarrow : =
  \RealNum^{2nd}\setminus R_{\rho\T}^\uparrow\) becomes small for large~\(\T\).
  That is,
\begin{align*}
  \norm{\Psi_\T^\downarrow} &:= 
  \norm{\Psi_\T - \Psi_\T^\uparrow}
  \\&\leq
    \int_{R_{\rho\T}^\downarrow}
    \frac{\DInt[n\sPone]\uvec x \; \DInt[n\sPone]\uvec y}{(2\pi)^{n\sPone}} 
    \;
  \left\|
    \left(
    \prod_{k=1}^{n}
   D_{\text{reg}}(\partial_{x_k}, \partial_{y_k})^M
 \right)
   \Psi_{\uvec x, \uvec y, \T}
 \right\|
   \numberthis\label{eq:beforeiterativeest}
   \\&\leq
    \int_{R_{\rho\T}^\downarrow}
    \frac{\DInt[n\sPone]\uvec x \; \DInt[n\sPone]\uvec y}{(2\pi)^{n\sPone}} 
    \left(
   \prod_{k=1}^{n}
   \prod_{j=0}^{d-1}
 \frac{1}{(1+x_{k,j}^2 +y_{k,j}^2)^{2M}}\right) C   (1+|\T|^{ns/2}).
   \numberthis\label{eq:iterativeest}
\end{align*}
In the last step we estimated
  \begin{align*}
  \fl 
    \norm
    { D_{\text{reg}}(\partial_{x_1}, \partial_{y_1})^M
      \cdots D_{\text{reg}}(\partial_{x_n}, \partial_{y_n})^M
  \Psi_{\uvec x, \uvec y, \T} }
  \\
  &\leq C
 \prod_{k=1}^{n} \prod_{j=0}^{d-1} \frac{ 1  }{(1+x_{k,j}^2 +y_{k,j}^2)^{2M}} 
 \norm{\Psi_{\uvec x, \uvec y, \T} }_{{\mathscr C}^{4Mnd}
 (\RealNum^{2nd},\HilbertSpace)}.
 \numberthis \label{eq:regnormest}
  \end{align*}
The derivative norm of the vector part was then bounded by expanding it
into individual differentiated terms \(\partial_{\uvec x}^{\uvec \alpha}\partial_{\uvec
y}^{\uvec \beta} \Psi_{\uvec x, \uvec y, \T}\), which in turn can be expanded by
the product rule into terms with differential operators acting on the translated
creation-operator approximants \(B_{k\T}(f_k)\). Due to the assumed
norm differentiability of \(A_k\), the differentiated \(B_{k\T}(f_k)\) can be
rewritten as creation operator approximants constructed using
the differentiated~\(A_k\). These will be denoted here by \(\tilde B_{k\T}(f_k)\)
and \(\tilde A_k\), respectively. Using this rewriting, the terms from the expanded
differentiated vector parts from \eqref{eq:beforeiterativeest} can each be
bounded using the standard estimate \(\normm{\tilde B_{k\T}(f_k)} \leq C_{\chi,
  f_k} \normm{\tilde A_k} (1+\abs{\T}^{s/2})\)
to obtain
\(\normm{\alpha_{z_1}(\tilde B_{1\T}(f_1)) \ldots \alpha_{z_n}(\tilde
  B_{n\T}(f_n))\Omega} \leq \prod_{k=1}^n \normm{\tilde B_{k\T}(f_k)} \leq C
      (1+\abs{\T}^{ns/2})
\).  Taking all these terms together we obtain the last step from
\eqref{eq:iterativeest},
where the new constant \(C\) depends on~\(M\), on all wave packages and on the
norms of derivatives of the \(A_k\) operators up to order \(4Mnd\).
To obtain the decay estimate for sufficiently large \(\T\) we proceed to estimate
\begin{align*}
  \norm{\Psi_\T^\downarrow} & 
  \leq C \abs{\T}^{ns/2} \left( \sup_{(\uvec x, \uvec y) \in R_{\rho\T}^\downarrow}
   \prod_{k=1}^{n} \prod_{j=0}^{d-1} \frac{ 1 }{(1+x_{k,j}^2 +y_{k,j}^2)^{2M-2}} \right)
  \prod_{k=1}^{n} \prod_{j=0}^{d-1} \; \int\limits_{\RealNum^2}
   \frac{\DInt x_{k,j} \DInt y_{k,j}}{(1+x_{k,j}^2 +y_{k,j}^2)^{2}}.
   \numberthis
\end{align*}
Here constant factors such as the convergent integrals can be absorbed into the constant \(C\).
To bound the supremum we note that by definition of \(R_{\rho\T}^\uparrow\) we
have for any point \((\uvec x, \uvec y)\) in the complement
\(\RealNum^{2nd}\setminus R_{\rho\T}^\uparrow\)
at least one vector with \(|x_{k^*}^0|+ |\vec x_{k^*}| > \rho\T \) or
\(|y_{k^*}^0|+|\vec y_{k^*}|>\rho\T\).
This in turn implies that at least one coordinate \(|x_{k^*,j^*}|\) or
\(|y_{k^*,j^*}|\),
respectively, is larger than \( \rho\T/\sqrt{4s}\). As all other factors in the
supremum are bounded from above by one, we obtain
\begin{align*}
  \norm{\Psi_\T^\downarrow} &
 \leq C |\T|^{ns/2+ 4-4M}. \numberthis 
 \label{eq:coarsebound}
\end{align*}
Choosing \(M \in \NaturalNum\) large enough, this contribution becomes
arbitrarily small for large \(\T>0\).

In particular,
for establishing \eqref{eq:defidout}
it is sufficient to consider the limit of \(\Psi^\uparrow_\T\). To address this
convergence, let \(\tilde \rho> 0\) denote the minimum over all such constants from
\Cref{lem:unifbound} for the families of \((\tilde B_{j\T}(f_j))_{1\leq j
    \leq n}\), where \(\tilde B_{j\T}(f_j) = \partial_{x_j}^{\beta_j}
  \alpha_{x_j} (B_{j\T}(f_j))\big|_{x_j=0} \) stand for all possible combinations of
  derivatives of these operators up to order \(|\beta_j| \leq 4Mnd\), \(\beta_j
  \in\NaturalNum_0^d\). It is used here that 
  \( \partial_{x_j}^{\beta_j} \alpha_{x_j}(B_{j\T}(f_j)) \Omega =
    \partial_{x_j}^{\beta_j}
  \alpha_{x_j}(B_{j\T}^\perp(f_j)) \Omega \) for all multi-indices
  \(\beta_j\in\NaturalNum_0^d\),
  which shows by setting \(x_j=0\) that all such \(\tilde B_{j\T}(f_j)\Omega\) are
swappable, as needed to apply \Cref{lem:unifbound} to these differentiated families.
Further the derivatives with respect to \(y_k\) can also be written as
multiples of such differentiated operators by using the chain rule. Thus we set \(\rho :=
  \tilde \rho \cdot (1+\|Q\|)^{-1}/(n+1) < \infty \). With this choice we have
  for any \(1\leq k \leq n\) and \((\uvec x, \uvec
y) \in R^\uparrow_{\rho \T}\) that the vectors 
\(
  z_k := Qy_k + x_1+ \ldots +x_k
\)
are contained in the double cone of radius \(k\rho\abs \T + \|Q\| \rho \T \leq
\tilde \rho \abs{\T} \)
for \(1\leq k \leq n\).
Considering the asymptotically dominant part
\begin{align*}
 \Psi_{\T}^\uparrow &:=
  \int_{R_{\rho\T}^\uparrow} \frac{
       \DInt[n\sPone]\uvec x \; \DInt[n\sPone]\uvec y
   }{(2\pi)^{n \sPone}} \;
   \left(
   \prod_{j=1}^{n}
    \Ee^{- \Ii x_j \cdot y_j}
   D_{\text{reg}}(\partial_{x_j}, \partial_{y_j})^M\right)
  \Psi_{\uvec x, \uvec y, \T},
  \numberthis
  \end{align*}
  we can obtain a \(\T\)-uniform integrable bound by applying
  \eqref{eq:regnormest} and noting that instead of the coarse bound
  \eqref{eq:coarsebound} we can now estimate
  \begin{align*}
    \CharFct_{R_{\rho\T}^\uparrow}(\uvec x, \uvec y)
    \norm{\Psi_{\uvec x, \uvec y, \T} }_{{\mathscr
    C}^{4Mnd}(\RealNum^{2nd},\HilbertSpace)} \leq C.
    \numberthis
  \end{align*}
  Here we make use of \Cref{lem:unifbound}, noting that this uniform bound applies to each of the terms resulting
  from the expansion \[
    \norm{\Psi_{\uvec x, \uvec y, \T} }_{{\mathscr C}^{4Mnd} (\RealNum^{2nd},\HilbertSpace)}
    = \sum_{\substack{\uvec \alpha, \uvec \beta \in \NaturalNum_0^{nd},
        \\\abs{\uvec \alpha}+\abs{\uvec \beta}\leq 4Mnd}}\normm{\partial_{\uvec x}^{\uvec \alpha} \partial_{\uvec
  y}^{\uvec \beta}\Psi_{\uvec x, \uvec y, \T} }
  \numberthis
  \]
 by our choice of \(\rho\).
Due to the restriction to \(R_{\rho\T}^\uparrow\) we can estimate the 
differentiable vector norm by means of \Cref{lem:unifbound}, after expanding it
into individual terms with fixed derivatives acting on the
creation-operators~\(\tilde B_{k\T}(f_k)\). As before these can be written as Haag-Ruelle-type
operators in terms of the differentiated \(A_k\), so that \Cref{lem:unifbound}
applies. 
To summarize we note that the proof strategy here is in fact analogous to the
above method used for the outside region \(R_{\rho\T}^\downarrow\). However in
  \(R_{\rho\T}^\uparrow\) the use of the clustering bound of
  \Cref{lem:unifbound} is geometrically permitted and yields the much stronger
  \(\T\)-uniform estimate on the vector part.

  By dominated convergence we obtain
  \begin{align*}
    \lim_{\T\to\infty} \Psi_{\T} & \overset{\eqref{eq:coarsebound}}=
    \lim_{\T\to\infty} \Psi_{\T}^\uparrow=
\int \frac{
       \DInt[n\sPone]\uvec x \; \DInt[n\sPone]\uvec y
   }{(2\pi)^{n \sPone}} \;
   \lim_{\T\to\infty}
   \prod_{j=1}^{n}
   ( \Ee^{- \Ii x_j \cdot y_j}
   D_{\text{reg}}(\partial_{x_j}, \partial_{y_j})^M)
  \Psi_{\uvec x, \uvec y, \T},
  \numberthis\label{eq:limexch}
  \end{align*}
  where we already used that the characteristic function
  \( \CharFct_{R_{\rho\T}^\uparrow}(\uvec x, \uvec y) \to 1 \) 
  pointwise for \(\T \to \infty\).
  Here the Haag-Ruelle limit can be exchanged with the regularizing differential
  operators by explicit computation: we expand everything into differentiated
  translated Haag-Ruelle operators, for which the right hand side of
  \eqref{eq:limexch} converges to the scattering state generated by the
  differentiated and translated operators, which are again Haag-Ruelle type
  creation-operator approximants. Collecting the Haag-Ruelle limits again after
  performing them, we have by linearity of the wave operator
  \begin{align*}
    \fl[2em]
    \lim_{\T \to \infty}
    \left(
   \prod_{j=1}^{n}
   D_{\text{reg}}(\partial_{x_j}, \partial_{y_j})^M
 \right)
  \Psi_{\uvec x, \uvec y, \T}
  \numberthis
  \\& = \WO^+_{0,\Ww}
  \left(
   \prod_{j=1}^{n}
   D_{\text{reg}}(\partial_{x_j}, \partial_{y_j})^M 
 \right)
     (U(Qy_1+x_1)\Psi_1^1)
     \otimes 
     (U(Qy_2+x_1+x_2)\Psi_1^2)
     \otimes
     \ldots
   \\&
   \hspace{24em} \otimes
     ( U(Qy_{n} +x_1 +\ldots +x_n)\Psi_1^n)
     \\ &
 = \WO^+_{0,\Ww}
 \left(
   \prod_{j=1}^{n}
   D_{\text{reg}}(\partial_{x_j}, \partial_{y_j})^M 
 \right)
   U(x_1)
   \left\{ \bigg.
     (U(Qy_1)\Psi_1^1)
     \otimes 
     U(x_2)
       \left\{
         \big.
         (U(Qy_2)\Psi_1^2)
     \otimes
     \ldots
     \right.\right.
     \\&
     \left.\left.
         \left.
   \hspace{24em}
     \otimes
     U(x_n)\left\{( U(Qy_{n})\Psi_1^n)
 \right\}\right.\ldots\big. \right\}\bigg. \right\}.
  \end{align*}
  In the last equality we have written the one-particle state translations
  again in groups using the second quantized translations on the unordered Fock space.  In this form
  we can now apply the warped-convolution-type integral representation of the
  deformed tensor product (\Cref{lem:deftprep}).
  For this purpose we note that the wave operator is bounded on the range of the
  integrand and can therefore be taken outside the strong integral. Thus the
  introduction of the
  regularizing differential operators can be undone by iterative application of
  \Cref{lem:oscint2} and we obtain
  \begin{align*}
  \lim_{\T \to \infty} \Psi_\T &= 
    \WO^+_{0,\Ww} \left(
\lim_{\epsilon_1, \ldots, \epsilon_n \to 0}
\int \frac{ \DInt{\uvec x} \, \DInt{\uvec y}}{(2\pi)^{nd}}
\left(
\prod_{j=1}^n  \eta(\epsilon_j x_j, \epsilon_j y_j) \Ee^{-\Ii x_j \cdot y_j}
\right)
\right.
\\&\left.
  U(x_1) \left\{ (U(Qy_1)\Psi_1^1) \otimes U(x_2) \left\{(U(Qy_2)\Psi_1^2)
      \otimes \ldots \otimes U(x_n)\left\{( U(Qy_{n})\Psi_1^n)
        \bigg.
\right\} \right. \right\} \bigg. \right).
\numberthis
  \label{eq:finwoproof}
  \end{align*}
Regrouping the convergent integrals, using Fubini and continuity of the tensor
product, and applying \Cref{lem:deftprep} iteratively, we obtain deformed tensor
products, as claimed in~\eqref{eq:defidout}. We note that after replacing all
tensor products by \(Q\)-deformed tensor products the right-associative grouping
from \eqref{eq:finwoproof} becomes inessential and can be dropped.
The proof of the statement for the incoming limit \(\T\to-\infty\) is analogous.  \qed

 \end{Lem}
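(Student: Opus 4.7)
The overall plan is to use the convergent integral representation of warped convolutions from \Cref{lem:oscint2} to turn \(\BB_{1Q\T}(f_1)\cdots\BB_{nQ\T}(f_n)\Omega\) into a phase-space integral over \(n\)-fold products of translated undeformed Haag-Ruelle approximants, to control the resulting integral by combining the decay of the regularizing differential operator \(D_{\text{reg}}\) on an outer region with a \(\T\)-uniform clustering bound on an inner region, and finally to identify the limit as the undeformed ordered wave operator applied to the \(Q_\Ww\)-deformed tensor product.

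First I would use \Cref{lem:wcexch} to exchange the warping with the Haag-Ruelle smearings on \(\FiniteEnergyStates\), so that each \(\BB_{kQ\T}(f_k) = (\BB_{k\T}(f_k))_{Q_\Ww}\), and then expand every warped convolution via \Cref{lem:oscint2} with a large regularization power \(M\in\NaturalNum\) to be fixed below. Composing the successive translations, the approximant takes the form
\[
\Psi_\T = \int \frac{\DInt[nd]{\uvec x}\,\DInt[nd]{\uvec y}}{(2\pi)^{nd}}\prod_{j=1}^n \Ee^{-\Ii x_j\cdot y_j}\, D_{\text{reg}}(\partial_{x_j},\partial_{y_j})^M \;\alpha_{z_1}(\BB_{1\T}(f_1))\cdots \alpha_{z_n}(\BB_{n\T}(f_n))\Omega,
\]
where \(z_k := Q_\Ww y_k + x_1+\cdots+x_k\). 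I would split the integration domain into an inner cube \(R_{\rho\T}^\uparrow = \{(\uvec x,\uvec y)\colon |x_k|,|y_k|\le\rho\T\;\forall k\}\) and its complement \(R_{\rho\T}^\downarrow\), with \(\rho>0\) to be chosen later.

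On \(R_{\rho\T}^\downarrow\), the iterative pointwise bound on \(D_{\text{reg}}^M\) from \Cref{lem:oscint2} combined with the elementary estimate \(\|\BB_{k\T}(f_k)\|\le C(1+|\T|^{s/2})\), applied also to the differentiated operators \(\partial^\beta A_k\) (which remain regular by assumption), yields a crude bound of the form \(\|\Psi_\T^\downarrow\|\le C|\T|^{ns/2+4-4M}\), which becomes negligible once \(M\) is large enough. On the inner region any translation satisfies \(|z_k|\le(n+\|Q_\Ww\|)\rho|\T|\); choosing \(\rho\) small enough relative to the admissible shift radius provided by \Cref{lem:unifbound}, the velocity ordering \(\VS_{f_1}\succ\cdots\succ\VS_{f_n}\) is preserved under all allowed shifts, and \Cref{lem:unifbound} supplies a \(\T\)-uniform bound on the vector-valued integrand. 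Dominated convergence then permits exchanging the \(\T\to\infty\) limit with the integration, and for fixed \((\uvec x,\uvec y)\) the wedge-local Haag-Ruelle theorem \Cref{thm:hrwGen} applies fibrewise (after expanding \(D_{\text{reg}}^M\) through the product rule onto individual higher-derivative Haag-Ruelle operators) to give
\[
\lim_{\T\to\infty} \alpha_{z_1}(\BB_{1\T}(f_1))\cdots\alpha_{z_n}(\BB_{n\T}(f_n))\Omega = \WO^+_{0,\Ww}\bigl(U(z_1)\Psi_1^1\otimes\cdots\otimes U(z_n)\Psi_1^n\bigr).
\]

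Finally, since \(\WO^+_{0,\Ww}\) is a bounded linear map on the range of the integrand, I would pull it outside the integral and undo the oscillatory integral representation by running \Cref{lem:oscint2} in reverse. Regrouping the successive translations via second-quantized translations on \(\FS^u(\HilbertSpace_1)\) and applying the warped-convolution-type integral representation of \(\otimes_Q\) furnished by \Cref{lem:deftprep} iteratively, the remaining iterated integrals assemble exactly into the deformed tensor product \(\otimes_{Q_\Ww}\), yielding the identity in \eqref{eq:defidout}. The incoming case \(\T\to-\infty\) is completely analogous after reversing the ordering. I expect the main obstacle to be the calibration of \(\rho\) against the tolerance supplied by \Cref{lem:unifbound}: the inner cube must be small enough that every composite shift \(z_k = Q_\Ww y_k + x_1+\cdots+x_k\) stays within the range where the clustering/uniform estimates are valid, yet strictly positive so that \(R_{\rho\T}^\uparrow\) exhausts \(\RealNum^{2nd}\) in the limit. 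A secondary technical point is that the product-rule expansion of \(D_{\text{reg}}^M\) produces Haag-Ruelle operators built from derivatives \(\partial^\beta A_k\) of order up to \(\sim 4Mnd\), all of which must remain in \(\Alg^{0r}(\Ww)\) and admit the clustering estimate uniformly — this is precisely what the regularity assumption \(A_k\in\Alg^{0r}(\Ww)\) ensures.
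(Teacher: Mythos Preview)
Your proposal is correct and follows essentially the same route as the paper: the oscillatory-integral expansion via \Cref{lem:wcexch} and \Cref{lem:oscint2}, the inner/outer splitting at radius \(\rho\T\), the crude \(|\T|^{ns/2+4-4M}\) bound on the outer region versus the \(\T\)-uniform bound from \Cref{lem:unifbound} on the inner region, dominated convergence, the fibrewise Haag-Ruelle limit, and the iterated application of \Cref{lem:deftprep} to assemble the \(\otimes_{Q_\Ww}\)-product. You have also correctly identified the two delicate points (calibration of \(\rho\) so that each \(z_k\) falls within the admissible double cone of \Cref{lem:unifbound}, and the need for the differentiated operators \(\partial^\beta A_k\) to remain swappable Haag-Ruelle data), which the paper handles by setting \(\rho = \tilde\rho(1+\|Q\|)^{-1}/(n+1)\) and by noting \(\partial^\beta_{x}\alpha_x(B_{k\T}(f_k))\Omega = \partial^\beta_{x}\alpha_x(B^\perp_{k\T}(f_k))\Omega\).
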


We used the following auxiliary result concerning the norm of scattering-state
approximants involving certain restricted translations of each operator.
\begin{Lem} \label{lem:unifbound}
  For any family  of
  swappable
  operators  \(B_{k\T}(f_k)\), \(1\leq k \leq n\), 
  with regular Klein-Gordon wave packets satisfying the outgoing ordering
  \(
    \VS_1 \succ \ldots \succ \VS_n
  \) 
  there exists a constant \(\rho > 0\) such that
  for all \(\T \geq 0\) and \(x_1, \ldots, x_n \in \DoubleCone_{\rho|\T|}\) 
  \begin{align*}
    \norm{\alpha_{x_1}(B_{1\T}(f_1)) \ldots \alpha_{x_n}(B_{n\T}(f_n)) \Omega } \leq
    C,
    \label{eq:normbound}\numberthis
  \end{align*}
  where  \(\DoubleCone_r := \{ x = (x^0, \vec x) \in \RealNum^d:
    \abs{x}_c:=\abs{x^0}+\abs{\vec x} < r\} = r \DoubleCone_1 
  \) denotes the double cone of
    radius \(r > 0\).
  With opposite ordering
  \( 
    \VS_1 \prec \ldots \prec \VS_n
  \)
  an analogous bound holds for incoming times \(\T < 0\).
\end{Lem}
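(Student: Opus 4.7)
The plan is an induction on \(n\), mimicking the Fock-structure argument of \Cref{thm:hrwGen}(ii) but carried out in a form that is uniform over the translation parameters \(x_k\). The base case \(n=1\) is immediate: \(\|\alpha_{x_1}(B_{1\T}(f_1))\Omega\| = \|U(x_1)\Psi_1^1\| = \|\Psi_1^1\|\), independent of \(\T\) and \(x_1\).

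Before the inductive step I would fix the geometric constant \(\rho\). The precursor ordering \(\VS_1\succ\ldots\succ\VS_n\) means \(\VS_j-\VS_k\subset\Ww_\C\) for \(j<k\). Since each \(\VS_k\subset\Lambda T_1\) is compact and \(\Ww_\C\) is an open convex cone, the compact sets \(\VS_j-\VS_k\) stay bounded away from \(\partial\Ww_\C\). Hence there exists \(\rho>0\) such that for all \(j<k\) and all \(x_j,x_k\in\DC_{\rho|\T|}\) the shifted separation
\[
  (x_j+\T\VS_j)-(x_k+\T\VS_k) = \T(\VS_j-\VS_k) + (x_j-x_k)
\]
still lies in a proper subcone of \(\Ww_\C\) of magnitude of order \(\T\). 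Writing \(\tilde B_{k\T}(f_k):=\alpha_{x_k}(B_{k\T}(f_k))\) and similarly \(\tilde B^\perp_{k\T}(f_k)\), this means that the two operators carry effective localizations which remain in oppositely-ordered wedges (modulo wave-packet tails that decay rapidly by the stationary phase estimate on propagation cones~\eqref{eq:propcone}). Consequently the wedge-local commutator estimates~\eqref{eq:commout} apply with constants that are \emph{uniform} in \((x_1,\ldots,x_n)\in\DC_{\rho|\T|}^n\):
\[
  \bigl\|\,[\tilde B_{j\T}(f_j),\,\tilde B^\perp_{k\T}(f_k)]\,\bigr\| \le \frac{C_N}{\T^N},\qquad (j<k,\ \T>0).
\]
Establishing this uniform decay is the main technical point and is essentially a re-run of the proof of~\eqref{eq:commout} in \cite{MD17a} with the extra translations absorbed into the effective localization regions; this is what the choice of \(\rho\) is designed to permit.

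With this in hand, the inductive step proceeds as in the Fock-factorization proof of \Cref{thm:hrwGen}(ii). I compute
\[
  \|\Psi_n(\T)\|^2 = \bigl\langle \tilde B_{1\T}(f_1)\cdots\tilde B_{n\T}(f_n)\Omega,\ \tilde B_{1\T}(f_1)\cdots\tilde B_{n\T}(f_n)\Omega\bigr\rangle.
\]
On both sides of the inner product I invoke swappability to replace the rightmost factor \(\tilde B_{n\T}(f_n)\Omega\) by \(\tilde B^\perp_{n\T}(f_n)\Omega\), then commute the swapped operators toward the centre past all \(\tilde B_{j\T}(f_j)\) with \(j<n\). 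Each commutation produces either (i) the intended factorised term, in which \(\tilde B^{\perp*}_{n\T}(f_n)\tilde B^\perp_{n\T}(f_n)\) is sandwiched and the remaining operators form \(\|\Psi_{n-1}(\T)\|^2\); or (ii) a remainder containing at least one commutator bounded by \(C_N\T^{-N}\), multiplied by operator norms of the surviving \(\tilde B_{j\T}(f_j)\), each of which grows at most like \(\T^{s/2}\). Choosing \(N\) large, the remainders are uniformly bounded (indeed vanishing as \(\T\to\infty\)), and the main term gives
\[
  \|\Psi_n(\T)\|^2 \le \|\Psi_1^n\|^2\,\|\Psi_{n-1}(\T)\|^2 + C'.
\]
Combining with the inductive hypothesis yields the claim; the incoming case \(\T<0\) follows by the same argument with the opposite ordering and the second commutator estimate.

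The principal obstacle is precisely the uniformity of the commutator decay in the translations \(x_j\). One must verify that the stationary-phase/wedge-locality derivation of~\eqref{eq:commout} from \cite{MD17a}, which is geometric in nature and proceeds via bounds on the propagation cones \(\Upsilon_{f_k}\) inside \(\Ww+\T\VS_k\), survives replacing \(\Ww+\T\VS_k\) by \(\Ww+x_k+\T\VS_k\). The construction of \(\rho\) above is designed to make this survival automatic, but keeping track of the constants through the stationary-phase argument is the part that requires real care; everything else is bookkeeping on top of the Fock-structure theorem already established in the excerpt.
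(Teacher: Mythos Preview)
Your overall strategy is correct and coincides with the paper's: induct on \(n\), fix \(\rho\) so that the commutator estimate~\eqref{eq:commout} holds \emph{uniformly} for the translated pairs (this is exactly the paper's \Cref{lem:CommEst}, and your geometric choice of \(\rho\) via compactness of \(\VS_j-\VS_k\subset\Ww_\C\) is essentially how that lemma is proved), then expand \(\|\Psi_n(\tau)\|^2\) via swap-and-commute. The identification of the uniform commutator bound as ``the main technical point'' is spot on.

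There is, however, a genuine gap in your inductive step. You swap on \emph{both} sides of the inner product and commute the two swapped operators ``toward the centre'', arriving at a main term with \((\tilde B_{n\tau}^{\perp})^*\tilde B_{n\tau}^{\perp}\) sandwiched in the middle. But that expression equals
\[
  \bigl\langle \Psi_{n-1}(\tau),\,(\tilde B_{n\tau}^{\perp})^*\tilde B_{n\tau}^{\perp}\,\Psi_{n-1}(\tau)\bigr\rangle
  \;=\;\bigl\|\tilde B_{n\tau}^{\perp}\,\Psi_{n-1}(\tau)\bigr\|^2,
\]
which does \emph{not} factor as \(\|\Psi_1^n\|^2\,\|\Psi_{n-1}(\tau)\|^2\). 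Bounding by the operator norm only gives \(\|\tilde B_{n\tau}^{\perp}\|^2\|\Psi_{n-1}(\tau)\|^2\), and \(\|\tilde B_{n\tau}^{\perp}\|\) grows like \(\tau^{s/2}\), so the induction does not close.

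The paper's remedy is to swap on only \emph{one} side and commute the single operator \(\tilde B_{n\tau}^{\perp}\) all the way through (past both the \(\tilde B_{j\tau}\) and the \(\tilde B_{j\tau}^{*}\)) so that the pair \(\tilde B_{n\tau}^{*}\tilde B_{n\tau}^{\perp}\) sits adjacent to the vacuum on the far left. In that position an energy-momentum-transfer (clustering) argument applies: the product \(\tilde B_{n\tau}^{*}\tilde B_{n\tau}^{\perp}\) has Arveson spectrum in a small neighbourhood of zero, so by the mass gap one may insert \(|\Omega\rangle\langle\Omega|\) and obtain the honest factorisation \(\|\Psi_1^n\|^2\,\|\Psi_{n-1}(\tau)\|^2\). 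The paper invokes this as \cite{MD17a}~Prop.~8(vi). This clustering step, together with the one-sided swap, is the missing ingredient in your sketch; once it is inserted, the rest of your argument goes through unchanged.
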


  Before proving this lemma let us recall the following useful technical result 
  from \cite{MD17a}, which concerns the approximation of Haag-Ruelle
  creation-operator approximants by wedge-local operators.

  \begin{Lem}[\cite{MD17a} Lemma 9]
    \label{lem:wedgeLocalBT}
    Let \(\Aa \in \Alg(\Ww)\). 
    For any \(\T \in \RealNum\) and \(\delta > 0\) the corresponding 
    \(\BB_\T :=  \BB_{\T}(f)\) can be approximated 
    by \(\BB_{\T}^{(\delta)} \in \Alg(\T \VS_f + \DoubleCone_{\delta\abs{\T}} + \Ww) \),
  \((\delta > 0)\), such that
    for any \(N \in \NaturalNum\)
    \[
      \norm{\BB_{\T}^{(\delta)} - \BB_\T} \leq \frac{ C_N^\delta}{1+\abs{\T}^N},
      \label{eq:asymptDec}\numberthis
    \]
  where the constants \(C_N^\delta\) depend 
    on \(f\), \(\Aa\) and \(\chi\).
  \end{Lem}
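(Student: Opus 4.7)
The plan is to construct \(\BB_\T^{(\delta)}\) by inserting two smooth cutoffs into the double integral defining \(\BB_\T = \BB_\T(f)\), one for the spatial smearing variable \(\vec x\) and one for the Haag-Ruelle auxiliary variable \(y\). Writing \(B = A(\chi)\) out explicitly gives
\[ \BB_\T = \int \DInt[\s] x \int \DInt[d] y \; f(\T, \vec x)\, \chi(y)\, \alpha_{(\T,\vec x) + y}(A). \]
I introduce a smooth function \(\phi_\delta^{(1)}(\vec x)\) supported in the \((\delta|\T|/2)\)-neighborhood of \(\T \vec v_m(\support \tilde f)\) and equal to one on the \((\delta|\T|/4)\)-neighborhood, together with a smooth function \(\phi_\delta^{(2)}(y)\) that equals one on \(\DoubleCone_{\delta|\T|/4}\) and vanishes outside \(\DoubleCone_{\delta|\T|/2}\). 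Then define
\[ \BB_\T^{(\delta)} := \int \DInt[\s] x \int \DInt[d] y \; \phi_\delta^{(1)}(\vec x)\, \phi_\delta^{(2)}(y)\, f(\T, \vec x)\, \chi(y)\, \alpha_{(\T,\vec x) + y}(A). \]
By construction, whenever \((\vec x, y)\) lies in the support of the cutoffs one has \((\T, \vec x) + y \in \T \VS_f + \DoubleCone_{\delta|\T|}\), so each integrand sits in \(\Alg(\Ww + \T \VS_f + \DoubleCone_{\delta|\T|})\) by translation covariance \eqref{eq:HK3} and isotony \eqref{eq:HK1}. Since the compactly supported strong integral is a norm limit of Riemann sums whose summands lie in this algebra, and the latter is \(\sigma\)-weakly closed, the limit \(\BB_\T^{(\delta)}\) is in this algebra as required.

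The norm bound \eqref{eq:asymptDec} is obtained by controlling the contribution of the complementary weight \(1 - \phi_\delta^{(1)} \phi_\delta^{(2)}\), which splits additively into a \(y\)-truncation piece and an \(\vec x\)-truncation piece. For the \(y\)-truncation piece one estimates
\[ \Big\| \int \DInt[\s] x \int \DInt[d] y \; (1 - \phi_\delta^{(2)}(y))\, f(\T, \vec x)\, \chi(y)\, \alpha_{(\T, \vec x) + y}(A) \Big\| \leq \|A\| \int |f(\T, \vec x)| \DInt[\s] x \int_{y \notin \DoubleCone_{\delta|\T|/4}} |\chi(y)| \DInt[d] y, \]
where the \(\vec x\)-integral grows at most polynomially in \(|\T|\) by the standard dispersive \(L^1\)-bound, while the \(y\)-integral is \(\Ord(|\T|^{-N})\) for any \(N \in \NaturalNum\) by Schwartz decay of \(\chi\). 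For the \(\vec x\)-truncation piece, the essential input is a non-stationary-phase estimate: on the complement of a \((\delta|\T|/4)\)-neighborhood of \(\T \vec v_m(\support \tilde f)\), the phase \(\vec k \cdot \vec x - \omega_m(\vec k) \T\) in the Fourier representation of \(f\) has no critical point for \(\vec k \in \support \tilde f\), so repeated integration by parts against \(\tilde f \in \ContinuousFuncs^\infty_c(\RealNum^\s)\) yields \(|f(\T, \vec x)| \leq C_{N,\delta} (1 + |\T| + |\vec x|)^{-N}\) there. Integrating this pointwise bound and using that the velocity support of \(\tilde f\) is a compact strict subset of the open unit ball produces the desired rapid decay. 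Combining the two contributions yields \(\|\BB_\T - \BB_\T^{(\delta)}\| \leq C_N^\delta (1 + |\T|)^{-N}\) for every \(N \in \NaturalNum\).

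The main technical hurdle is making the stationary-phase bound effective \emph{uniformly} on the complement of the \((\delta|\T|/4)\)-tube around \(\T \vec v_m(\support \tilde f)\), so that the constants depend only on \(\delta\), on the regularity of \(\tilde f\), and on the distance of \(\vec v_m(\support \tilde f)\) from the boundary of the unit ball, but not further on \(\T\). This is arranged precisely by scaling the tube width linearly with \(\T\), so that the angular gap by which \(\vec x/\T\) avoids \(\vec v_m(\support \tilde f)\) on the complement is \(\T\)-independent. A secondary, bookkeeping issue is verifying that the geometric inclusion \((\T, \vec x) + y \in \T \VS_f + \DoubleCone_{\delta|\T|}\) holds with exactly the stated inflation factor (rather than a larger one), which is arranged by splitting the allowed inflation evenly between \(\phi_\delta^{(1)}\) and \(\phi_\delta^{(2)}\).
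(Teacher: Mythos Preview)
The paper does not supply its own proof of this lemma; it is quoted verbatim as Lemma~9 of \cite{MD17a} and used as a black box. Your argument is the standard one underlying that reference: write \(\BB_\T\) as the double integral in \((\vec x,y)\), insert smooth cutoffs of width proportional to \(|\T|\) in both variables, read off the wedge localization from translation covariance and isotony, and control the two tail pieces by Schwartz decay of \(\chi\) (for the \(y\)-cut) and a non-stationary-phase bound on the Klein-Gordon packet away from its propagation region (for the \(\vec x\)-cut). This is correct and matches the approach of \cite{MD17a}; there is nothing to add.
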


  For the proof of \Cref{lem:unifbound} we use a corresponding version of the
  commutator estimate, which will be formulated as a separate lemma to be proven
  first. For this estimate certain translations of oppositely
  localized pairs are admitted, similarly to the corresponding translations
  appearing in \Cref{lem:unifbound}. 

  \begin{Lem}
  \label{lem:CommEst}
  Let \(B = A(\chi)\), \(B^\perp=A^\perp(\chi)\) 
  with \(A \in \Alg(\Ww)\), \(A^\perp \in \Alg(\Ww^\perp)\), for a pair of
    opposite wedges \(\Ww, \Ww^\perp\),  \(\chi \in
  \SchwartzSpace(\RealNum^d)\), and 
  let \(f, f^\perp\) be regular Klein-Gordon solutions
  ordered by \(\VS_{f^\perp} \prec_{\Ww} \VS_f\).  Then 
  there exists a constant \(\rho> 0\) and 
  for any \(N \in \NaturalNum\) a constant \(C_N> 0\) such that
  for any \(\T > 0\) and \(x,y  \in \DoubleCone_{\rho\T} \),
  \[
    \norm{\left[\alpha_y(\BB_\T^{\perp}(f^\perp)), \alpha_x(\BB_\T(f)) \right] }
    \leq \frac{C_N}{1+\abs{\T}^N}.
    \label{eq:commEst}\numberthis
  \]
   This rapid decay extends to the case that either or both operators are
   replaced by their adjoint. Further these estimates also hold for \(\T < 0\)
   given the opposite ordering~\(\VS_{f^\perp} \succ_{\Ww} \VS_{f}\).
  \end{Lem}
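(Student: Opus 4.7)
The plan is to replace the Haag-Ruelle operators $\BB_\T(f)$ and $\BB_\T^{\perp}(f^\perp)$ by the genuinely wedge-local approximants from \Cref{lem:wedgeLocalBT} and then show that the commutator of these approximants vanishes identically for sufficiently large $\abs\T$, by wedge-locality \eqref{eq:HK2} applied to localization regions whose spacelike separation is forced by the precursor ordering. The approximation error will be absorbed into the $O(\T^{-N})$ bound via \eqref{eq:asymptDec}.

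Concretely, let $\BB_{\T}^{(\delta)} \in \Alg(\T\VS_f + \DC_{\delta\abs{\T}} + \Ww)$ and $\BB_{\T}^{\perp,(\delta)} \in \Alg(\T\VS_{f^\perp} + \DC_{\delta\abs{\T}} + \Ww^\perp)$ be the approximants from \Cref{lem:wedgeLocalBT}, so that \eqref{eq:asymptDec} holds. A telescoping of the commutator, together with translation-invariance of the operator norm and the standard polynomial bound $\norm{\BB_\T(f)} \leq C(1+\abs{\T}^{s/2})$, gives
\[
  \norm{[\alpha_y(\BB_\T^\perp(f^\perp)), \alpha_x(\BB_\T(f))] - [\alpha_y(\BB_\T^{\perp,(\delta)}), \alpha_x(\BB_\T^{(\delta)})]} \leq \frac{C_N}{1+\abs{\T}^N},
\]
uniformly in $x, y \in \RealNum^d$, once the approximation order in \eqref{eq:asymptDec} is chosen large enough.

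It remains to identify $\rho, \delta > 0$ so that the commutator of the approximants vanishes identically for $\T$ large and $x, y \in \DC_{\rho\abs\T}$; behavior at bounded $\T$ is then absorbed into $C_N$. Translation covariance places the two translated approximants in $\Alg(\T\VS_f + \DC_{\delta\abs\T} + \Ww + x)$ and $\Alg(\T\VS_{f^\perp} + \DC_{\delta\abs\T} + \Ww^\perp + y)$ respectively. Since $\VS_f - \VS_{f^\perp}$ is a compact subset of the open cone $\Ww_\C$, there exists $c > 0$ with $\VS_f - \VS_{f^\perp} + \DC_{4c} \subset \Ww_\C$, and any choice with $\rho + \delta < c$ yields
\[
  \T(\VS_f - \VS_{f^\perp}) + \DC_{2\delta\abs\T} + \DC_{2\rho\abs\T} \subset \Ww_\C
\]
for all $\T > 0$. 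Writing $\Ww^\perp = \Ww' + z$ and using $\Ww - \Ww' = \Ww_\C$, the net relative displacement of the two wedge regions is contained in $\Ww_\C - z$, and for $\abs\T$ larger than a constant depending on $\abs z$ this means the two regions are spacelike separated. Wedge-locality \eqref{eq:HK2} then annihilates the commutator of the approximants.

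The main obstacle is the final geometric step: carefully combining the five $\T$-dependent contributions to the localization regions --- the macroscopic translations $\T\VS_f, \T\VS_{f^\perp}$, the wedges $\Ww, \Ww^\perp$, the mollification spread $\DC_{\delta\abs\T}$, and the external translations $x, y \in \DC_{\rho\abs\T}$ --- and extracting from the strict inclusion $\VS_f - \VS_{f^\perp} \subset \Ww_\C$ a single constant $\rho > 0$ depending only on $f, f^\perp$ and $\Ww$, uniformly in $\T$. The adjoint cases reduce to the same argument after replacing $\chi$ by $\bar\chi$, and the $\T < 0$ case follows by exchanging the roles of $f$ and $f^\perp$ under the reversed ordering $\VS_{f^\perp} \succ_\Ww \VS_f$.
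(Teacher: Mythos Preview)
Your proposal is correct and follows essentially the same approach as the paper: replace both Haag-Ruelle operators by the wedge-local approximants of \Cref{lem:wedgeLocalBT}, telescope the commutator so that the approximation errors are absorbed via \eqref{eq:asymptDec} and the polynomial norm bound, and then use the compact inclusion $\VS_f - \VS_{f^\perp} \subset \Ww_\C$ together with absorption of the fixed translation $z$ (from $\Ww^\perp = \Ww' + z$) into the growing double cones to force spacelike separation of the localization regions for large $\T$. The only cosmetic difference is that the paper first reduces to $y=0$ by translation invariance of the norm and recovers general $y$ at the end by halving $\rho$, whereas you carry both $x$ and $y$ throughout; and for the adjoint case the paper invokes the $C^*$-property of the norm applied to \eqref{eq:asymptDec} rather than rewriting $B^*$ as an operator built from $\bar\chi$, but both arguments lead to the same localization and estimates.
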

In the proof presented here we will focus on the arguments needed to
generalize the corresponding commutator estimate of Corollary~10 from
\cite{MD17a}  to the present statement.  More details can be found in
Appendix~A of \cite{MD17a}.  

 \begin{proof} 
     We only consider the outgoing case \(\T >0\) and note that it is sufficient
     to prove \eqref{eq:commEst} for all \(\T \geq \T_0\) with some fixed
     \(\T_0>0\). Let us begin by assuming for simplicity that \(y=0\).
     For \(\delta > 0\) we obtain families of wedge-local
     approximants \(\BB_\T^{(\delta)}\), \( \BB_\T^{\perp(\delta)}\) to the
     respective Haag-Ruelle operators via \Cref{lem:wedgeLocalBT}. These satisfy
     \(\BB_\T^{(\delta)} \in \Alg(\T \VS_f + \DoubleCone_{\delta\abs{\T}} +
         \Ww)\), with \( \normm{\BB_{\T}^{(\delta)} - \BB_\T(f)} \leq {
      C_M^\delta}/(1+\abs{\T}^M)\), and analogously
      \(\BB_\T^{\perp(\delta)} \in \Alg(\T \VS_{f^\perp} +
        \DoubleCone_{\delta\abs{\T}} + \Ww^\perp)\), with
    \(
      \normm{\BB_{\T}^{\perp(\delta)} - \BB_\T^\perp(f^\perp)} \leq
    { C'^{\delta}_M}/(1+\abs{\T}^M)\) with constants \(C_M^\delta\),
    \(C_M'^\delta > 0\), provided for all \(M \in   \NaturalNum\) by \Cref{lem:wedgeLocalBT}.
    
   Proceeding towards \eqref{eq:commEst} we obtain from translation covariance
   that  \(\normm{\alpha_x(\BB_\T^{(\delta)}) - \alpha_x(\BB_\T(f)) } 
      \leq {C'^\delta_M}/(1+\abs{\T}^M) \) 
   and \(\alpha_x(\BB_\T^{(\delta)}) \in 
      \Alg(\T \VS_f + \DoubleCone_{\delta\abs{\T}} + \Ww + x)\).
      To obtain \eqref{eq:commEst} by means of the wedge-locality  of these
      two approximating operators we have to choose \(\delta > 0\) and
      subsequently \(\rho>0\) sufficiently small so that the localization
      regions are space-like separated. These causality considerations
      can be simplified for large enough \(\T\) by absorbing any finite
      translations into the growing double cones and rewriting 
      as a region involving a single growing double cone:
      \(\DoubleCone_{\delta\T} + x \subset \DoubleCone_{\delta\T}
      +\DoubleCone_{\rho\T} \subset \DoubleCone_{(\delta+\rho){\T}}\).
     Further we can write \(\Ww = \Ww_c + x_\Ww\), \(\Ww^\perp = \Ww_c' +
     x_{\Ww^\perp}\) for some \(x_\Ww, \, x_{\Ww^\perp} \in \RealNum^d\).
     Choosing for simplicity \(\rho = \delta\) and assuming \(\T >
       \T_0:=
     (\abs{x_\Ww}_c + \abs{x_{\Ww^\perp}}_c)/\delta\) we obtain
     \begin{align*}
      M_1^\T := 
       \T \VS_f + \DoubleCone_{\delta\T} + \Ww + x
       & \subset \T \VS_f + \DoubleCone_{3\delta\T} + \Ww_c 
       , \; \text{and}
       \\
       M_2^\T :=
\T \VS_{f^\perp} +
        \DoubleCone_{\delta\T} + \Ww^\perp
        &
        \subset
\T \VS_{f^\perp} +
        \DoubleCone_{3\delta\T} + \Ww'_c.
        \numberthis \label{eq:regions}
     \end{align*}
     By the ordering assumption we have that \(\VS_f - \VS_{f^\perp}\)
     is a compact subset of the open set~\(\Ww_c\). In particular there exists an
     \(\epsilon > 0\) with \( \VS_f - \VS_{f^\perp} + \DoubleCone_\epsilon
     \subset \Ww_c\). Then for \(\T > 0\) also
     \(\T\VS_f - \T\VS_{f^\perp} + \DoubleCone_{\epsilon\T}
     \subset \Ww_c\). Thus we can choose \(\rho = \delta := \epsilon/6\) and it then
     follows that the two sets \(M_1^\T\) and 
     \(M_2^\T\) are space-like separated for \(\T > \T_0\).

     We now obtain from locality that for all \(\T \geq \T_0\) and \(x \in
    \DoubleCone_{\rho\T}\) we have \([ \BB_\T^{\perp(\delta)}, \alpha_x(\BB_\T^{(\delta)})] = 0\),
      which implies the uniform commutator estimate by
    expanding
    \begin{align*}
      \norm{[\BB^\perp_\T(f^\perp), \alpha_x(\BB_\T(f))]}
    & = 
    \norm{[\BB^\perp_\T(f^\perp) - \BB^{\perp(\delta)}_\T + \BB^{\perp(\delta)}_\T
    , \alpha_x(\BB_\T(f) - \BB^{(\delta)}_\T + \BB^{(\delta)}_\T)  ]}
    \\&
    \leq
    \norm{[\BB^\perp_\T(f^\perp) - \BB^{\perp(\delta)}_\T,
    \alpha_x(\BB_\T(f) - \BB^{(\delta)}_\T + \BB^{(\delta)}_\T)  ]}
    \\& \qquad\qquad
    + \norm{[ \BB^{\perp(\delta)}_\T  , \alpha_x(\BB_\T(f) - \BB^{(\delta)}_\T)   ]}
    + \norm{[ \BB^{\perp(\delta)}_\T  ,  \alpha_x(\BB^{(\delta)}_\T) ]},
    \label{eq:proofcomm} \numberthis
    \end{align*}
    where \(\normm{[\BB^\perp_\T(f^\perp) -
        \BB^{\perp(\delta)}_\T, \alpha_x(\BB_\T(f))  ]}  \leq 2 C_{N'}^\delta
        C/(1+\abs{\T}^{N'}) \cdot (1+\abs{\T})^{s/2} \leq C_N' \T^{-N} \)
        by estimating the commutator via the two operator norms, using \Cref{lem:wedgeLocalBT} for
        its first and the standard polynomially growing norm estimate for its second argument, and
  analogously for the other non-vanishing commutator.

  Finally, if \(y \not = 0\) we can write by translation-covariance of the
  operator norm that
     \[
    \norm{\left[\alpha_y(\BB_\T^{\perp}(f^\perp)), \alpha_x(\BB_\T(f)) \right] }
    = \norm{\left[\BB_\T^{\perp}(f^\perp), \alpha_{x-y}(\BB_\T(f)) \right] }
   \leq 
     C_N/(1+|\T|^N),
     \label{eq:exty}\numberthis
     \]
     for all \(x-y \in \DoubleCone_{\rho\T}\). We can thus set \(\tilde \rho:=
     \rho/2\) to obtain \eqref{eq:exty} for all \(x,y \in
     \DoubleCone_{\tilde\rho\T} \), using that then \(x-y \in
       \DoubleCone_{\rho \T/2}
     -  \DoubleCone_{\rho \T/2}
     \subset 
       \DoubleCone_{\rho \T}
     \). This establishes \eqref{eq:commEst}. 

     Concerning the extensions to adjoints and \(\T < 0\) we note that the
     commutator estimates involving adjoints follow using the same approximation argument,
     noting that \eqref{lem:wedgeLocalBT} also directly yields wedge-local
     approximants of adjoint operators by the \(\CStar\)-property of the
     operator norm. For \(\T < 0\) the geometric situation in \eqref{eq:regions}
     is inverted, but the same arguments work if the opposite
     ordering \(\VS_{f^\perp} \succ_\Ww \VS_f\) holds.
    \end{proof}

  \begin{proof}[Proof of \Cref{lem:unifbound}]
    For the outgoing case, let \(\rho = \rho_n > 0\) be the minimum over the
    constants~\(\rho_{k,j}\)  from \Cref{lem:CommEst}   for the pairs
    \(B_{k\T}(f_k)\) and \(B_{j\T}^\perp(f_j)\), \(1 \leq k < j \leq n\).

 For \(n=1\), estimate \eqref{eq:normbound} follows directly from the fact that
 \(\norm{\Psi_1^{x_1}} := \norm{\alpha_{x_1}(B_\T(f))\Omega} =
 \norm{U(x_1)B_\T(f)\Omega}\) does not depend on the translation vector~\(x_1\)
 due to translation invariance of the norm and the vacuum, neither on \(\T\) by
  construction of the Haag-Ruelle operators. 
  
  For the case \(n  \geq 2\),
  let \(\Psi_n^{\uvec x_n}(\T) := \alpha_{x_1}(B_{1\T}(f_1)) \ldots
    \alpha_{x_n}(B_{n\T}(f_n)) \Omega \) with \(\uvec x_n := (x_1, \ldots, x_n)\in
  \RealNum^{nd}\). To simplify the notation we will drop the obvious wave packet dependences
  and write \(B_{k\T}^{x_k} := \alpha_{x_k}(B_{k\T}(f_k)) \), so that
  \(\Psi_n^{\uvec x_n}(\T) = B_{1\T}^{x_1} \ldots B_{n\T}^{x_n} \Omega \) and
  for later use \(B_{k\T}^{\perp x_k} := \alpha_{x_k}(B_{k\T}^\perp(f_k)) \). To
  give an
  inductive argument we can write using the swapping property
  \begin{align*}
    \norm{\Psi_n^{\uvec x_n}(\T)}^2 &=
      \left\langle\Omega,
        \BB_{n\T}^{x_n*}\ldots \BB_{1\T}^{x_1*} \,
        \BB_{1\T}^{x_1} \ldots \BB_{n\T}^{x_n} \Omega
      \right\rangle
      \\&=
      \left\langle\Omega,
        \BB_{n\T}^{x_n*}\ldots \BB_{1\T}^{x_1*} \,
        \BB_{1\T}^{x_1} \ldots \BB_{n-1,\T}^{x_{n-1}} \BB_{n\T}^{\perp x_n} \Omega
      \right\rangle
\\&=
      \left\langle\Omega,
        \BB_{n\T}^{x_n*} \BB_{n\T}^{\perp x_n}
        \BB_{n-1,\T}^{x_{n-1}*} 
        \ldots \BB_{1\T}^{x_1*} \,
        \BB_{1\T}^{x_1} \ldots 
        \BB_{n-1,\T}^{x_{n-1}} \Omega
        \right\rangle
        \\&\qquad\qquad
        +
    \left\langle\Omega,
      \BB_{n\T}^{x_n*} \left[ 
        \BB_{n-1,\T}^{x_{n-1}*} 
        \ldots \BB_{1\T}^{x_1*} \,
        \BB_{1\T}^{x_1} \ldots 
      \BB_{n-1,\T}^{x_{n-1}}, \BB_{n\T}^{\perp x_n} \right] \Omega
      \right\rangle.
      \numberthis\label{eq:normclucomm}
  \end{align*}
  The first term can be bounded using the clustering property of the Haag-Ruelle
  operators (\cite{MD17a} Prop.~8 (vi)), which yields
\begin{align*}
  \fl
 \left\langle\Omega,
        \BB_{n\T}^{x_n*} \BB_{n\T}^{\perp x_n}
        \BB_{n-1,\T}^{x_{n-1}*} 
        \ldots \BB_{1\T}^{x_1*} \,
        \BB_{1\T}^{x_1} \ldots 
        \BB_{n-1,\T}^{x_{n-1}} \Omega
        \right\rangle
   \\&=
\left\langle\Omega,
        \BB_{n\T}^{x_n*} \BB_{n\T}^{\perp x_n}
        \Omega \right\rangle \left\langle \Omega,
             \BB_{n-1,\T}^{x_{n-1}*} 
        \ldots \BB_{1\T}^{x_1*} \,
        \BB_{1\T}^{x_1} \ldots 
        \BB_{n-1,\T}^{x_{n-1}} \Omega
      \right\rangle 
      \\&= \norm{\BB_{n\T}^{x_n}\Omega}^2 \normm{\Psi_{n-1}^{\uvec x_{n-1}}(\T)}^2 \leq C.
      \numberthis
  \end{align*}
  Here the estimate is obtained uniformly for all 
   \(x_1, \ldots, x_{n} \in
  \DoubleCone_{\rho_{n} \T}\) 
  by using that 
  \(x_1, \ldots, x_{n-1} \in
  \DoubleCone_{\rho_{n} \T}
  \subseteq
\DoubleCone_{\rho_{n-1} \T}\) 
due to the induction hypothesis, and
  bounding the first factor as for \(n=1\).
  Note that here we use \(\rho_n \leq \rho_{n-1}\),   
  which follows directly by the definition \[\rho_n = \min_{ 1 \leq k < j \leq n} \rho_{k,j}
    \leq  \min_{ 1 \leq k < j \leq n-1} \rho_{k,j} = \rho_{n-1}, \numberthis \]
  where \(\rho_{k,j}\) denote the constants from \Cref{lem:CommEst}   for the pairs
    \(B_{k\T}(f_k)\) and \(B_{j\T}^\perp(f_j)\), \(1 \leq k < j \leq n\).

  Finally, the commutator term from \eqref{eq:normclucomm} is also seen to be
  bounded in \(\T\) (in fact, rapidly decreasing), uniformly for
     \(x_1, \ldots, x_{n} \in \DoubleCone_{\rho_{n} \T}\), by means of
  \Cref{lem:CommEst}: we expand the big commutator 
  in \eqref{eq:normclucomm} into a sum of terms of vacuum expectation values of
  operators of the form
  \begin{align*}
        \BB_{n\T}^{x_{n}*} 
        \ldots \BB_{1\T}^{x_1*} \,
        \BB_{1\T}^{x_1} \ldots &\left[ \BB_{k\T}^{x_k} , \BB_{n\T}^{\perp x_n} \right]
        \ldots
        \BB_{n-1,\T}^{x_{n-1}}, \quad\text{or}
        \\
    \BB_{n\T}^{x_{n}*} 
    \ldots & \left[ \BB_{k\T}^{x_k*} , \BB_{n\T}^{\perp x_n} \right] 
        \ldots \BB_{1\T}^{x_1*} \,
        \BB_{1\T}^{x_1} \ldots 
        \BB_{n-1,\T}^{x_{n-1}}.
        \numberthis
      \label{eq:clucommdecay}
  \end{align*}
  Each of those terms is now estimated using that
  \( \norm{ \left[\BB_{k\T}^{x_k*} , \BB_{n\T}^{\perp x_n} \right]}\) 
  and 
  \( \norm{ \left[\BB_{k\T}^{x_k} , \BB_{n\T}^{\perp x_n} \right]}\) 
  can for \(1 \leq k < n\) and \(x_k, x_n \in 
    \DoubleCone_{\rho_{n}\T} \) by definition of \(\rho_n\) be bounded by \( 
  C_M (1+\T)^{-M}\).
  Choosing \(M\in \NaturalNum\) sufficiently large we absorb
  the growth of the simple norm estimates 
  \( \normm{\BB_{j\T}^{x_j*}} = \normm{\BB_{j\T}^{x_j}}  \leq C_j
(1+\abs{\T}^{s/2})\) used for the remaining operators with \(1\leq j \leq n\)
in the respective terms of the expansion.
Together the desired uniform bound on \eqref{eq:normclucomm} is obtained.
  \end{proof}

\begin{Prop} 
  Let \(\Psi_1, \Psi_2 \in \FS^u(\HilbertSpace_1)\) be vectors of bounded
  energy-momentum. Then for any warping matrix \(Q \in \RealNum^{d\times
  d}\) their \(Q\)-deformed tensor product
  has the oscillatory integral representation
        \[
        \Psi_1 \otimes_Q \Psi_2 =
     \lim_{\epsilon\to 0}
       \int \frac{\DInt[\sPone]x \; \DInt[\sPone]y}{(2\pi)^{\sPone}}
\eta(\epsilon x, \epsilon y)
    \Ee^{-\Ii x \cdot y}
    U(x)\{(U(Qy) \Psi_1) \otimes \Psi_2\},
    \label{eq:oscintvec}\numberthis
      \]
      where  \(\eta \in \SchwartzSpace(\RealNum^{\sPone} \times \RealNum^{\sPone})\)
      such that
      \(\eta(0,0) = 1\).
    \proof
    First     we can infer from \Cref{lem:oscint}
that the limit in \eqref{eq:oscintvec} exists and is
    independent of \(\eta\) within the specified restrictions.
    We rewrite the expression under the limit on the right-hand side 
    via  the spectral calculus of the energy-momentum operators as
    \begin{align*}
      \fl
       \int \frac{\DInt[\sPone]x \; \DInt[\sPone]y}{(2\pi)^{\sPone}}
\eta(\epsilon x, \epsilon y)
    \Ee^{-\Ii x \cdot y}  \;
    \int
\DInt E(p) \;
\Ee^{ \Ii p \cdot x }
\left(\left(\int \DInt E(q)\; \Ee^{\Ii q \cdot Qy} \Psi_1  \right)\otimes  \;
\Psi_2\right) 
\\&=
        \int
       \DInt E_{P}(p) \DInt E_{P_1}(q) \;
       \left(
      \Psi_1 \otimes \Psi_2 \cdot
        \int \frac{\DInt[\sPone]x \; \DInt[\sPone]y}{(2\pi)^{\sPone}}
        \eta(\epsilon x, \epsilon y)
        \Ee^{- \Ii x \cdot y + \Ii p \cdot x + \Ii q \cdot Qy}
      \right).
        \numberthis \label{eq:rhsrewrite}
    \end{align*}
    Here the Fubini theorem for exchanging the order of integrations applies,
    where integrability with respect to the product measure follows from
    the bounded energy-momentum of \(\Psi_1, \Psi_2\) and the rapid decay of
    \(\eta\).
    Here \(P\) denote the energy-momentum operators on the full Fock space and
    \(P_1( \Psi_1 \otimes \Psi_2)  := (P \Psi_1) \otimes \Psi_2\) acts only on
    the second component.
    Let us denote the inner scalar integral from \eqref{eq:rhsrewrite} by
    \(I_\epsilon(p,q)\).
    By the uniqueness result of \Cref{lem:oscint}
    we can proceed to concretely choose
    \(\eta(x,y) := \Ee^{-\abs{x}_e^2 - \abs{y}_e^2}\),
    where \(\abs{\cdot}_e\) denote the Euclidean norm.
    Then an elementary calculation (see \Cref{prop:gaussianint}) shows that
    \(
      I_\epsilon(p,q)
        \longrightarrow \Ee^{-\Ii p \cdot Q q}
    \)
    pointwise for \(\epsilon \to 0\), and in addition \(I_\epsilon(p,q)\) is
    bounded  uniformly in \(p, q\)  for small \(\epsilon > 0\).
    By dominated convergence we obtain that \eqref{eq:rhsrewrite} yields for
    \(\epsilon \to 0\)
    \begin{align*}
  \int
       \DInt E_{P}(p) \, \DInt E_{P_1}(q) \;
      \Psi_1 \otimes \Psi_2 \cdot
      \Ee^{-\Ii p \cdot Qq} &=
      \Ee^{-\Ii P \cdot Q P_1}
      \Psi_1 \otimes \Psi_2 
      =
      \Ee^{-\Ii P_2 \cdot Q P_1}
      \Psi_1 \otimes \Psi_2 
      \\&
      =
      \Ee^{\Ii P_1 \cdot Q P_2}
      \Psi_1 \otimes \Psi_2 
      =
      \Psi_1 \otimes_Q \Psi_2,
      \numberthis
    \end{align*}
    where we used that \(P = P_1 + P_2\) and \(P_1 \cdot Q P_1 = 0\).
   \qed
  \label{lem:deftprep}
\end{Prop}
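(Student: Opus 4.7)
The plan is to evaluate the right-hand side by passing to the spectral representation of the translations appearing in \(U(x)\) and \(U(Qy)\), reducing everything to a scalar Gaussian oscillatory integral in the \((x,y)\)-variables whose \(\epsilon\to 0\) limit produces the deformation phase \(\Ee^{-\Ii p\cdot Q q}\). Reassembling via the spectral theorem then yields precisely \(\Ee^{\Ii P_1 \cdot Q P_2}\Psi_1 \otimes \Psi_2 = \Psi_1 \otimes_Q \Psi_2\).

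First I would invoke \Cref{lem:oscint} applied to the map \((x,y)\mapsto U(x)\{(U(Qy)\Psi_1)\otimes\Psi_2\}\), which is smooth with uniformly bounded derivatives because \(\Psi_1,\Psi_2\) have bounded energy-momentum. This guarantees that the limit on the right-hand side of \eqref{eq:oscintvec} exists and is independent of the choice of \(\eta\), so I am free to pick \(\eta(x,y) := \Ee^{-|x|_e^2-|y|_e^2}\) for the actual computation. Next, using the joint spectral resolution \(U(x) = \int \DInt E(p) \,\Ee^{\Ii p\cdot x}\) on the full Fock space and noting that \(U(Qy)\) acts on \(\Psi_1\) alone, I rewrite the integrand as an iterated integral against \(\DInt E_{P}(p)\,\DInt E_{P_1}(q)\,\Psi_1 \otimes \Psi_2\). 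Since \(\Psi_1,\Psi_2\) have compact spectral support and \(\eta\) is Schwartz at each fixed \(\epsilon>0\), the resulting triple integral in \((x,y,p,q)\) is absolutely integrable with respect to the product measure, and Fubini's theorem allows me to perform the \((x,y)\)-integration first.

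The inner integral is then the scalar quantity
\[
I_\epsilon(p,q) = \int \frac{\DInt[\sPone]x\,\DInt[\sPone]y}{(2\pi)^{\sPone}}\,\eta(\epsilon x, \epsilon y)\,\Ee^{-\Ii x\cdot y + \Ii p\cdot x + \Ii q\cdot Q y},
\]
which with the Gaussian \(\eta\) can be evaluated explicitly by completing the square in the quadratic form in \((x,y)\). The outcome is a two-parameter Gaussian in \((p,q)\) that converges pointwise to \(\Ee^{-\Ii p\cdot Q q}\) as \(\epsilon\to 0\), together with a uniform-in-\(\epsilon\) bound on \(|I_\epsilon(p,q)|\) on the compact joint spectral support of \((P,P_1)\) relevant to \(\Psi_1\otimes\Psi_2\). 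Dominated convergence then collapses the spectral integral to \(\Ee^{-\Ii P_2\cdot Q P_1}\Psi_1\otimes\Psi_2\), and using that \(P=P_1+P_2\) and that \(p\cdot Q p = 0\) by antisymmetry of \(Q\) (which is recorded in the paper for the warping matrices of interest, or can be replaced by a direct manipulation of the mixed term), this equals \(\Ee^{\Ii P_1\cdot Q P_2}\Psi_1\otimes\Psi_2\), matching the definition of \(\otimes_Q\) in \eqref{eq:defTensor}.

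The main technical obstacle is the uniform-in-\(\epsilon\) bound on \(I_\epsilon(p,q)\) needed to apply dominated convergence: one must verify that completing the square does not produce factors blowing up as \(\epsilon\to 0\) on the relevant compact region of \((p,q)\). Once this explicit Gaussian bookkeeping is done, Fubini and dominated convergence close the argument, and the proof is essentially a routine calculation. Everything else is either an application of \Cref{lem:oscint} or standard spectral calculus for the energy-momentum operators on Fock space.
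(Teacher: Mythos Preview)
Your proposal is correct and follows essentially the same route as the paper: invoke \Cref{lem:oscint} for existence and \(\eta\)-independence, choose the Gaussian regulator, pass to the spectral representation, exchange integrals via Fubini (using bounded energy-momentum and rapid decay of \(\eta\)), evaluate the resulting scalar Gaussian integral \(I_\epsilon(p,q)\) explicitly (this is the paper's \Cref{prop:gaussianint}), and conclude by dominated convergence together with \(P=P_1+P_2\) and the antisymmetry identity \(P_1\cdot Q P_1 = 0\). The only cosmetic difference is that the paper records the Gaussian computation as a separate proposition, whereas you absorb it into the main argument.
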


\begin{Prop}
  \label{prop:gaussianint}
  For any \(p, q \in \RealNum^d\) and any warping matrix \(Q \in
  \RealNum^{d\times d}\) we have
  \begin{align*}
    \lim_{\epsilon \to 0}
        \int \frac{\DInt[\sPone]x \; \DInt[\sPone]y}{(2\pi)^{\sPone}}
        \Ee^{- \abs{ \epsilon x}_e^2 - \abs{ \epsilon y}_e^2}
        \Ee^{- \Ii x \cdot y + \Ii p \cdot x + \Ii q \cdot Qy}
        = \Ee^{-\Ii p \cdot Qq}. \numberthis
  \end{align*}
  \proof
  We let \(\varepsilon := \epsilon^2\) and use anti-symmetry of \(Q\) with
  respect to the Minkowski scalar product to 
  write
  \begin{align*}
        \int \frac{\DInt[\sPone]x \; \DInt[\sPone]y}{(2\pi)^{\sPone}}
        \Ee^{-\varepsilon \abs{x}_e^2 - \varepsilon \abs{y}_e^2}
        \;
        \Ee^{- \Ii x \cdot y + \Ii p \cdot x + \Ii q \cdot Qy}
        &
        =
        \int \frac{\DInt[\sPone]x \; \DInt[\sPone]y}{(2\pi)^{\sPone}}
        \Ee^{-\varepsilon \abs{x}_e^2 - \varepsilon \abs{y}_e^2}
        \;
        \Ee^{- \Ii x \cdot y + \Ii p \cdot x - \Ii (Qq) \cdot y}
        \\&
        =: J_\varepsilon^d(p, Qq). \numberthis
  \end{align*}
  Let us express the Minkowski products in terms of ordinary scalar products
  involving the Minkowski metric \(g\). Then we can calculate
  component-wise by Fubini,
  \begin{align*}
         J_\varepsilon^d(p, p')
         &=
        \int \frac{\DInt[\sPone]x \; \DInt[\sPone]y}{(2\pi)^{\sPone}}
        \Ee^{-\varepsilon \abs{x}_e^2 - \varepsilon \abs{y}_e^2}
        \;
        \Ee^{- \Ii x^T g y +\Ii   p^T g x -\Ii  p'^T g y}
        = \prod_{\mu=0}^{d-1}
        J_\varepsilon^1(p^\mu, p'_\mu),
        \numberthis
      \end{align*}
      with
      \begin{align*}
         J_\varepsilon^1(p, p'):=
        \int \frac{\DInt x \; \DInt y}{2\pi}
        \Ee^{-\varepsilon x^2 - \varepsilon y^2}
        \Ee^{- \Ii (x y +  p  x - p' y)}.
        \numberthis
  \end{align*}
  Here we have substituted \(x = x_\mu' := g_{\mu\nu}x^\nu\).
  By elementary calculation one obtains
  \[
         J_\varepsilon^1(p, p') =
         \frac{1 }{\sqrt{1+4\varepsilon^2}}
         \Ee^{\frac{ -\Ii p p' - \varepsilon(p^2+
         p'^2)}{1+4\varepsilon^2}}
         \overset{\varepsilon \searrow 0} 
         \to \Ee^{-\Ii p p'}
         \numberthis
  \]
  from which the claim follows. \qed
\end{Prop}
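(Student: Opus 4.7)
\textbf{Proof plan for Proposition \ref{prop:gaussianint}.} The statement is a purely computational Gaussian identity, so the plan is to reduce it by factorization to a one-dimensional integral and then to evaluate that integral by completing the square. The proof would follow almost exactly the outline already sketched in the statement: first observe that since $Q$ is antisymmetric with respect to the Minkowski form one has $q\cdot Qy = -(Qq)\cdot y$, so after renaming $p' := Qq$ it suffices to prove
\[
\lim_{\varepsilon \to 0}\int \frac{\DInt[\sPone]x\,\DInt[\sPone]y}{(2\pi)^{\sPone}}
\Ee^{-\varepsilon|x|_e^2 - \varepsilon|y|_e^2}\,\Ee^{-\Ii x\cdot y + \Ii p\cdot x - \Ii p'\cdot y} \;=\; \Ee^{-\Ii p\cdot p'}
\]
with $\varepsilon := \epsilon^2$.

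Next, rewrite the Minkowski pairings via the metric: $x\cdot y = x^T g y$, $p\cdot x = p^T g x$, etc. Performing the component-wise substitution $x_\mu := g_{\mu\nu}x^\nu$ turns each Minkowski product into an ordinary sum $\sum_\mu x^\mu p_\mu$, while leaving the Euclidean quadratic forms in the Gaussian damping unchanged up to reindexing. The absolute value of the integrand is $\exp(-\varepsilon|x|_e^2 - \varepsilon|y|_e^2)$, which is integrable on $\RealNum^{2\sPone}$ for each $\varepsilon>0$, so Fubini applies and the integral factors as
\[
J_\varepsilon^d(p,p') \;=\; \prod_{\mu=0}^{d-1} J_\varepsilon^1(p^\mu,p'_\mu),\qquad
J_\varepsilon^1(a,b) := \int \frac{\DInt x\,\DInt y}{2\pi}\, \Ee^{-\varepsilon(x^2+y^2)}\Ee^{-\Ii(xy+ax-by)}.
\]

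The one-dimensional integral is then evaluated explicitly. Integrating over $y$ first with $x$ fixed uses the standard formula $\int \Ee^{-\varepsilon y^2 - \Ii y z}\DInt y = \sqrt{\pi/\varepsilon}\,\Ee^{-z^2/(4\varepsilon)}$ with $z=x-b$; the resulting integrand in $x$ is a Gaussian with coefficient $a_\varepsilon := \varepsilon + 1/(4\varepsilon) = (1+4\varepsilon^2)/(4\varepsilon)$ and a linear term $(b/(2\varepsilon) - \Ii a)x$. Completing the square in $x$ and collecting the prefactors gives
\[
J_\varepsilon^1(a,b) \;=\; \frac{1}{\sqrt{1+4\varepsilon^2}}\exp\!\left(\frac{-\Ii ab - \varepsilon(a^2+b^2)}{1+4\varepsilon^2}\right),
\]
which is precisely the expression quoted in the statement.

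Finally, taking $\varepsilon \searrow 0$ yields $J_\varepsilon^1(a,b)\to \Ee^{-\Ii ab}$ pointwise, and the factorization gives $J_\varepsilon^d(p,p')\to \prod_\mu \Ee^{-\Ii p^\mu p'_\mu} = \Ee^{-\Ii p\cdot p'}$, proving the claim. There is no real obstacle here: the only arithmetic care needed is a correct completion of the square and the cancellation of the $p'^2/(4\varepsilon)$ term coming from the $y$-integration against the $p'^2$ contribution in $b^2/(4a_\varepsilon)$, which is what produces a finite limit despite the $1/\varepsilon$ blow-up in intermediate expressions.
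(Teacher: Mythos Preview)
Your proposal is correct and follows exactly the same route as the paper: the same anti-symmetry rewriting, the same component-wise factorization via Fubini after passing to the metric form, and the same closed-form one-dimensional Gaussian identity. You simply spell out the ``elementary calculation'' of $J_\varepsilon^1$ (first the $y$-integral, then completion of the square in $x$) that the paper leaves implicit, and your computation of the exponent indeed reproduces the paper's formula.
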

\begin{proof}[Proof of \Cref{thm:defw}]
  We consider the outgoing case for a fixed wedge \(\Ww\). By definition  the linear combinations of ordered
  product states
  \[
    \Psi_n =
    \Psi_1^1 \otimes \ldots \otimes \Psi_1^n,
    \quad \Psi_1^1 \succ_\Ww \ldots \succ_\Ww\Psi_1^n, \quad (n\in \NaturalNum)
    \numberthis
  \]
  are dense in \(\FS^{\succ_\Ww}(\HilbertSpace_1)\). Further by
  Proposition~\ref{prop:densereg} such states
  can be approximated with arbitrarily small error by vectors
  \begin{align*}
    \tilde \Psi_n = 
    \tilde \Psi_1^1 \otimes \ldots \otimes \tilde \Psi^n_1 &= 
    B_{1\T}(f_1)\Omega \otimes \ldots \otimes B_{n\T}(f_n)\Omega 
    \\&= 
    B_{1Q\T}(f_1)\Omega \otimes \ldots \otimes B_{nQ\T}(f_n)\Omega,
    \quad \VS_{f_1} \succ_\Ww \ldots \succ_\Ww \VS_{f_n},
    \numberthis
    \label{eq:vecsmooth}
  \end{align*}
  generated by swappable Haag-Ruelle approximants \(B_{k\T}(f_k)\Omega =
    B_{k\T}^\perp(f_k)\Omega =\tilde \Psi_1^k \in \HilbertSpace_{1c}^\Ww \),
    which are constructed from
  regular \(A_k \in \Alg^{0r}(\Ww)\), \(A_k^\perp \in \Alg^{0r}(\Ww^\perp)\).
  Here the further restriction from swappable states to swappable states
  generated by regular operators is used to assure that the warped convolutions
  are well defined and we have just seen that the linear combinations of the
  vectors~\eqref{eq:vecsmooth} are also dense in
  \(\FS^{\succ_\Ww}(\HilbertSpace_1)\) even with this additional smoothness
    requirement.  By \Cref{thm:wc}~\itref{it:thmwcvac} it is clear that they
    yield the same one-particle vectors \(B_{kQ\T}(f_k)\Omega =
    B_{kQ\T}^\perp(f_k)\Omega = \tilde\Psi_1^k \) for all \(1\leq k \leq n\).
    For simplicity of notation we will be dropping the tilde for the remainder
    of the proof.

    By definition of the wave operator we now obtain
    \[
      \WO_{Q,\Ww}^+
      \Psi_n = \lim_{\T\to\infty}
    B_{1Q\T}(f_1)\ldots B_{nQ\T}(f_n)\Omega
    = \WO_{0,\Ww}^+ \Psi_1^1 \otimes_Q \ldots \otimes_Q \Psi_1^n,
  \numberthis
    \]
    where the last equality follows from \Cref{lem:deformconv}.
    By induction we obtain that
    \begin{align*}
      \Psi_1^1 \otimes_Q \ldots \otimes_Q \Psi_1^n
      &=
      \Psi_1^1 \otimes_Q (\Psi_1^2 \otimes_Q \ldots \otimes_Q \Psi_1^n)
      = \Psi_1^1 \otimes_Q S_{Q_\Ww}(\Psi_1^2 \otimes \ldots \otimes \Psi_1^n)
      \\ &= \Ee^{\Ii P_1 \cdot Q_\Ww (P_2+ \ldots + P_n)} \Psi_1^1 \otimes S_{Q_\Ww}(\Psi_1^2 \otimes \ldots \otimes \Psi_1^n)
      \\ &=  S_{Q_\Ww}( \Psi_1^1 \otimes  \Psi_1^2 \otimes \ldots \otimes
    \Psi_1^n).
  \numberthis
    \end{align*}
    Hence the claimed identity holds on a dense subspace
    and thus by continuity of the wave operators and \(S_Q\) on the full domain
    \(\FS^{\succ_\Ww}(\HilbertSpace_1)\)  of \(\WO_{Q,\Ww}^+\).
    The argument for \(\WO_{Q,\Ww}^-\) is analogous.
\end{proof}

\subsection{Scattering Data and Wedge-Transition Matrix Elements}
\label{sec:ScattDat}
The two-particle \(S\)-matrix of higher-dimensional deformed
or GL-type models was worked out in \cite{GL07,BS08} for any fixed wedge \(\Ww\).
Even if the initial and thus also the deformed model were Poincaré
covariant, these authors observed that the two-particle \(S\)-matrix is 
in fact not fully Lorentz covariant in higher dimensions \(d > 1+1\).
In \cite{MD17a} it was proposed to make this observation more precise by 
specifying the dependence of wave operators and all related asymptotic data
on the localization wedge~\(\Ww\) of the Haag-Ruelle operators explicitly. 
Due to translation covariance of the \(S\)-matrix this reduces to a dependence 
on a wedge modulo translations, or equivalently, a dependence on a centered
wedge.

\begin{Def}
  Let \(\Ww_\fin, \Ww_\ini\) be centered wedges.
  Following \cite{MD17a} we define \(S\)-matrices of the initial and deformed wQFT model,
  respectively, as maps between the incoming ordered Fock space
  \(\Gamma^{\prec_{\Ww_\ini}}(\HilbertSpace_1)\)
  and the outgoing space \(\Gamma^{\succ_{\Ww_\fin}}(\HilbertSpace_1)\) 
  by
 \begin{align*}
   S_{0, \fin\,\ini}^{\Ww_\fin \Ww_\ini} &:= (\WO_{0,\Ww_\fin}^+)^*
   \WO_{0,\Ww_\ini}^-\;,
                                          &
   S_{Q, \fin\,\ini}^{\Ww_\fin \Ww_\ini}& := (\WO_{Q,\Ww_\fin}^+)^*
   \WO_{Q,\Ww_\ini}^-\;.
   \numberthis
  \end{align*}
  Similarly we define wedge-transition maps between two final or two initial
  states, respectively, as
  \begin{align*}
    S_{0,\fin\,\fin}^{\Ww_2 \Ww_1} &:= (\WO_{0,\Ww_2}^+)^* \WO_{0,\Ww_1}^+\;,
                               &
    S_{Q,\fin\,\fin}^{\Ww_2 \Ww_1} &:= (\WO_{Q,\Ww_2}^+)^* \WO_{Q,\Ww_1}^+\;,
    \\
    S_{0,\ini\,\ini}^{\Ww_2 \Ww_1}&:= (\WO_{0,\Ww_2}^-)^* \WO_{0,\Ww_1}^-\;,
                              &
    S_{Q,\ini\,\ini}^{\Ww_2 \Ww_1}&:= (\WO_{Q,\Ww_2}^-)^* \WO_{Q,\Ww_1}^-\;,
    \label{eq:defsmatrix}\numberthis
  \end{align*}
  for any two centered wedges~\(\Ww_1, \Ww_2\). 
\end{Def}

As a direct consequence of \Cref{thm:defw} we obtain similar expressions
for the \(S\)-matrices and wedge-transition matrix elements in deformed
wedge-local models.
\begin{Cor}
  Let \( \Ww_\fin, \Ww_\ini, \Ww_1, \Ww_2\) be arbitrary centered wedges. The
  \(S\)-matrices and wave operators of a BLS-deformed wQFT model can be
  expressed in terms of the corresponding objects of the undeformed model:
  \label{cor:scattdef}
  \begin{enumerate}[(i)]
\item \label{it:sd}
\(
  S_{Q,\fin\,\ini}^{\Ww_\fin\Ww_\ini}
  = (S_{Q_{\Ww_\fin}}^{\succ_{\Ww_\fin}})^* S_{0,\fin\,\ini}^{\Ww_\fin\Ww_\ini}
  S_{Q_{\Ww_\ini}}^{\prec_{\Ww_\ini}}
\),
\item \label{it:wt} 
\(
  S_{Q,\fin\,\fin}^{\Ww_2\Ww_1}
  = (S_{Q_{\Ww_2}}^{\succ_{\Ww_2}})^*
  S_{0,\fin\,\fin}^{\Ww_2\Ww_1} S_{Q_{\Ww_1}}^{\succ_{\Ww_1}}
\), and
\(
  S_{Q,\ini\,\ini}^{\Ww_2\Ww_1}
  = (S_{Q_{\Ww_2}}^{\prec_{\Ww_2}})^*
  S_{0,\ini\,\ini}^{\Ww_2\Ww_1} S_{Q_{\Ww_1}}^{\prec_{\Ww_1}}
\).
\end{enumerate}
\proof 
We obtain the expression for the \(S\)-matrix \itref{it:sd} from the short
computation
\begin{align*}
  S_{Q,\fin\,\ini}^{\Ww_f\Ww_i}
  &= (\WO^{+}_{Q,\Ww_\fin})^* \WO^{-}_{Q,\Ww_\ini}
  = (\WO^{+}_{0,\Ww_\fin} S_Q^{\prec_{\Ww_\fin}})^* (\WO^{-}_{0,\Ww_\ini} S_Q^{\succ_{\Ww_\ini}})
  = (S_Q^{\prec_{\Ww_\fin}})^* (\WO^{+}_{0,\Ww_\fin})^* \WO_{0,\Ww_\ini}^{-} S_Q^{\succ_{\Ww_\ini}}
  \\&= (S_Q^{\prec_{\Ww_\fin}})^* S_{0,\fin\,\ini}^{\Ww_\fin\Ww_\ini} S_Q^{\succ_{\Ww_\ini}}.
  \numberthis
\end{align*}
The calculations for the wedge-transition formulas \itref{it:wt} are analogous. \qed
\end{Cor}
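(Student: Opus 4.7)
The plan is to derive both identities directly from Theorem~\ref{thm:defw}, since that theorem already expresses the deformed wave operators as compositions of the undeformed wave operators with the explicit deformation maps $S_{Q_\Ww}^{\succ_\Ww/\prec_\Ww}$. Essentially nothing beyond algebraic manipulation and adjoint-taking is needed once Theorem~\ref{thm:defw} is available.

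First I would write down, for each wedge appearing in the statement, the two identities
\[
  \WO^{+}_{Q,\Ww} = \WO^{+}_{0,\Ww} \, S_{Q_\Ww}^{\succ_\Ww}, \qquad
  \WO^{-}_{Q,\Ww} = \WO^{-}_{0,\Ww} \, S_{Q_\Ww}^{\prec_\Ww},
\]
from Theorem~\ref{thm:defw}. Then for part \itref{it:sd} I substitute these into the definition $S_{Q,\fin\,\ini}^{\Ww_\fin\Ww_\ini} = (\WO^+_{Q,\Ww_\fin})^* \WO^-_{Q,\Ww_\ini}$, take the adjoint of the first factor so that $(S_{Q_{\Ww_\fin}}^{\succ_{\Ww_\fin}})^*$ moves to the left, and read off exactly the claimed composition $(S_{Q_{\Ww_\fin}}^{\succ_{\Ww_\fin}})^* \, S_{0,\fin\,\ini}^{\Ww_\fin\Ww_\ini} \, S_{Q_{\Ww_\ini}}^{\prec_{\Ww_\ini}}$. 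Part \itref{it:wt} is handled by the same calculation, only with two outgoing (resp.\ two incoming) wave operators, so both wedge factors carry $\succ$ (resp.\ $\prec$).

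The one point that genuinely requires care, rather than just symbol-pushing, is the compatibility of domains: the undeformed and deformed wave operators act on the same ordered Fock spaces $\Gamma^{\succ_\Ww}(\HilbertSpace_1)$ and $\Gamma^{\prec_\Ww}(\HilbertSpace_1)$, but the composition $(\WO^+_{Q,\Ww_\fin})^* \WO^-_{Q,\Ww_\ini}$ chains these spaces via adjoints. I would verify that $S_{Q_\Ww}$ preserves the ordered subspaces (as already observed right after Definition of $\otimes_Q$, since $\otimes_Q$ respects the orderings), so that its restrictions $S_{Q_\Ww}^{\succ_\Ww}$ and $S_{Q_\Ww}^{\prec_\Ww}$ are bounded operators on the appropriate ordered Fock spaces and their adjoints are again bounded operators on the same spaces. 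With that in hand, every composition in the proof is well-defined and the associativity-of-adjoints step $(\WO^+_{0,\Ww_\fin} S_{Q_{\Ww_\fin}}^{\succ_{\Ww_\fin}})^* = (S_{Q_{\Ww_\fin}}^{\succ_{\Ww_\fin}})^* (\WO^+_{0,\Ww_\fin})^*$ is immediate.

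The main obstacle, to the extent there is one, is purely bookkeeping: keeping the outgoing/incoming markers $\succ_\Ww$ vs.\ $\prec_\Ww$ consistent under adjoint-taking (the adjoint of a map $\Gamma^{\succ_\Ww}\to\HilbertSpace$ lands in $\Gamma^{\succ_\Ww}$, not in $\Gamma^{\prec_\Ww}$), and keeping the wedge labels $\Ww_\fin,\Ww_\ini,\Ww_1,\Ww_2$ aligned with the $Q_\Ww$ that appears inside the corresponding $S_{Q_\Ww}$. There is no analytic content beyond Theorem~\ref{thm:defw}.
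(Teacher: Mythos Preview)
Your proposal is correct and follows exactly the same approach as the paper: substitute the identities from Theorem~\ref{thm:defw} into the definitions of the $S$-matrices and wedge-transition maps, take adjoints, and read off the result. Your added remarks on domain compatibility and the $\succ/\prec$ bookkeeping are, if anything, more careful than the paper's own short computation (which in fact contains a $\succ/\prec$ typo in the displayed line), but the underlying argument is identical.
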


\begin{Rem}
  \label{rem:suni}
  Expressions \itref{it:sd} and \itref{it:wt} here simplify further by
  noting that it follows immediately from the definition of \(S_{Q_\Ww}\) and its
  restrictions \(S_{Q_\Ww}^{\prec_\Ww/\succ_\Ww}\) that 
  \[
(S_{Q_{\Ww}}^{\succ_{\Ww}})^* =
S_{-Q_{\Ww}}^{\succ_{\Ww}} =
S_{Q_{\Ww'}}^{\succ_{\Ww}},
\numberthis
  \]
  where in the last equality the definition of \(Q_\Ww\) was used.
  This further implies 
  that these deformation maps are unitary, by writing
  \[
(S_{Q_{\Ww}}^{\succ_{\Ww}})^* S_{Q_{\Ww}}^{\succ_{\Ww}} 
= S_{-Q_{\Ww}}^{\succ_{\Ww}} S_{Q_{\Ww}}^{\succ_{\Ww}} 
= S_{-Q_{\Ww}+Q_{\Ww}}^{\succ_{\Ww}} = \Id,
\numberthis
  \]
  and analogously \(S_{Q_{\Ww}}^{\succ_{\Ww}} (S_{Q_{\Ww}}^{\succ_{\Ww}})^* =
  \Id.\) 
\end{Rem}

Let us note that we can further identify the scattering data for opposite wedges
by making use of the swapping property. For this consideration it is immaterial
whether we are in the deformed or undeformed model, so we will drop the
corresponding indices from the wave operators.
We start with an outgoing scattering state given by
\begin{align*}
  \Psi_n^+ 
  = \lim_{\T\to\infty} \BB_{1\T}(f_1) \ldots \BB_{n\T}(f_n)\Omega
  = \WO^+_{\Ww} (\BB_{1\T}(f_1) \Omega) \otimes \ldots \otimes
  (\BB_{n\T}(f_n)\Omega)
  \in \WO^+_{\Ww} \Gamma^{\succ_{\Ww}}(\HilbertSpace_1).
  \numberthis
\end{align*}
By an analogous argument as in \eqref{eq:normclucomm} and 
\eqref{eq:clucommdecay} we write
\begin{align*}
  \BB_{1\T}(f_1) \ldots \BB_{n\T}(f_n)\Omega &=
   \BB_{1\T}(f_1) \ldots \BB_{n-1\T}(f_{n-1}) \BB_{n\T}^\perp(f_n)\Omega
\\   &= \BB_{n\T}^\perp(f_n) \BB_{1\T}(f_1) \ldots \BB_{n-1\T}(f_{n-1}) \Omega
   + \text{(commutators)},
  \numberthis
\end{align*}
where the commutator terms are rapidly decreasing faster than any polynomial in
\(\T>0\).
Iterating this swapping argument we obtain
\begin{align*}
  \Psi_n^+ 
  = \lim_{\T\to\infty} \BB_{n\T}^\perp(f_n)\ldots  \BB_{1\T}^\perp(f_{1}) \Omega
  = \WO^+_{\Ww'} (\BB_{n\T}(f_n)\Omega) \otimes \ldots \otimes (\BB_{1\T}(f_1) \Omega)
  \in \WO^+_{\Ww'} \Gamma^{\succ_{\Ww'}}(\HilbertSpace_1).
  \numberthis
\end{align*}
In the last equality we already used that by swapping 
\(\BB_{k\T}^\perp(f_k)\Omega = \BB_{k\T}(f_k)\Omega\) and that
\begin{align*}
  \VS_{f_1} \succ_\Ww \VS_{f_2} \succ_\Ww \ldots
  \succ_\Ww \VS_{f_n} 
  &
  \Longleftrightarrow
  \VS_{f_n} \succ_{\Ww'} \VS_{f_{n-1}} \succ_{\Ww'} \ldots
  \succ_{\Ww'} \VS_{f_1} 
  \\ &
  \Longleftrightarrow
  \VS_{f_n} \prec_{\Ww} \VS_{f_{n-1}} \prec_{\Ww} \ldots
  \prec_{\Ww} \VS_{f_1} 
  \numberthis
\end{align*}
by the definition of the precursor relation and from the fact that \(\Ww' = - \Ww\) 
for any centered wedge \(\Ww\).
Let us therefore define \(Z : \FS^u(\HilbertSpace_1) \to
\FS^u(\HilbertSpace_1)\) by its action on \(n\)-particle states,
\[
  Z \Psi_1^1 \otimes \ldots \otimes  \Psi_1^n := 
  \Psi_1^n \otimes \ldots \otimes  \Psi_1^1, 
  \numberthis
\]
for any \(n \in \NaturalNum\) and \(\Psi_1^1, \ldots, \Psi_1^n \in
\HilbertSpace_1\) and we note that \(Z\) is a unitary and self-adjoint
involution. 
Summarizing these considerations we obtain:
\begin{Prop} For any centered wedge \(\Ww\) we have as subspaces of
  \(\Gamma^u(\HilbertSpace_1)\),
  \begin{enumerate}[(i)]
    \item 
      \( \FS^{\succ_{\Ww'}}(\HilbertSpace_1) =
      \FS^{\prec_{\Ww}}(\HilbertSpace_1)\),
    \item \(Z \FS^{\succ_\Ww}(\HilbertSpace_1)
        = \FS^{\succ_{\Ww'}}(\HilbertSpace_1) 
      \), and analogously
      \(
Z \FS^{\prec_\Ww}(\HilbertSpace_1)
        = \FS^{\succ_{\Ww}}(\HilbertSpace_1) 
      \).
  \end{enumerate}
\end{Prop}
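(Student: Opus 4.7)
The plan is to prove (i) by directly unpacking the two definitions of the velocity ordering, and then to derive (ii) as an immediate corollary, using that $Z$ by definition reverses the order of tensor factors.

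For (i), I would start with the two observations recalled in the paragraph preceding the proposition: since $\Ww$ is a centered wedge, $\Ww_{\C} = \Ww$ and $\Ww' = -\Ww$. Fix $\Psi_1, \Psi_1' \in \HilbertSpace_1$. By \eqref{def:veloOrdOPS} and \eqref{eq:defPrec}, the relation $\Psi_1 \prec_\Ww \Psi_1'$ is equivalent to $\VS_{\Psi_1'}^\Lambda - \VS_{\Psi_1}^\Lambda \subset \Ww$, while $\Psi_1 \succ_{\Ww'} \Psi_1'$ means $\Psi_1' \prec_{\Ww'} \Psi_1$, i.e.\ $\VS_{\Psi_1}^\Lambda - \VS_{\Psi_1'}^\Lambda \subset \Ww' = -\Ww$, which is the same condition. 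Thus $\prec_\Ww$ and $\succ_{\Ww'}$ coincide as partial orders on $\HilbertSpace_1$, so the algebraic generators of $\FS^{\prec_\Ww}_0(\HilbertSpace_1)$ and $\FS^{\succ_{\Ww'}}_0(\HilbertSpace_1)$ coincide, and therefore so do their norm closures. This gives (i).

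For (ii), I would argue as follows. $Z$ is defined on product states by order reversal. If $\Psi^1 \otimes \ldots \otimes \Psi^n$ is a generating state of $\FS^{\succ_\Ww}(\HilbertSpace_1)$, so that $\Psi^1 \succ_\Ww \ldots \succ_\Ww \Psi^n$, then $Z(\Psi^1 \otimes \ldots \otimes \Psi^n) = \Psi^n \otimes \ldots \otimes \Psi^1$ satisfies $\Psi^n \prec_\Ww \ldots \prec_\Ww \Psi^1$, and hence belongs to $\FS^{\prec_\Ww}(\HilbertSpace_1)$. By (i) this equals $\FS^{\succ_{\Ww'}}(\HilbertSpace_1)$, so by linearity and continuity of $Z$ we obtain $Z \FS^{\succ_\Ww}(\HilbertSpace_1) \subset \FS^{\succ_{\Ww'}}(\HilbertSpace_1)$. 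The reverse inclusion follows by applying the same reasoning with $\Ww$ replaced by $\Ww'$ (using $\Ww'' = \Ww$) and then composing with $Z^2 = \Id$. The second identity in (ii) follows in the same way, or alternatively by applying $Z$ to both sides of the first identity.

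I do not anticipate any serious obstacle: the entire argument is the unpacking of the definitions \eqref{def:veloOrdOPS} and \eqref{eq:defPrec}, combined with the elementary facts $\Ww_\C = \Ww$ and $\Ww' = -\Ww$ for centered wedges. The only point requiring mild care is the transition from the algebraic spans $\FS^{\succ_\Ww}_0$, $\FS^{\prec_\Ww}_0$ to their closures, but this is immediate since $Z$ extends to a bounded (indeed, unitary and self-adjoint) involution on $\Gamma^u(\HilbertSpace_1)$.
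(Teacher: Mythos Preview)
Your proposal is correct and follows essentially the same approach as the paper: the proposition is stated there as a summary of the preceding discussion, which establishes exactly the equivalence of the orderings $\succ_{\Ww'}$ and $\prec_\Ww$ via $\Ww' = -\Ww$ for centered wedges, and then obtains (ii) from the order-reversing action of $Z$. Your write-up is in fact more explicit than the paper's, which leaves these verifications to the reader.
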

\begin{Prop}
 \label{prop:wz}
 For any centered wedge \(\Ww\) the wave operators associated to complementary
 wedges can be identified by
 \begin{align*}
   \WO^+_\Ww &= \WO^+_{\Ww'} Z, &
   \WO^-_\Ww &= \WO^-_{\Ww'} Z. 
   \numberthis
 \end{align*}
\end{Prop}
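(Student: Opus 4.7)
The plan is to prove the identity on a dense subspace of ordered product states generated by swappable one-particle states with regular creation operators, and then extend by continuity of the wave operators on $\Gamma^{\succ_\Ww}(\HilbertSpace_1)$. The key observation, already hinted at in the text preceding the statement, is that whenever $\Psi_1^k = \BB_{k\T}(f_k)\Omega = \BB_{k\T}^\perp(f_k)\Omega$ for $1\leq k \leq n$ with $\VS_{f_1} \succ_\Ww \ldots \succ_\Ww \VS_{f_n}$, one can rewrite the scattering-state approximant with the roles of $\Ww$ and $\Ww'$ exchanged, up to a rapidly decaying remainder.

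Concretely, I would first invoke the swapping identity $\BB_{n\T}(f_n)\Omega = \BB_{n\T}^\perp(f_n)\Omega$ on the rightmost factor and then iteratively commute $\BB_{n\T}^\perp(f_n)$ past each $\BB_{k\T}(f_k)$, $k = n-1, n-2, \ldots, 1$, producing at each step a commutator that is controlled by \Cref{lem:CommEst}. This applies because for $k < n$ the pair $(\BB_{k\T}(f_k), \BB_{n\T}^\perp(f_n))$ satisfies precisely the opposite-wedge configuration required in that lemma: $\VS_{f_n} \prec_\Ww \VS_{f_k}$. Combined with the polynomial bound $\|\BB_{k\T}(f_k)\| \leq C(1+|\T|^{s/2})$ for the remaining operators, this yields rapid norm decay of all commutator remainders, exactly as in the proof of \Cref{lem:unifbound}. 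Applying the same procedure to $\BB_{n-1,\T}^\perp(f_{n-1})$, etc., leads to
\[
  \BB_{1\T}(f_1)\ldots \BB_{n\T}(f_n)\Omega
  = \BB_{n\T}^\perp(f_n)\ldots \BB_{1\T}^\perp(f_1)\Omega + R_n(\T),
\]
with $\|R_n(\T)\| = \Ord(\T^{-N})$ for every $N \in \NaturalNum$.

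Second, since $\Ww' = -\Ww$ for a centered wedge, the precursor relation satisfies $\VS_{f_1}\succ_\Ww \ldots \succ_\Ww \VS_{f_n} \Longleftrightarrow \VS_{f_n}\succ_{\Ww'} \ldots \succ_{\Ww'} \VS_{f_1}$. Hence the right-hand side above is itself a Haag-Ruelle scattering-state approximant associated to the wedge $\Ww'$ in the ordering required by \Cref{thm:hrwGen}\,\itref{it:hrwgconv}. Taking $\T \to \infty$ produces an outgoing scattering state whose arguments are the one-particle vectors $\Psi_1^n, \ldots, \Psi_1^1$, that is, the image under $\WO^+_{\Ww'}$ of the reversed tensor product $\Psi_1^n \otimes \ldots \otimes \Psi_1^1 = Z(\Psi_1^1 \otimes \ldots \otimes \Psi_1^n)$. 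Comparing with the defining formula of $\WO^+_\Ww$ on the same product vector establishes $\WO^+_\Ww = \WO^+_{\Ww'} Z$ on the dense subspace spanned by such ordered product states generated by regular swappable operators (cf.\ \Cref{prop:densereg}); continuity of both sides concludes the outgoing case. The incoming case is entirely analogous, using the reversed commutator estimate from \Cref{lem:CommEst} valid for $\T<0$ with the opposite velocity ordering.

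The only delicate point is the bookkeeping in the iterated swapping-and-commuting step: one must verify that each commutator term produced by moving a single $\BB^\perp_{k\T}(f_k)$ to its target position factorizes as a product of operators on either side of a single commutator $[\BB^\perp_{k\T}(f_k), \BB_{j\T}(f_j)]$, which can then be estimated by combining \Cref{lem:CommEst} with polynomial norm bounds on the side factors. This is the same bookkeeping already carried out in the proof of \Cref{lem:unifbound}, so no new technical ideas are required, and the extension by continuity is immediate from boundedness of $\WO^\pm_\Ww$, $\WO^\pm_{\Ww'}$, and $Z$.
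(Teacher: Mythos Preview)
Your proof is correct and follows essentially the same approach as the paper: the argument is given in the discussion immediately preceding the proposition, where the swapping identity is applied to the rightmost factor, the resulting $\BB_{n\T}^\perp(f_n)$ is commuted to the front via the commutator estimates (the paper references \eqref{eq:normclucomm} and \eqref{eq:clucommdecay}, which rest on \Cref{lem:CommEst}), and the procedure is iterated to obtain the reversed product of swapped operators. The only inessential difference is that you restrict to regular swappable operators via \Cref{prop:densereg}, whereas the paper works directly with swappable operators; regularity is not needed here since no warped convolutions enter.
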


This identification appears to take us somewhat away from the
localization properties appearing in the Haag-Ruelle construction.
Yet it shows that the scattering matrices for the two simplest choices
\(\Ww_\fin = \Ww_\ini\) and \(\Ww_\fin = \Ww_\ini'\) contain the same 
scattering theoretic data. These choices correspond to ones made in
\cite{GL07,BS08} for the analysis of two-particle scattering.  In these works
asymptotic two-particle states are constructed following more closely the
methods of Haag-Ruelle theory from local QFT. Hence a mixed localization is used,
where 
\[
  \Psi_2^+ = \lim_{\T \to \infty} \Bb_{1\T}^\perp(f_1) \Bb_{2\T}(f_2) \Omega
   = \lim_{\T \to \infty} \Bb_{2\T}(f_2) \Bb_{1\T}^\perp(f_1)  \Omega
  , \quad \VS_{f_2} \succ_\Ww
  \VS_{f_1}.
  \numberthis
\]
The independence of the operator order is reminiscent of Haag-Ruelle scattering
theory for bosonic local QFT. In the general wedge-local scattering theory it is
a special feature appearing for the case of two particles. It was already
remarked in \cite{MD17a} that this definition of two-particle scattering states
captures the same information as our analysis restricted to the level of
two-particle states.  This can be seen by swapping the corresponding sides,
\[
  \Psi_2^+ = \lim_{\T \to \infty} \Bb_{1\T}^\perp(f_1) \Bb_{2\T}^\perp(f_2) \Omega
   = \lim_{\T \to \infty} \Bb_{2\T}(f_2) \Bb_{1\T}(f_1)  \Omega.
   \numberthis
\]
Thereby we have not only illustrated the symmetry of \Cref{prop:wz} at the
two-particle level.
We also see that the compatibility of our analysis with earlier
calculations from \cite{GL07,BS08} is in fact a corollary of an intermediate step of the
proof of \Cref{prop:wz}.

\section{Asymptotic Completeness of BLS-Deformed wQFT} \label{sec:AC}

Given a wedge-local model \((\Alg, \alpha, \HilbertSpace, \Omega)\), we can now
proceed to our second main objective and study the completeness of asymptotic
states
\begin{align*}
  \HilbertSpace^\pm_{0,\Ww} := \WO^\pm_{0,\Ww} \FS^{\succ_\Ww/\prec_\Ww},\\
  \HilbertSpace^\pm_{Q,\Ww} := \WO^\pm_{Q,\Ww} \FS^{\succ_\Ww/\prec_\Ww}
  \label{eq:asymptsub}\numberthis
\end{align*}
in the common Hilbert space \(\HilbertSpace\) of the initial and BLS-deformed
wQFT model.
We will use a notion of asymptotic completeness for wQFT which directly
generalizes the corresponding standard definition from local QFT. 

\begin{Def} \label{def:ac}
  We say that a wave operator \(\WO^\pm_\Ww\) of a wQFT model for a centered localization wedge \(\Ww\)
  is \emph{asymptotically complete}\/ iff the subspace of
    velocity-ordered scattering states
    \begin{align*}
      \HilbertSpace^+_{\Ww} &:=
      \WO_\Ww^+ \FS^{\succ_\Ww}(\HilbertSpace_1), \; \text{or},
      &
      \HilbertSpace^-_{\Ww} &:=
      \WO_\Ww^- \FS^{\prec_\Ww}(\HilbertSpace_1),
      \label{eq:defac}\numberthis
    \end{align*}
    respectively, is dense in \(\HilbertSpace\).
    We say that a wQFT model 
    satisfies the property of \emph{ordered asymptotic
      completeness} (more precisely, \emph{asymptotic completeness with respect to
    velocity-ordered scattering states}) 
    iff both
    \(\WO^{+}_\Ww\) and \(\WO^-_\Ww\) are asymptotically complete for any wedge~\(\Ww\).
\end{Def}

The results of \Cref{sec:GL} give an explicit representation of the
deformed wave operators \(\WO^\pm_{Q,\Ww}\) 
in terms of the undeformed wave operators \(\WO^\pm_{0,\Ww}\).
This directly yields a general result regarding the stability of asymptotic completeness 
of wedge-local theories  under BLS-deformations.
\begin{Thm}\label{prop:ac}
  A wave operator of a deformed wQFT model~\(\WO_{Q,\Ww}^{\pm}\) is asymptotically complete if
  and only if the wave operator of the underlying ``undeformed'' model
  \(\WO_{0,\Ww}^{\pm}\) is asymptotically complete.
    \proof By
    \Cref{thm:defw}, we have
    \(\WO_{Q,\Ww}^{+} = \WO_{0,\Ww}^{+} S_{Q_\Ww}^{\succ_\Ww}\), and  
    \(S_{Q_\Ww}^{\succ_\Ww} \FS^{\succ_\Ww}(\HilbertSpace_1) = \FS^{\succ_\Ww}(\HilbertSpace_1)\) 
    by unitarity of \(S_{Q_\Ww}^{\succ_\Ww}\)  (see \Cref{rem:suni}). Hence
    \[
     \WO_{Q,\Ww}^{+} \FS^{\succ_\Ww}(\HilbertSpace_1)
     = \WO_{0,\Ww}^{+} S_{Q_\Ww}^{\succ_\Ww}\FS^{\succ_\Ww}(\HilbertSpace_1)
     = \WO_{0,\Ww}^{+} \FS^{\succ_\Ww}(\HilbertSpace_1).
     \numberthis
    \]
    Taking the closures yields the equivalence of ordered asymptotic
    completeness of deformed and initial model for outgoing states. The argument for the incoming case is
    analogous.
     \qed
  \end{Thm}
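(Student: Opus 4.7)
The statement is almost a formal corollary of Theorem~\ref{thm:defw} combined with Remark~\ref{rem:suni}, so the plan is quite short. I would begin by fixing a centered wedge \(\Ww\) and recording the identity
\[
  \WO_{Q,\Ww}^{+} = \WO_{0,\Ww}^{+} \, S_{Q_\Ww}^{\succ_\Ww}
\]
supplied by Theorem~\ref{thm:defw}, together with the analogous identity for the incoming wave operators. Unpacking Definition~\ref{def:ac}, the task is to compare the ranges of \(\WO_{Q,\Ww}^{+}\) and \(\WO_{0,\Ww}^{+}\) on the ordered Fock space \(\FS^{\succ_\Ww}(\HilbertSpace_1)\), and then take closures.

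The key point is that \(S_{Q_\Ww}^{\succ_\Ww}\) maps \(\FS^{\succ_\Ww}(\HilbertSpace_1)\) onto itself bijectively. This is contained in Remark~\ref{rem:suni}: \(S_{Q_\Ww}^{\succ_\Ww}\) is unitary with inverse \(S_{-Q_\Ww}^{\succ_\Ww}=(S_{Q_\Ww}^{\succ_\Ww})^{*}\), which in particular preserves the ordered subspace (one needs here that the velocity ordering is defined in terms of the spectral supports of the momentum operators, and \(S_{Q_\Ww}^{\succ_\Ww}\) acts by a phase in energy-momentum, hence does not alter these supports). From this surjectivity one concludes
\[
  \WO_{Q,\Ww}^{+} \FS^{\succ_\Ww}(\HilbertSpace_1)
  = \WO_{0,\Ww}^{+} \bigl(S_{Q_\Ww}^{\succ_\Ww} \FS^{\succ_\Ww}(\HilbertSpace_1)\bigr)
  = \WO_{0,\Ww}^{+} \FS^{\succ_\Ww}(\HilbertSpace_1).
\]
Taking closures in \(\HilbertSpace\) yields \(\HilbertSpace^+_{Q,\Ww}=\HilbertSpace^+_{0,\Ww}\), which is exactly the desired equivalence of asymptotic completeness for the outgoing case. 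The incoming case is identical after replacing \(\succ_\Ww\) by \(\prec_\Ww\) and \(+\) by \(-\).

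There is no real obstacle here, since the analytic content has already been expended in Theorem~\ref{thm:defw} and in establishing unitarity of \(S_{Q_\Ww}\) on the ordered Fock space. The only subtlety worth double-checking is that \(S_{Q_\Ww}^{\succ_\Ww}\) genuinely restricts to an isomorphism of \(\FS^{\succ_\Ww}(\HilbertSpace_1)\) (rather than merely mapping into a larger unordered space), which follows from the velocity ordering being defined by spectral supports of \(\vec{P}\) together with the fact that \(S_{Q_\Ww}\) multiplies each \(n\)-particle component by a unitary function of these mutually commuting momentum operators. Once this is noted, the proof is a one-line manipulation.
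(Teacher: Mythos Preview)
Your proposal is correct and follows essentially the same route as the paper: invoke \(\WO_{Q,\Ww}^{+} = \WO_{0,\Ww}^{+} S_{Q_\Ww}^{\succ_\Ww}\) from Theorem~\ref{thm:defw}, use the unitarity of \(S_{Q_\Ww}^{\succ_\Ww}\) on \(\FS^{\succ_\Ww}(\HilbertSpace_1)\) from Remark~\ref{rem:suni} to conclude equality of ranges, and take closures. Your additional remark that \(S_{Q_\Ww}\) acts as a momentum-space phase and hence preserves the ordering is a helpful clarification, but otherwise the argument is the same one-line manipulation as in the paper.
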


  In the following final section we will discuss the application of our results
  to GL-type models, which are constructed by applying BLS-deformations to a
  free field. For the scattering theoretic analysis of these models we have to
  work with only wedge-ordered states, as stated explicitly in \eqref{eq:defac}
  and dictated by the scope of the wedge-local scattering theory. 
  In GL-type models we will see this restriction to wedge-ordered states
  is inessential for the particle interpretation and still yields a dense set of
  scattering states. This may of course be expected on grounds of the
  bosonic statistics of the underlying free theory.
  Let us also note that there are wedge-local models, which do not satisfy
  ordered asymptotic completeness at the two-particle level. Examples of such
  models have been obtained by applying a von Neumann operator-algebraic free
  product construction to the free field \cite{LTU17}.

  \section{Application to Grosse-Lechner Models} \label{sec:AppGL}
  In the work of  Grosse and Lechner \cite{GL07} an interesting class of wedge-local models in
  any space-time dimension~\(d \geq 1+1\) is constructed in a wedge-local variant of the
  Wightman framework. A closely related class of models is obtained in the
  operator-algebraic framework by applying the BLS-deformation construction  to
  the standard scalar free field \cite{BS08,BLS10}. The scalar field has a
  canonical wedge-local description given by the von Neumann algebras
  \[
    \Alg^0(\Ww) := \{ W(f)  : f \in
      \SchwartzSpace(\RealNum^d,\RealNum),\;
      \support f \subset \Ww\}''
      \numberthis \label{eq:wedgealg}
  \] 
  generated by the Weyl operators~\(W(f) = \Ee^{\Ii \phi(f)}\), where
  \(\phi(f)\) is the standard free scalar Wightman field (see e.g. \cite{Dy17,Di11,IZ05}).
  Local algebras \(\Alg^0(\Reg)\) are defined analogously, requiring that
  \(\support f \subset \Reg\) for bounded open regions \(\Reg\) in Minkowski
  space-time.
  The algebras \(\Alg(\Ww)\) and \(\Alg(\Reg)\) act on the bosonic Fock space \(\HilbertSpace = \Gamma^b(\HilbertSpace_1)\)
  over the scalar one-particle space \(\HilbertSpace_1 =
  \LSpace^2(\RealNum^s)\), and we write \(\Omega_F \in \HilbertSpace\) for the Fock vacuum.
  The net is  covariant with respect to the standard second quantized scalar representation
  of the proper orthochronous Poincaré group.  
  For any non-vanishing warping matrix \(Q\in \RealNum^{d\times d}\) of the
  form \eqref{eq:warpedq} we will call the BLS-deformed theory~\((\Alg^Q, \alpha,
  \HilbertSpace, \Omega)\) a {\em Grosse-Lechner model}. Let us now apply our
  general results from Sections~\ref{sec:GL} and \ref{sec:AC} to evaluate the scattering theoretic content of these
  models and establish their ordered asymptotic completeness.

  It is a well-known fact that in the case of the free field also the
  conventional local Haag-Ruelle scattering theory applies (see e.g.\
    \cite{ArQFT99,Dy17}).  It yields wave operators \(\WO^\pm :
    \Gamma^b(\HilbertSpace_1) \to \HilbertSpace\), defined on the bosonic Fock
    space. For the free model we have
    \(\HilbertSpace=\Gamma^b(\HilbertSpace_1)\) and the wave operators are trivial in the sense that \(\WO^{+}
    = \WO^- = \Id\).\footnote{The argument for the two-dimensional case can be
    extracted from \cite{Le06t}~Lemma~6.1.1, applied for the (trivial) special case \(S=1\).} 
  This also directly implies asymptotic completeness
  of the free theory in the conventional sense with respect to the standard
  Haag-Ruelle scattering theory. That is, \(\WO^{\pm}
  \Gamma^b(\HilbertSpace_1)\) is dense in \(\HilbertSpace\) and in these cases we
  will say similarly that \(\WO^\pm\), respectively, are asymptotically
  complete.

  For the free field, the present velocity-ordered wedge-local formalism for scattering
  theory applies as well. The wedge-local wave operators \(\WO^\pm_{0,\Ww}\) can
  be efficiently determined from their local counterparts in cases for which the
  latter exist.  For this purpose we define  
  the  embeddings
  \[    
    I^{\succ_\Ww/\prec_\Ww} :
    \left\{
      \begin{aligned}\FS^{\succ_\Ww/\prec_\Ww}(\HilbertSpace_1) 
       &  \longrightarrow
    \FS^{b}(\HilbertSpace_1),
    \\
    \Psi_1^1 \otimes \ldots \otimes \Psi_1^n &\longmapsto
    a^*(\Psi_1^1) \ldots a^*(\Psi_1^n)\Omega_F = 
    \sqrt{n!}
    \cdot
    \Psi_1^1 \otimes_s \ldots \otimes_s \Psi_1^n ,
  \end{aligned}
  \right. 
  \label{eq:embed}
  \numberthis
  \]
  which map wedge-ordered \(n\)-particle vectors
   to the corresponding bosonic symmetrized tensor product.
 In terms of the norm on
   \(\FS^{b}(\HilbertSpace_1)\) we get 
   \[
     \norm{
       \sqrt{n!}
       \cdot
     \Psi_1^1 \otimes_s \ldots \otimes_s \Psi_1^n}^2
     =  \sum_{\pi \in \SymmetricGroup_n}
      \prod_{k=1}^n \langle \Psi_1^k, \Psi_1^{\pi(k)} \rangle
     = 
     \prod_{k=1}^n  \norm{\Psi_1^k}^2.
     \numberthis
   \]
  Here it is used that, as a consequence of ordered velocity supports, the one-particle
  states~\(\Psi_1^k \in \HilbertSpace_1\) (\(1\leq k \leq n\))
  are pairwise orthogonal. 
  Thus \( I^{\succ_\Ww/\prec_\Ww}\) are well defined  by linear continuous
  extension from \eqref{eq:embed} and yield isometries  on
  \(\FS^{\succ_\Ww/\prec_\Ww}(\HilbertSpace_1)\).   

\begin{Thm} \label{cor:acFree}
    \def\WedA{{\Ww_1}}
      \def\WedB{{\Ww_2}}
      The wedge-local wave operators of a local quantum field theory with
      isolated mass shell are well-defined and can be expressed using local
      Haag-Ruelle wave operators~\(\WO^\pm : \FS^b(\HilbertSpace_1) \to
      \HilbertSpace\) as
  \[
    \WO^{+}_\Ww = \WO^+ I^{\succ_\Ww}, \quad \WO^{-}_\Ww = \WO^- I^{\prec_\Ww}.
       \label{eq:waveLoc} \numberthis
    \]
      Further, the S-matrix and wedge-transition maps in the
    case of a  local quantum field theory are given by
    \begin{align*}
      S_{\fin\,\ini}^{\WedA\WedB} &= (I^{\succ_\WedA})^* S_{\fin\, \ini} I^{\prec_{\WedB}}, 
                                   \; \text{and},
                                   \numberthis\label{eq:S1loc}
\\
S_{\fin\,\fin}^{\WedA\WedB} &= (I^{\succ_\WedA})^* I^{\succ_{\WedB}}, \quad
                            &S_{\ini\,\ini}^{\WedA\WedB} &= (I^{\prec_\WedA})^* I^{\prec_{\WedB}},
      \label{eq:Sloc} \numberthis
    \end{align*}
    where \( S_{\fin\, \ini} = (\WO^+)^*\WO^-\) denotes the usual scattering
    matrix from local Haag-Ruelle theory.
    \proof It is sufficient to establish \eqref{eq:waveLoc}, from which the other
    statements follow. Here we note that for the wedge-transition matrix
    identities \eqref{eq:Sloc} also the Fock structure
    \((\WO^+)^*\WO^+ = \Id = (\WO^-)^*\WO^-\) of scattering states in the local
    Haag-Ruelle theory is used.

    We consider the case of the outgoing wave operator. We may restrict to
    one-particle vectors of the form \(\Psi_k = \Bb_{k\T}(f_k)\Omega\) defined
    in terms of local operators \(A_k
      \in \Alg(\Reg)\) and regular Klein-Gordon solutions \(f_k\) for \(1\leq k
    \leq n\). In a local
      QFT model such vectors yield a dense subset of the one-particle space
  by standard arguments.
      We further assume that the one-particle states (and similarly the wave
      packets) are velocity ordered \(\Psi_1 \succ_\Ww \Psi_2 \succ_\Ww \ldots \succ_\Ww \Psi_n\).
      Lastly, these states trivially satisfy the swapping property with
      \(A^\perp_k := A_k\) for some overlapping wedge-regions \(\Ww \supset
      \Reg\), \(\Ww^\perp \supset \Reg\).
      Thus, we have on one hand by definition of the wedge-local wave operators that
      \begin{align*}
       \lim_{\T\to\infty} \Bb_{1\T}(f_1) \ldots \Bb_{n\T}(f_n)\Omega
       &=
      \WO^{+}_\Ww \Psi_1 \otimes \Psi_2 \otimes \ldots \otimes \Psi_n.
     \label{eq:freemol}\numberthis
    \end{align*}
    On the other hand, we obtain from standard local Haag-Ruelle theory that
    \begin{align*}
       \lim_{\T\to\infty} \Bb_{1\T}(f_1) \ldots \Bb_{n\T}(f_n)\Omega
       &= \WO^{+} a^*(\Psi_1) \ldots a^*(\Psi_n) \Omega_F
       = \WO^{+} I^{\prec_{\Ww}}
\Psi_1 \otimes \Psi_2 \otimes \ldots \otimes \Psi_n.
\numberthis
    \end{align*}
    Equating we obtain \eqref{eq:waveLoc} on a total subset and thus by
    continuity on \(\FS^{\succ_\Ww}(\HilbertSpace_1)\).
    The argument for the incoming case is analogous.
   \qed
  \end{Thm}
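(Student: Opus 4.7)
The plan is to reduce everything to the two wave-operator factorizations $\WO^{\pm}_\Ww = \WO^{\pm} I^{\succ_\Ww/\prec_\Ww}$, from which all the $S$-matrix and wedge-transition identities \eqref{eq:S1loc}, \eqref{eq:Sloc} follow by direct algebraic manipulation using the Fock property $(\WO^{\pm})^* \WO^{\pm} = \Id$ of the local Haag-Ruelle theory. I will treat only the outgoing case, since the incoming one is entirely analogous.

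To prove the identity $\WO^{+}_\Ww = \WO^{+} I^{\succ_\Ww}$ I would first fix a convenient total set of ordered product states. Concretely, I consider one-particle vectors of the form $\Psi_k = \Bb_{k\T}(f_k)\Omega$, where $A_k \in \Alg^0(\Reg_k)$ is a local operator in a bounded region $\Reg_k$, $B_k = A_k(\chi)$ with the usual Haag-Ruelle auxiliary function $\chi$, and the regular Klein-Gordon solutions $f_k$ are chosen so that $\VS_{f_1} \succ_\Ww \ldots \succ_\Ww \VS_{f_n}$. Standard arguments in local QFT show that the corresponding product vectors are dense in $\FS^{\succ_\Ww}(\HilbertSpace_1)$. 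Crucially, since $A_k$ is local, I can pick a wedge $\Ww_k \supset \Reg_k$ together with an opposite wedge $\Ww_k^\perp \supset \Reg_k$ containing the same local region, so that $A_k^\perp := A_k$ verifies the swapping property needed to invoke the wedge-local Haag-Ruelle \Cref{thm:hrwGen}.

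The heart of the argument is then to evaluate the single limit $\lim_{\T\to\infty} \Bb_{1\T}(f_1)\cdots \Bb_{n\T}(f_n)\Omega$ in two ways. On one hand, \Cref{thm:hrwGen} applied in the wedge-local framework gives the value $\WO^{+}_\Ww(\Psi_1 \otimes \ldots \otimes \Psi_n)$. On the other hand, the very same Haag-Ruelle approximants, built from local operators, also fit into the standard local Haag-Ruelle scheme, whose convergence and Fock structure yield $\WO^{+} a^*(\Psi_1)\cdots a^*(\Psi_n)\Omega_F = \WO^{+} I^{\succ_\Ww}(\Psi_1 \otimes \ldots \otimes \Psi_n)$, by the very definition \eqref{eq:embed}. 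Equating the two expressions on this total set and extending by linearity and continuity completes the proof; the stated $S$-matrix and wedge-transition formulas then drop out by substituting the factorization into the definition of $S^{\Ww_\fin\Ww_\ini}_{\fin\,\ini} = (\WO^{+}_{\Ww_\fin})^*\WO^{-}_{\Ww_\ini}$ etc.

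The only real subtlety I anticipate is to justify that $I^{\succ_\Ww/\prec_\Ww}$ is a well-defined isometry, so that the continuity step is legitimate. This reduces to verifying that wedge-ordered one-particle states are pairwise orthogonal: in the computation $\norm{\sqrt{n!}\,\Psi_1^1 \otimes_s \ldots \otimes_s \Psi_1^n}^2 = \sum_{\pi \in \SymmetricGroup_n} \prod_k \langle \Psi_1^k, \Psi_1^{\pi(k)}\rangle$, every permutation other than the identity produces a scalar product between two one-particle states with strictly disjoint momentum supports (a consequence of $\prec_\Ww$ being a strict ordering of velocity supports in a pointed cone), which vanishes and leaves only $\prod_k \norm{\Psi_1^k}^2$. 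With the isometry property established, the identity extends from the total set of smooth wedge-ordered products to all of $\FS^{\succ_\Ww}(\HilbertSpace_1)$.
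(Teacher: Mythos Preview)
Your proof is correct and follows essentially the same route as the paper: restrict to ordered products of one-particle states built from local operators, evaluate the single Haag-Ruelle limit both via \Cref{thm:hrwGen} and via the standard local Haag-Ruelle construction, equate, and extend by continuity using the isometry of $I^{\succ_\Ww/\prec_\Ww}$. One small point to tighten: \Cref{thm:hrwGen} requires all $A_k$ to lie in a \emph{common} wedge $\Ww$ (the one fixed in $\WO^+_\Ww$), so you should choose each bounded region $\Reg_k \subset \Ww$ rather than letting the wedge $\Ww_k$ vary with $k$; only the opposite swapping wedge $\Ww^\perp_k = \Ww' + x_k$ may depend on $k$.
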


Using that the spectrum of the momentum operators is
  Lebesgue absolutely continuous on the one-particle space,
  \(I^{\succ_\Ww/\prec_\Ww}\) are in fact surjective. Such continuity 
  clearly holds in the free example \eqref{eq:wedgealg}. In a general context
  it is known to follow from locality and the spectrum condition
  \cite[Prop.\ 2.2]{BF82}, or, from Poincaré covariance \cite{Mai68}.
  \begin{Prop}
    For a local QFT model the following statements are equivalent:
    \begin{enumerate}[(i)]
      \item \(\WO^+\) is asymptotically complete,
      \item \(\WO^+_\Ww\) is asymptotically complete for one wedge \(\Ww\),
      \item \(\WO^+_\Ww\) are asymptotically complete for all wedges \(\Ww\),
    \end{enumerate}
    and analogously for the completeness of incoming wave operators.
    \proof The equivalence of (i) and (ii) for any wedge \(\Ww\) follows from
    \(I^{\succ_\Ww/\prec_\Ww}\) being a surjective isometry and
    \eqref{eq:waveLoc}. As \(\Ww\) was arbitrary, this implies (iii). \qed
  \end{Prop}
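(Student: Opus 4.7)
The plan is to reduce all three statements to a single wedge-independent density condition in $\HilbertSpace$, namely density of the range $\WO^+\FS^b(\HilbertSpace_1)$. The vehicle is the factorization $\WO^+_\Ww = \WO^+ I^{\succ_\Ww}$ from \Cref{cor:acFree}, so the real work is to verify that the embedding $I^{\succ_\Ww}$ is a surjective isometry for every centered wedge $\Ww$.

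First I would record that the isometry property of $I^{\succ_\Ww}$ was already observed after \eqref{eq:embed}: wedge-ordering of $\Psi_1^1,\ldots,\Psi_1^n$ forces the one-particle velocity supports to be pairwise disjoint, hence the states are mutually orthogonal, so symmetrization collapses to the diagonal permutation and the bosonic norm equals the product of the one-particle norms. The nontrivial step is surjectivity. Here I would import the input highlighted in the paragraph preceding the statement: in a local QFT with the spectrum condition, the joint energy-momentum spectrum on $\HilbertSpace_1$ is Lebesgue absolutely continuous by \cite[Prop.~2.2]{BF82} (alternatively via Maison's result \cite{Mai68} under Poincar\'e covariance). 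A momentum-space partition of unity then lets me decompose any elementary tensor $\Phi_1\otimes\cdots\otimes\Phi_n$ of one-particle states into a finite sum of tensors whose factors have velocity supports small enough to admit some total ordering with respect to $\succ_\Ww$, the contribution from overlap regions being controllable by absolute continuity. After relabelling factors within each summand to enforce the wedge-ordering and then symmetrizing, one reproduces the symmetrization of the original tensor. Since symmetric elementary tensors are total in $\FS^b(\HilbertSpace_1)$, continuous linear extension yields $I^{\succ_\Ww}\FS^{\succ_\Ww}(\HilbertSpace_1) = \FS^b(\HilbertSpace_1)$.

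With surjectivity in hand, \eqref{eq:waveLoc} gives for every centered wedge
\[
  \WO^+_\Ww\FS^{\succ_\Ww}(\HilbertSpace_1)
    = \WO^+\bigl(I^{\succ_\Ww}\FS^{\succ_\Ww}(\HilbertSpace_1)\bigr)
    = \WO^+\FS^b(\HilbertSpace_1),
\]
a wedge-independent subspace of $\HilbertSpace$ whose density is exactly statement (i). This establishes (i) $\Leftrightarrow$ (ii) for any fixed wedge and, because the wedge played no role, also (i) $\Leftrightarrow$ (iii); the implication (iii) $\Rightarrow$ (ii) is trivial. The incoming case is handled identically after swapping $\succ_\Ww$ with $\prec_\Ww$ and $\WO^+$ with $\WO^-$.

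The main obstacle I anticipate is precisely the surjectivity of $I^{\succ_\Ww}$. The isometry property and the range computation are essentially formal, but without absolute continuity of the one-particle momentum spectrum, symmetric vectors concentrated on lower-dimensional momentum subsets could in principle escape any wedge-ordered approximation. Quoting \cite{BF82} or \cite{Mai68} is the cleanest way to supply this regularity input without re-deriving it from the wQFT axioms themselves; once it is available, the rest of the argument is a short exercise in taking closures of ranges through the factorization \eqref{eq:waveLoc}.
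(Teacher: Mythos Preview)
Your proposal is correct and follows essentially the same approach as the paper: both use the factorization \eqref{eq:waveLoc} together with the fact that $I^{\succ_\Ww}$ is a surjective isometry (surjectivity being supplied by absolute continuity of the one-particle momentum spectrum via \cite{BF82} or \cite{Mai68}), and then observe that the resulting range is wedge-independent. You in fact give more detail on the surjectivity step than the paper, which merely asserts it in the paragraph preceding the proposition.
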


  \begin{Cor}
      The wedge-local wave operators of the free scalar field are given by
  \[
    \WO^{+}_{0,\Ww} = I^{\succ_\Ww}, \quad \WO^{-}_{0,\Ww} = I^{\prec_\Ww},
       \label{eq:waveFree} \numberthis
    \]
    In particular, the free scalar field satisfies the property of ordered
    asymptotic completeness.
  \end{Cor}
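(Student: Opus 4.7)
The proof is essentially an immediate composition of results already established in the excerpt, so the plan is very short.

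First I would invoke \Cref{cor:acFree} for the free scalar field, which is a local QFT with isolated mass shell (the hypotheses of that theorem). It gives $\WO^{+}_{0,\Ww} = \WO^{+} I^{\succ_\Ww}$ and $\WO^{-}_{0,\Ww} = \WO^{-} I^{\prec_\Ww}$, with $\WO^{\pm}$ the local Haag--Ruelle wave operators on $\FS^b(\HilbertSpace_1)$. As recalled in the paragraph preceding \Cref{cor:acFree}, standard Haag--Ruelle theory for the free scalar field yields $\WO^{+} = \WO^{-} = \Id$ on $\FS^b(\HilbertSpace_1) = \HilbertSpace$. Substituting these identities in \eqref{eq:waveLoc} directly gives \eqref{eq:waveFree}.

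For the asymptotic completeness statement, I would then argue as follows. The previous Proposition equates asymptotic completeness of $\WO^{\pm}$ and $\WO^{\pm}_\Ww$, using that $I^{\succ_\Ww/\prec_\Ww}$ are surjective isometries onto $\FS^b(\HilbertSpace_1)$ (which the paper indicates is a consequence of the Lebesgue absolute continuity of the one-particle momentum spectrum, valid in the free case). Since $\WO^{\pm} = \Id$ on $\FS^b(\HilbertSpace_1) = \HilbertSpace$ are trivially asymptotically complete for the free field, the Proposition then yields asymptotic completeness of $\WO^{\pm}_{0,\Ww}$ for every centered wedge $\Ww$, which by \Cref{def:ac} is exactly the property of ordered asymptotic completeness.

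There is no real obstacle here; the only subtle point worth double-checking is the surjectivity of the embeddings $I^{\succ_\Ww/\prec_\Ww}$. If one prefers a self-contained argument instead of invoking the earlier Proposition, one can verify directly that vectors of the form $a^*(\Psi_1^1)\cdots a^*(\Psi_1^n)\Omega_F$ with pairwise wedge-ordered velocity supports $\VS_{\Psi_1^1}\succ_\Ww\cdots\succ_\Ww \VS_{\Psi_1^n}$ are total in the symmetric Fock space $\FS^b(\HilbertSpace_1)$: by the absolute continuity of the momentum spectrum on $\HilbertSpace_1$, any one-particle wave function can be approximated by a sum of finitely many wave packets with disjoint, arbitrarily small velocity supports that can be reordered to be wedge-ordered, and the symmetrization then recovers an arbitrary symmetric tensor in $\FS^b$. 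This density promotes to surjectivity of the isometric extension $I^{\succ_\Ww}$ (and analogously $I^{\prec_\Ww}$), from which density of $\WO^{\pm}_{0,\Ww}\FS^{\succ_\Ww/\prec_\Ww}(\HilbertSpace_1) = I^{\succ_\Ww/\prec_\Ww}\FS^{\succ_\Ww/\prec_\Ww}(\HilbertSpace_1)$ in $\HilbertSpace$ follows at once.
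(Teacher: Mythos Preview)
Your proposal is correct and matches the paper's intended argument: the Corollary is stated without proof precisely because it follows immediately from \Cref{cor:acFree} together with $\WO^\pm=\Id$ for the free field, and ordered asymptotic completeness then follows from the preceding Proposition via surjectivity of $I^{\succ_\Ww/\prec_\Ww}$. Your optional self-contained density argument for surjectivity is a reasonable elaboration but not needed beyond what the paper already remarks.
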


  \begin{Thm}
    The Grosse-Lechner models are asymptotically complete with respect to
    velocity-ordered scattering states for any warping matrix \(Q\) (as defined in
    \eqref{eq:warpedq}).
    \proof We consider a GL-model for any fixed warping matrix \(Q\).
    \Cref{cor:acFree} shows that the wedge-local wave operators~\(\WO^{\pm}_{0,\Ww}\) 
    of the scalar free field are asymptotically
    complete for any wedge~\(\Ww\). 
    By \Cref{prop:ac} we obtain ordered asymptotic completeness of
    the wave operator~\(\WO^{\pm}_{Q,\Ww}\) for any \(\Ww\) and, thereby, ordered 
    asymptotic completeness of the considered GL-model.
    \qed
  \end{Thm}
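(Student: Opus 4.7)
The plan is to chain together two results already established in the excerpt. A Grosse-Lechner model is by definition the BLS-deformation of the standard free scalar field with a warping matrix of the form \eqref{eq:warpedq}. Since \Cref{prop:ac} reduces ordered asymptotic completeness of the deformed model to that of the underlying undeformed model, it suffices to verify that the free scalar field satisfies ordered asymptotic completeness in the sense of \Cref{def:ac} for every centered wedge \(\Ww\).

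For this, I would invoke the Corollary following \Cref{cor:acFree}, which identifies the wedge-local wave operators of the free scalar field as \(\WO^{+}_{0,\Ww} = I^{\succ_\Ww}\) and \(\WO^{-}_{0,\Ww} = I^{\prec_\Ww}\), and asserts directly that the free field has ordered asymptotic completeness. The underlying point here is that the free Hilbert space \(\HilbertSpace = \Gamma^b(\HilbertSpace_1)\) is the bosonic Fock space over the one-particle space, that standard Haag-Ruelle theory gives \(\WO^{\pm} = \Id\) on \(\Gamma^b(\HilbertSpace_1)\), and that the isometries \(I^{\succ_\Ww/\prec_\Ww}\) have dense range in \(\Gamma^b(\HilbertSpace_1)\) (as one sees from absolute continuity of the momentum spectrum on \(\HilbertSpace_1\), invoked right before the proposition preceding this theorem).

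Combining these two inputs gives, for the fixed warping matrix \(Q\) and any centered wedge \(\Ww\),
\[
  \WO^{\pm}_{Q,\Ww}\,\FS^{\succ_\Ww/\prec_\Ww}(\HilbertSpace_1)
  \;=\; \WO^{\pm}_{0,\Ww}\, S_{Q_\Ww}^{\succ_\Ww/\prec_\Ww}\,\FS^{\succ_\Ww/\prec_\Ww}(\HilbertSpace_1)
  \;=\; \WO^{\pm}_{0,\Ww}\,\FS^{\succ_\Ww/\prec_\Ww}(\HilbertSpace_1),
\]
using \Cref{thm:defw} and the unitarity of \(S_{Q_\Ww}^{\succ_\Ww/\prec_\Ww}\) on the ordered Fock space noted in \Cref{rem:suni}. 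Taking closures and applying the density of \(\WO^{\pm}_{0,\Ww}\,\FS^{\succ_\Ww/\prec_\Ww}(\HilbertSpace_1)\) in \(\HilbertSpace\) established in the free-field corollary yields ordered asymptotic completeness of the deformed model.

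The genuine technical work has already been carried out in \Cref{thm:defw} (the wave-operator identity, which rests on the oscillatory-integral lemmas and the uniform clustering bound \Cref{lem:unifbound}) and in the free-field analysis of \Cref{cor:acFree}. I therefore expect no further obstacle at this stage: the proof is a short bookkeeping argument that packages \Cref{prop:ac} with the corollary on free wedge-local wave operators, and the only subtlety to mention explicitly is the wedge-independence through the unitary deformation twist \(S_{Q_\Ww}\).
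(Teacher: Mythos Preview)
Your proposal is correct and follows essentially the same approach as the paper's own proof: invoke the free-field corollary (ordered asymptotic completeness of \(\WO^{\pm}_{0,\Ww}\)) and then apply \Cref{prop:ac} to transfer this to the deformed model. You have merely unpacked the content of \Cref{prop:ac} by writing out the wave-operator identity from \Cref{thm:defw} and the unitarity of \(S_{Q_\Ww}^{\succ_\Ww/\prec_\Ww}\) explicitly, which is fine but not a genuinely different route.
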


  To conclude these investigations, let us note the explicit \(n\)-particle
  scattering matrix for the case of the Grosse-Lechner models.
  We will focus on the two cases of equal or opposite initial and final wedges,
which correspond to the earlier investigations from \cite{GL07,BS08}.
The case \(\Ww_\fin = \Ww'_\ini\) can perhaps be regarded as slightly more
natural due to the coincidence of \(\FS^{\prec_\Ww}(\HilbertSpace_1)=
\FS^{\succ_{\Ww'}}(\HilbertSpace_1)\) as subspaces of
\(\FS^u(\HilbertSpace_1)\), which is a consequence of the equivalence of  the
ordering relations \(\succ_{\Ww'}\) and \(\prec_{\Ww}\). 
  \begin{Prop} Let \(\Ww\) be a fixed initial wedge. Then for the case of \(\Ww_\fin :=
      \Ww'\) we obtain the Grosse-Lechner \(S\)-matrix as unitary operator on
      \(\FS^{\prec_\Ww}(\HilbertSpace_1)\) given by
  \begin{align*}
    S_{Q,\fin\, \ini}^{\Ww'\Ww} =  
    (S_{Q_{\Ww'}}^{\succ_{\Ww'}})^* S_{Q_\Ww}^{\prec_\Ww} 
    = (S_{Q_\Ww}^{\prec_{\Ww}})^2 = S_{2Q_\Ww}^{\prec_{\Ww}}.
    \label{eq:calcs}\numberthis
  \end{align*}
  Further we have for any ordered \(n\)-particle state \(\Psi_n = \Psi_1^1 \otimes \ldots
  \otimes \Psi_1^n \in \FS^{\prec_\Ww}(\HilbertSpace_1)\) that
\[
S^{\prec_\Ww}_{2Q_\Ww} \Psi_n = \prod_{1\leq i < j \leq n} \Ee^{2\Ii P_i\cdot Q_\Ww P_j}
\Psi_n.
\numberthis
\]
  In particular, the Grosse-Lechner \(S\)-matrix is non-trivial and factorizing.
\end{Prop}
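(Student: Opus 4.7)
The plan is to apply \Cref{cor:scattdef}\itref{it:sd} to reduce the deformed $S$-matrix to the free-field $S$-matrix plus two deformation factors, then to show that the free-field wedge-transition matrix $S_{0,\fin\,\ini}^{\Ww'\Ww}$ acts as the identity on $\FS^{\prec_\Ww}(\HilbertSpace_1)$ using the explicit description of the free-field wave operators, and finally to collapse the two remaining $S_{Q_\bullet}$-factors into a single $S_{2Q_\Ww}^{\prec_\Ww}$ via \Cref{rem:suni} and the additive composition law coming from the explicit formula of \Cref{thm:defw}.

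Applying \Cref{cor:scattdef}\itref{it:sd} with $\Ww_\fin = \Ww'$ and $\Ww_\ini = \Ww$ yields
\[
S_{Q,\fin\,\ini}^{\Ww'\Ww} = (S_{Q_{\Ww'}}^{\succ_{\Ww'}})^*\, S_{0,\fin\,\ini}^{\Ww'\Ww}\, S_{Q_\Ww}^{\prec_\Ww}.
\]
Substituting the free-field expressions $\WO^+_{0,\Ww'} = I^{\succ_{\Ww'}}$ and $\WO^-_{0,\Ww} = I^{\prec_\Ww}$ from the corollary preceding the proposition gives $S_{0,\fin\,\ini}^{\Ww'\Ww} = (I^{\succ_{\Ww'}})^*\, I^{\prec_\Ww}$. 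The key observation is that for a centered wedge one has $\Ww' = -\Ww$, and hence the precursor relation immediately satisfies $\succ_{\Ww'}\;\equiv\;\prec_\Ww$ as partial orders on $\HilbertSpace_1$. Consequently the subspaces $\FS^{\succ_{\Ww'}}(\HilbertSpace_1)$ and $\FS^{\prec_\Ww}(\HilbertSpace_1)$ coincide inside $\FS^u(\HilbertSpace_1)$, and both embeddings send an ordered product to the same symmetrized vector $\sqrt{n!}\,\Psi_1^1 \otimes_s \cdots \otimes_s \Psi_1^n$. Hence $I^{\succ_{\Ww'}} = I^{\prec_\Ww}$ on this common domain, and since each embedding is an isometry, $(I^{\succ_{\Ww'}})^*\, I^{\prec_\Ww} = \Id$ on $\FS^{\prec_\Ww}(\HilbertSpace_1)$.

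To simplify the deformation prefactor, I invoke \Cref{rem:suni}: $(S_{Q_{\Ww'}}^{\succ_{\Ww'}})^* = S_{-Q_{\Ww'}}^{\succ_{\Ww'}} = S_{Q_\Ww}^{\succ_{\Ww'}}$, where the second equality is the antisymmetric assignment $Q_{\Ww'} = -Q_\Ww$ intrinsic to the warping map. The ordering identification $\succ_{\Ww'} \equiv \prec_\Ww$ then gives $S_{Q_\Ww}^{\succ_{\Ww'}} = S_{Q_\Ww}^{\prec_\Ww}$. Combining these, $S_{Q,\fin\,\ini}^{\Ww'\Ww} = S_{Q_\Ww}^{\prec_\Ww}\, S_{Q_\Ww}^{\prec_\Ww} = (S_{Q_\Ww}^{\prec_\Ww})^2$. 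The final step $(S_{Q_\Ww}^{\prec_\Ww})^2 = S_{2Q_\Ww}^{\prec_\Ww}$ follows from the explicit formula $S_Q\,\Psi_1^1 \otimes \cdots \otimes \Psi_1^n = \prod_{i<j}\Ee^{\Ii P_i\cdot Q P_j}\,\Psi_1^1 \otimes \cdots \otimes \Psi_1^n$ of \Cref{thm:defw}: all exponents commute because the $P_i\cdot Q P_j$ act as joint functions of momentum operators on distinct tensor slots, so the phases add and $S_Q\, S_{Q'} = S_{Q+Q'}$ on $\FS^u(\HilbertSpace_1)$.

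The displayed formula $S_{2Q_\Ww}^{\prec_\Ww}\Psi_n = \prod_{1\le i<j\le n}\Ee^{2\Ii P_i\cdot Q_\Ww P_j}\,\Psi_n$ is then merely the definition of $S_{2Q_\Ww}$ restricted to $\FS^{\prec_\Ww}(\HilbertSpace_1)$, and both non-triviality (for $Q \neq 0$) and factorization into two-particle phases are manifest from the product form. I do not foresee any substantial obstacle; the only mildly subtle point is the identification of precursor orderings under wedge reflection $\Ww \mapsto \Ww' = -\Ww$, which is, however, a direct consequence of the definition of $\prec_\Ww$.
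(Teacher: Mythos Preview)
Your proof is correct and follows essentially the same route as the paper: apply \Cref{cor:scattdef}\itref{it:sd}, use the free-field identification $\WO^\pm_{0,\Ww} = I^{\succ_\Ww/\prec_\Ww}$ together with the equivalence $\succ_{\Ww'}\equiv\prec_\Ww$ to reduce the undeformed middle factor to the identity, and then collapse the two deformation factors via \Cref{rem:suni} and $Q_{\Ww'}=-Q_\Ww$. The paper states the argument more tersely, but there is no substantive difference; you have simply spelled out why $(I^{\succ_{\Ww'}})^* I^{\prec_\Ww}=\Id$ and why $S_{Q_\Ww}S_{Q_\Ww}=S_{2Q_\Ww}$ in more detail than the paper does.
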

This follows directly from \Cref{cor:scattdef} and \Cref{cor:acFree}, together
with the triviality of the scattering matrix of the free field.  Further we also used
that \(\succ_{\Ww'}\) and \(\prec_{\Ww}\) are equivalent, giving \( (S_{Q_{\Ww'}}^{\succ_{\Ww'}})^*  =  S_{-Q_{\Ww'}}^{\prec_{\Ww}}
  =  S_{Q_{\Ww}}^{\prec_{\Ww}} \) by \Cref{rem:suni} and the definition of
  \(Q_\Ww\).
For the case of equal initial and final
wedges we obtain similarly
\begin{align*}
    S_{Q,\fin\, \ini}^{\Ww\Ww} =  
    (S_{Q_{\Ww}}^{\succ_{\Ww}})^* (I^{\succ_\Ww})^* I^{\prec_\Ww} S_{Q_\Ww}^{\prec_\Ww} 
    = S_{-Q_{\Ww}}^{\succ_{\Ww}} Z S_{Q_\Ww}^{\prec_\Ww} 
    = Z (S_{Q_\Ww}^{\prec_{\Ww}})^2 = Z S_{2Q_\Ww}^{\prec_{\Ww}}.
    \label{eq:calcsww}\numberthis
  \end{align*}
  We note that the same result is obtained when using \Cref{prop:wz} and
  \eqref{eq:calcs}.

\section{Conclusions}
In this paper we showed stability of ordered asymptotic completeness under BLS-deformations in wedge-local QFT.
We concluded that the Grosse-Lechner model is interacting and asymptotically complete in any spacetime dimension.
We also showed that this model has a factorizing $S$-matrix, which is an unusual feature in higher dimensions.

Although we focussed on models obtained by BLS deformations, our approach should also apply
to wedge-local factorizing models  in two dimensions. Even in the cases in which strict locality is still open, such as the 
non-linear sigma models \cite{AL17}, our strategy may give asymptotic completeness and factorization of the $S$-matrix.
We leave  detailed analysis of these problems to future investigations.

Another natural direction  is a generalization of our results  to theories of massless particles.
The BLS-deformations remain valid for such theories, but so far neither interaction nor asymptotic
completeness have been studied for $d>1+1$. Such an investigation would require a massless
variant of wedge-local scattering theory from \cite{MD17a}. It may be difficult to develop such a theory 
at the same level of generality as its local counterpart \cite{Bu77}, since the decay of correlations in massless
wedge-local models may be very slow.  Also the energy bounds \cite{Bu90}, which
simplify more recent constructions of massless scattering states~\cite{DH15,AD15},
are not available for wedge-local theories. But under some natural assumptions on the decay of correlations  
massless wedge-local scattering theory appears to be within reach.  It should apply, in particular, to the massless Grosse-Lechner model.

We mention as an aside, that such a scattering theory could also help to understand recent
computations of  infraparticle scattering amplitudes in four-dimensional
string-local models \cite{GRT21}. 
An apparent breakdown of unitarity of scattering amplitudes was found in this reference after
adapting a formula from a two-dimensional context \cite{DM21}.
This problem may have its roots in our limited understanding
of collisions of massless Wigner particles  in string- and wedge-local theories.

\printbibliography
\end{document}